\def\d{\mathrm{d}}
\newcommand{\ES}{\mathrm{ES}}
\newcommand{\LES}{\mathrm{LES}}
\newcommand{\E}{\mathbb{E}}
\newcommand{\R}{\mathbb{R}}
\newcommand{\N}{\mathbb{N}}
\newcommand{\p}{\mathbb{P}}
\newcommand{\M}{\mathcal{M}}
\newcommand{\id}{\mathds{1}}
\renewcommand{\ge}{\geqslant}
\renewcommand{\le}{\leqslant}
\renewcommand{\geq}{\geqslant}
\renewcommand{\leq}{\leqslant}
\renewcommand{\epsilon}{\varepsilon}
\renewcommand{\cdots}{\dots}
\theoremstyle{plain}
\newtheorem{theorem}{Theorem}
\newtheorem{lemma}{Lemma}
\newtheorem{proposition}{Proposition}
\theoremstyle{definition}
\newtheorem{definition}{Definition}
\newtheorem{example}{Example}
\theoremstyle{remark}
\newtheorem{remark}{Remark}
\newcommand{\cet}{\begin{center}}
	\newcommand{\ecet}{\end{center}}
\begin{document}

\title{Convolution Bounds on Quantile Aggregation}

\author{
	Jose Blanchet\thanks{
		Department of Management Science and Engineering,
		Stanford University, USA.
		Email: \texttt{jose.blanchet@stanford.edu}}
	\and
	Henry Lam\thanks{  Department of Industrial Engineering and Operations Research, Columbia University, USA.
		Email: \texttt{khl2114@columbia.edu}}
	\and
	Yang Liu\thanks{School of Science and Engineering, The Chinese University of Hong Kong, Shenzhen, China. Email: \texttt{yangliu16@cuhk.edu.cn}}
	\and Ruodu Wang\thanks{  Department of Statistics and Actuarial Science, University of Waterloo, Canada. Email: \texttt{wang@uwaterloo.ca}}}
\date{}
\maketitle

\begin{abstract}
Quantile aggregation with dependence uncertainty has a long history in probability theory with wide applications in finance, risk management, statistics, and operations research. Using a recent result on inf-convolution of quantile-based risk measures, we establish new analytical bounds for quantile aggregation which we call convolution bounds. Convolution bounds both unify every analytical result available in quantile aggregation and enlighten our understanding of these methods. These bounds are the best available in general. Moreover, convolution bounds are easy to compute, and we show that they are sharp in many relevant cases. They also allow for interpretability on the extremal dependence structure. The results directly lead to bounds on the distribution of the sum of random variables with arbitrary dependence. We discuss relevant applications in risk management and economics.

	\textbf{Keywords}: Range-Value-at-Risk, convolution, model uncertainty, dependence structure, duality 
\end{abstract}

\section{Introduction}\label{sec:0}

The problem of quantile aggregation with dependence uncertainty refers to finding possible values of quantiles of an aggregate variable $S=X_1+\dots+X_n$ (often representing a total risk, but it can also represent the completion time of a task). The random variables $X_1,\dots,X_n$ have given marginal distributions, but unspecified dependence structure. More precisely, given marginal distributions $\mu_1,\dots,\mu_n$ on $\R$, the following quantities are of interest:
\begin{equation}\label{eq:intrinsic_prob}
	\sup\{ q_t(X_1+\dots+X_n): X_i\sim \mu_i,~i=1,\dots,n\}
\end{equation}
and
\begin{equation}\label{eq:intrinsic_prob2} \inf\{ q_t(X_1+\dots+X_n): X_i\sim \mu_i,~i=1,\dots,n\},\end{equation}
where $q_t(X)$ stands for a (left or right) quantile of a random variable $X$ at probability level $t\in [0,1]$. The optimization problems  \eqref{eq:intrinsic_prob} and \eqref{eq:intrinsic_prob2} are, respectively, referred to as the worst-case and the best-case quantile aggregation.
An equivalent problem is to find the maximum and the minimum values of $\p(S\le x)$ for a given $x\in \R$.
This problem has a long history in probability theory; see \cite{M81} and \cite{R82} for early results. It has also been studied in combinatorial optimization  with applications in  statistical testing (e.g., \cite{VW20, VWW22}) and risk management  (e.g., \cite{EPR13, EWW15}); see Section \ref{sec:ex} and Appendix \ref{app:applications} for several applications. A key feature of quantile aggregation is the ``arbitrary dependence" structure imposed. Naturally, this level of generality leads to robust estimates, although these can be conservative in some situations.


Because of the level of generality imposed both in marginal distributions and dependence, the quantile aggregation problems \eqref{eq:intrinsic_prob} and \eqref{eq:intrinsic_prob2} rarely have analytical tractability. In the literature, some analytical bounds for the homogeneous setting (i.e., identical marginal distributions) are obtained by \cite{EP06}, \cite{WPY13} and \cite{PR13}, and approximating algorithms are available such as the rearrangement algorithm (RA) in \cite{PR12} and \cite{EPR13}. The sharpness of these bounds is rarely obtained with the exception of \cite{WPY13} and \cite{PR13} under some strong conditions. The RA only gives a lower bound on the quantile aggregation, and its convergence is not guaranteed. 
As a variant of optimal transport problem, discrete versions of problems \eqref{eq:intrinsic_prob} and \eqref{eq:intrinsic_prob2} admit  a linear programming reformulation, which involves exponentially many variables, and is computationally difficult for moderate dimensions (e.g., $n\ge 6$); details are explained in Appendix \ref{sec:R2-1}. 
So, it is sensible to discuss bounds that can be shown to be sharp in continuous relaxations as the ones that we consider here.

In this paper, we propose a class of bounds on Range-Value-at-Risk (RVaR) based on the inf-convolution formulas introduced by \cite{ELW18}. We will call them \emph{convolution bounds}. Since RVaR includes the two regulatory risk measures, Value-at-Risk (VaR) and the Expected Shortfall (ES, also known as CVaR), as special cases, the  results on RVaR give rise to useful bounds on quantile aggregation problems \eqref{eq:intrinsic_prob} and \eqref{eq:intrinsic_prob2}.

As our main contributions, convolution bounds can provide by far the most convenient and sharpest theoretical results on quantile aggregation in a wide range of practical settings, and they can be applied to any marginal distributions, discrete, continuous, or mixed. As such, convolution bounds enjoy multifaceted advantages. They can be applied to both the quantile and RVaR aggregations (Theorems \ref{th:qa-4prime} and \ref{th:qa-4primeprime}); they combine different existing sharpness results of quantile aggregation and some new cases into a unified form (Theorems \ref{th:qa-4} and \ref{th:qa-2});  they lead to tractable extremal dependence structures for interpretation or approximation (Theorem \ref{th:qa-5}), and they are computationally convenient and efficient. 
To the best of our knowledge, there is no other theoretical result on quantile aggregation which cannot be covered by our convolution bounds. Moreover, our results provide novel bounds on RVaR  aggregation and establish sharpness for the dual bound (Theorem 4.17 of \cite{R13}). Although the sharpness of convolution bounds requires some conditions, their numerical performance suggests that they are generally very accurate even in cases where sharpness cannot be theoretically proved.  As we mentioned above, our results on quantile aggregation can be directly applied to compute bounds on the distribution of the sum of random variables with arbitrary dependence. 
{Our technical development builds on some results on risk sharing in \cite{ELW18}. Our target problem and theoretical contributions are very different from those on risk sharing, which aim  to optimally allocate a fixed total risk (random variable $X$) to different agents (several random variables that sum to $X$). Our objective, on the contrary, aims to solve the max/min values of the quantile of the sum random variable given known marginals (the total risk is not fixed). This problem is called the problem of robust risk aggregation in the literature; see Section \ref{sec:qrm}.} 


We can relate and contrast our investigation to the recent fast-growing literature on distributionally robust optimization (DRO) (e.g.,~\cite{GS10,DY10,WKS14})  and chance constrained optimization (e.g.,~\cite{NS07}, \cite{CSST10} and Chapter 4 of \cite{SDR21}). Viewed as a distributional analog to (deterministic) robust optimization (\cite{BBC11,BEN09}), this literature tackles decision-making where the underlying parameter in a stochastic problem is uncertain.
This leads to the optimization of decision under the worst-case scenario, where the worst case is over a region in which the uncertain parameter is believed to lie in,  often known as the uncertainty set or ambiguity set. 
In DRO, the uncertain parameter, and hence the decision variable in the inner maximization, is the underlying probability distribution. Common constraints to characterize the belief on uncertain distributions include neighborhood balls formed by statistical distances such as the Wasserstein distance (\cite{EK18,GK16,BM19}) and $\phi$-divergence (\cite{BGK18,BDDMR13}), moments and supports (\cite{BP05}), geometric shape (\cite{P05}), and marginal information (\cite{DN12,MNPTL14}).\footnote{There is a large literature on DRO problems with various formulations, in addition to the few papers mentioned. We refer to \cite{BGYZ19,BKM19} and the references therein for recent developments on DRO with Wasserstein distance,
and to \cite{HH13, GX14,JG16,L16, BHM20} for DRO problems with $\phi$-divergence. DRO formulated by moments and supports are also studied by, e.g., \cite{DY10,GS10,GL19}. See also \cite{VGK16,LM17,LJM19} for various settings of DRO with geometric shape.} The RVaR and quantile aggregation considered in this paper can be regarded as an optimization over distributions having a marginal information constraint (i.e., the latest class listed above).  When placed as a constraint, quantile or percentile criterion can be converted into a chance or probabilistic constraint (e.g., \cite{DM10}). The worst-case VaR under various settings of model uncertainty  is also popular in robust portfolio optimization; see e.g., \cite{EOO03}, \cite{ZF09} and \cite{ZKR13}.
However, in contrast to the DRO literature which often focuses on solution methods via convex reformulations, here our problem is knowingly computationally intractable, and our goal is to obtain tractable analytical bounds that are provably tight in important cases.

The rest of the paper is organized as follows.  Section \ref{sec:ex} presents 
two motivating examples of robust risk management 
and  the O-ring model in economics (\cite{K93}),  and 
Section \ref{sec:1} contains   technical preliminaries.  
 The (upper) convolution bounds on the quantile and RVaR aggregations are established in Sections \ref{sec:rvar}-\ref{sec:2}.  A general extremal dependence structure and some explicit approximations are presented in Section \ref{sec:new6}.  The dual formulation of the quantile aggregation problems is studied in Section \ref{sec:7} (Theorem \ref{th:qa-6}). The numerical advantages of the new bounds are carefully examined in Section \ref{sec:8}. 
 The two  motivating examples are revisited in  Section \ref{sec:OR}, where we apply our main results and discuss their implications.  Section \ref{sec:9} concludes the paper. To better illustrate our main ideas, the lower convolution bounds  and related discussions are postponed to  Appendix \ref{sec:lower} (in particular, Theorems \ref{th:qa-4primeprime} and \ref{th:qa-2}). Appendices \ref{app:proof}-\ref{app:applications} include all proofs, the counter-examples, technical discussions and other operations research applications.

\section{Motivating examples}\label{sec:ex}

In this section, we list two examples where the quantile aggregation problems  \eqref{eq:intrinsic_prob} and \eqref{eq:intrinsic_prob2} become natural in various  contexts relevant to modern operations research. We will revisit these examples with our theoretical results and numerical illustrations in Section \ref{sec:OR}. 

\subsection{Robust risk management}
\label{sec:qrm}  
	The worst-case value  of a risk measure $\rho$ evaluating an aggregate risk is extensively studied in the  risk management literature, known as the problem of robust risk aggregation. 
	 It is motivated by the context in which data from different correlated products are separately collected and hence their dependence information is not available; see \cite{EPR13, EWW15} and the references therein.  As  a  specific example,\footnote{We thank an anonymous referee for providing the context in this example.} the European Union has established a solidarity fund since 2002 to help member states in case of some catastrophic events. While the loss curves are well estimated in each country, the fund has to pay for the sum of all losses. An independence assumption cannot be justified for climate-related events, which may affect several countries. Since an estimate of the copula is not available, the worst-case analysis in this section provides some bounds on the total losses. 
	 
	 
	 Suppose   that  there are $n$ random losses $X_1, \dots, X_n$ with known marginal distributions $\mu_1,\dots,\mu_n$  and unknown dependence structure. To calculate the regulatory margin conservatively, one relies on the worst-case aggregate risk, that is, $$
	\sup\{ \rho(X_1+\dots+X_n): X_i\sim \mu_i,~i=1,\dots,n\}.
	$$ 
	If $\rho$ is a convex risk measure such as an ES, then its worst-case value is easy to compute due to convexity; see \cite{R13}. 
	In case $\rho$ is the VaR at level $t$, this quantity is   \eqref{eq:intrinsic_prob}, which is highly non-trivial because of non-convexity of the quantile.
 Due to the connection of quantiles to risk measures like VaR, quantile aggregation is a popular problem in risk management (Section 8.4 of \cite{MFE15}), and many useful technical results were developed in this literature, e.g., \cite{EP06}, \cite{WPY13}, \cite{EWW15} and \cite{JHW16}.
	Our main results directly address this problem for the cases that $\rho$ is a VaR or RVaR; some numerical illustrations are presented in  Section \ref{sec:num_var}.
	
Next, we bring the worst-case risk calculation to the context of portfolio selection.
 The traditional problem of VaR-based portfolio selection (e.g., \cite{BS01}) is formulated as 
	 $$ 
	 	\mbox{maximize} ~ \E[u(\boldsymbol{\lambda} \cdot \mathbf{X})]~ \mbox{over $\boldsymbol{\lambda} \in \overline{\Delta}_{n-1}$}, ~~~\text{subject to~}   {q_t}(\boldsymbol{\lambda} \cdot (-\mathbf{X})) \leq x, 
	 $$
	 where $\boldsymbol{\lambda}$ represents a  portfolio weight vector,   $\mathbf X$ represents future asset values,   $x$ is a constant risk limit,   $u:\R \to \R$ is a strictly concave and increasing utility function,
	 $t\in(0,1)$ is close to $1$,
	  and $\overline{\Delta}_{n-1} $ is the   standard $n$-simplex, that is, $$ \overline{\Delta}_{n-1}= \left\{( \lambda_1, \dots, \lambda_n) \in [0,1]^{n}: \sum_{i=1}^n \lambda_i = 1 \right\}.$$
	  Note that the quantile constraint can be equivalently formulated as a chance (exceedance probability) constraint, popular in the literature of stochastic programming.
	  This formulation requires a full specification on the joint distribution of all the assets $(X_1, \cdots, X_n)$ which can be difficult to obtain.  In the presence of dependence uncertainty, we consider the following robust optimization problem, for a given tuple  $\boldsymbol \mu$ of marginal distributions, 
	\begin{align}
	\label{eq:r1-robust} 
	\mbox{maximize} ~&  \inf_{\mathbf{X} \sim \boldsymbol{\mu}} \E[u(\boldsymbol{\lambda} \cdot \mathbf{X})]~\mbox{over $\boldsymbol{\lambda} \in \overline{\Delta}_{n-1}$}, ~~~
		\text{subject to~}   \sup_{\mathbf{X} \sim \boldsymbol{\mu}} {q_t}(\boldsymbol{\lambda} \cdot (-\mathbf{X})) \leq x, 	 
	\end{align}
	where $\mathbf{X} \sim \boldsymbol{\mu}$ represents the marginal conditions $ X_i\sim \mu_i,~i=1,\dots,n$. The problem \eqref{eq:r1-robust} has robustness implications. Assume that the marginal distribution is well-specified and we solve \eqref{eq:r1-robust}. Denote the optimal solution by $\boldsymbol{\lambda}^*$ and the optimal value by $v^*$. They satisfy the guarantee that
	$q_t(\boldsymbol{\lambda}^* \cdot (-\mathbf{X}_0)) \leq x$
	and
	$\E[u(\boldsymbol{\lambda}^* \cdot \mathbf{X}_0)] \geq v^*$, 
	where $\mathbf{X}_0$ follows the unknown true distribution. In other words, we guarantee that the quantile constraint under the true distribution is satisfied, while the attained objective value under the true distribution has at least a performance level $v^*$.  
		
	The problem \eqref{eq:r1-robust}  is challenging due to the non-convexity of VaR. Portfolio optimization with dependence uncertainty has been studied by, e.g., \cite{PP18}, but there are no results on the case of VaR.  Our results on quantile aggregation can  be applied to address this problem.
As our analysis in Section \ref{sec:OR} shows, although this worst-case approach is generally   conservative,   the obtained optimal strategies are quite intuitive.

\subsection{The O-ring model}
\label{sec:o-ring}  
	The O-ring theory of economic development was proposed by \cite{K93}; see also  the recent work of \cite{BTZ21} and the references therein. The O-ring model can be formulated in a stochastic context. Assume that there are continuums of firms and $n$ types of workers. Each firm requires $n$  workers, one in each type, to format   a team in production. Let $\omega$ be a firm in the continuum. The product value of the firm $\omega$ is denoted by $Z(\omega) \in (0, \infty)$. For a type-$i$ worker matched with the firm $\omega$, the probability to successfully complete his/her task is denoted by $X_i (\omega) \in (0, 1)$. A high-skilled worker has a higher value of $X_i$.  Among all firms, the value $Z$ has a distribution $\mu_Z$. Among all workers of type $i$, the value $X_i$ has a distribution $\mu_{i}$. The product of a firm is considered successful if all $n$ workers in the firm complete their individual tasks (this explains the name of the O-ring model). It is customary as in \cite{K93} to assume that $n$ individual events, in which the $i$-th worker completes his/her task, $i = 1, \cdots, n$, are independent  for a fixed firm. Hence, the production function of the firm $\omega$ is the product value times the probability of success, that is,
	\begin{equation}\label{eq:o-ring}
		y(X_1(\omega), \cdots, X_n(\omega), Z(\omega)) = Z(\omega) \cdot \prod_{i = 1}^n X_i(\omega).
	\end{equation}
	A classic problem is to seek a global matching between multiple heterogeneous workers into teams at heterogeneous firms in order to maximize $\E[y(X_1, \cdots, X_n, Z)]$ among all kinds of dependence structures with the given marginal distributions. The solution of the optimal sorting is positively dependent; more precisely, $Z, X_1, \cdots, X_n$ are comonotonic. The interpretation  is that the good workers ($X_i$ all have a higher value) should work together in a good firm ($Z$ also has a higher value). This partially explains the assignment of global economic industries between the developed and developing countries as argued by \cite{K93}.
	
	As argued by \cite{BTZ21, BTWZ23}, labour matching observed in the labour market does not show the comonotonic pattern as implied by the classic O-ring theory. Below, we explain that a quantile aggregation problem    leads to a richer matching pattern which can be solved using the results in this paper.\footnote{It is not our intention to say that the real labour market follows such a model; this issue would require a separate study. Our model provides a way to generate rich matching patterns. 
	This is also the approach taken by \cite{BTZ21, BTWZ23} for different settings.}  
	 There is a recently increasing interest in quantiles  as  decision criteria in economics; see  \cite{R10} and \cite{dG19} for   theoretical advances and \cite{dGNQ22} for experimental analysis. 
	 
	Instead of optimizing the expected production in \eqref{eq:o-ring}  across firms, one may be  concerned about how many productions have low values below a certain threshold $y_0>0$, e.g., a level that is unacceptable by the society. That is, one investigates the  deficiency proportion minimization problem
	\begin{equation}\label{prob:o-ring}
		\min \left\{\p(y(X_1, \cdots, X_n, Z)\le y_0):  Z \sim \mu_Z,~ X_i \sim \mu_i,~ i = 1, \cdots, n\right\},
	\end{equation}
	where the probability $\p$ measures  the proportion of productions that falls below the   deficiency threshold.  Since the problems of quantile aggregation and probability bounds translate to each other,  for \eqref{prob:o-ring}  it suffices to solve the problem of quantile aggregation on $\log(Z) + \sum_{i=1}^n \log(X_i)$.  
	The extremal dependence structure attaining \eqref{prob:o-ring} illustrates the optimal matching pattern, which is the topic of Section \ref{sec:new6}. 
	 Section \ref{sec:OR} contains a detailed illustration.
	Our results can also be  applied to  the model of \cite{BTZ21}, where the product is considered successful if at least one worker, instead of all, is able to complete the task.

\section{Notation and preliminaries}\label{sec:1}
Let $\M$ be the set of (Borel) probability measures on $\R$ and $\M_1$ be the set of  probability measures on $\R$ with finite mean.
For $\boldsymbol \mu= (\mu_1,\dots,\mu_n)\in \M^n$,
let
$\Gamma (\boldsymbol \mu)$ be the set of probability measures on $\R^n$ that have one-dimensional marginals $\mu_1,\dots,\mu_n$.
For a probability measure $\mu $ on $\R^n$, define   $\lambda_\mu \in \M$ by $$\lambda_\mu (-\infty,x]  = \mu(\{ (x_1,\dots,x_n)\in \R^n: x_1+\dots+x_n\le x\}),~ x\in \R.$$
In other words, $\lambda_\mu$ is the distribution measure of $\sum_{i=1}^n X_i$  where the random vector $(X_1,\dots,X_n)$ follows $\mu$.
Moreover, let
$\Lambda (\boldsymbol \mu) =\{\lambda_\mu: \mu\in \Gamma (\boldsymbol \mu)\}.$
Thus, $\Lambda (\boldsymbol \mu)$ is the set of the aggregate distribution measures with specified marginals $\boldsymbol \mu$.
For $t\in (0,1]$, define the left quantile functional
$$q^-_t(\mu)=\inf\{x\in \R: \mu(-\infty,x]\ge t\}, ~~\mu\in\M,$$
and for $t\in [0,1)$, define
the right quantile functional
$$q^+_t(\mu)=\inf\{x\in \R: \mu(-\infty,x]> t\}, ~~\mu\in\M.$$
The two extreme cases $q_0^+$ and $q_1^-$ correspond to the essential infimum and the essential supremum.
Note that $q^\pm_t$   is defined on $\M$ instead of on the set of random variables as in the introduction.
The most important objects in this paper are the average quantile functionals which we define next.
For $ 0 \le \beta < \beta+\alpha \le 1$,  define \begin{equation}\label{eq:r1}
	R_{\beta, \alpha} (\mu) = \frac{1}{\alpha} \int _{\beta} ^{\beta+\alpha} q^+_{1-t}(\mu)\d t, ~~\mu\in\M.\end{equation}
By definition, $ R_{\beta,\alpha} (\mu) $ is the average of the quantile\footnote{We can use either $q^+$ or $q^-$ in the integral, as the two quantities are the same almost everywhere on $[0,1]$.} of $\mu$ over $[1-\beta-\alpha,1-\beta]$. The functional $R_{\beta,\alpha} $, introduced originally by \cite{CDS10}, is called an RVaR  by \cite{WBT15}.
The value $R_{\alpha,\beta}(\mu)$ in \eqref{eq:r1}
is always finite for $\beta>0$ and $\alpha+\beta<1$,
and it may take the value $\infty$ or $-\infty$ in case one of $\beta=0$ or $\alpha+\beta=1$.
For the special case in which $\beta=0$ and $\alpha=1$, $R_{0,1}$ is precisely the mean, and it is only well defined on the set $\M_1$ of distributions with   finite mean.
The left and right quantiles can be obtained as limiting cases of $R_{\beta,\alpha}$ for $\beta \in(0,1)$ via
\begin{equation}
	\lim_{ \alpha\downarrow 0} R_{\beta,\alpha}(\mu)=q_{1-\beta}^-(\mu)
	\mbox{~~~and~~~}
	\lim_{ \alpha\downarrow 0} R_{\beta-\alpha,\alpha}(\mu)=q_{1-\beta}^+(\mu), ~~ \mu \in \M.\label{eq:quantileconvergence}
\end{equation}
Two other useful special cases are  ES and the left-tail ES (LES), defined, respectively, at level $\alpha\in (0,1)$ via
$$
\ES_\alpha(\mu) = R_{0,\alpha}(\mu) = \frac{1}{\alpha}\int_{1-\alpha}^{1} q_u^+ (\mu) \d u, ~~ \mu \in \M,
$$
and
$$
\LES_\alpha(\mu) = R_{1-\alpha,\alpha}(\mu) =  \frac{1}{\alpha}\int_{0}^{\alpha} q_u^+ (\mu) \d u, ~~ \mu \in \M.
$$
As explained by \cite{ELW18}, the RVaR functional $R$ bridges the gap between quantiles (VaR) and ES, the two most popular risk measures in banking and insurance.

It is sometimes convenient to slightly abuse the notation by using $R_{\beta,\alpha}(X)$  or $q_t(X)$  for $ R_{\beta,\alpha}(\mu)  $ or $q_t(\mu)$
where $X\sim \mu$. All random variables  appearing in the paper live in an atomless probability space $(\Omega,\mathcal F,\p)$.\footnote{A probability space is atomless if there exists a continuously distributed random variable on this space.}
We use $\bigvee_{i=1}^n \alpha_i $ for the maximum of real numbers  $\alpha_1,\dots,\alpha_n$.

\section{Convolution bounds on RVaR aggregation}\label{sec:rvar}

Our starting point is that an upper bound on RVaR aggregation, which we shall refer to as {convolution bounds}, can be obtained from an inequality on RVaR from \cite{ELW18}. 
More precisely, Theorem 2 of \cite{ELW18} gives the following inf-convolution formula, 
for any integrable random variable $X$ and $\alpha_1,\dots,\alpha_n,\beta_1,\dots,\beta_n\in [0,1]$ with $\beta  + \alpha \le 1$   
where $\beta=\sum_{i=1}^n \beta_i$ and $\alpha=\bigvee_{i=1}^n \alpha_i$, 
\begin{equation}\label{eq:e-1}
	R_{ \beta,  \alpha}\left(X\right) =\inf\left\{\sum_{i=1}^n R_{\beta_i, \alpha_i}(X_i) :X_1+\dots+X_n=X\right\},
\end{equation}
where the infimum is taken over all random variables $X_1,\dots,X_n$.
As a consequence of \eqref{eq:e-1}, we have an RVaR aggregation inequality
\begin{equation}\label{eq:e0}
	R_{ \beta,  \alpha}\left(\sum_{i=1}^n X_i\right) \leq \sum_{i=1}^n R_{\beta_i, \alpha_i}(X_i)
\end{equation}
for all $X_1,\dots,X_n$, provided the right-hand side of \eqref{eq:e0} is well defined (not ``$\infty - \infty$").\footnote{The inequality in \eqref{eq:e0} is essentially Theorem 1 of \cite{ELW18}, which requires a condition on integrability. We slightly generalize this result  to probability measures without finite means, which will be useful for the generality of results offered in this paper; see Lemma \ref{lem:1} in the appendix.  Also note that our parameterization is slightly different from \cite{ELW18}.}
The objective of \cite{ELW18} is the risk sharing problem where the aggregate risk $X$ and the preferences of the agents are known (thus, $\alpha_1,\dots,\alpha_n,\beta_1,\dots,\beta_n$ are given) and one optimizes $ \sum_{i=1}^n R_{ \beta_i , \alpha_i}(X_i) $ over possible allocations $X_1,\dots,X_n$ satisfying $X_1+\dots+X_n=X$.

In this paper, we use the reverse direction of \eqref{eq:e0}: we fix $\boldsymbol \mu =(\mu_1,\dots,\mu_n)\in \M^n$ and $t, s$ with $0 \leq t < t+s \leq 1$, and aim to find the worst-case value of the aggregate risk $R_{t,s}(\nu)$ over $\nu\in \Lambda(\boldsymbol \mu)$ using \eqref{eq:e0}.
For any $0 \le t < t+s \leq 1$, $\beta_0 \in [s, t+s]$, $\nu \in \Lambda(\boldsymbol{\mu})$, noting that $R_{t,s} \leq R_{t+s-\beta_0,\beta_0}$,  \eqref{eq:e0} leads to
\begin{equation}\label{eq:RVaR_ineq}
	R_{t, s}(\nu) \le R_{\sum_{i=1}^n \beta_i, \beta_0}(\nu) \le \sum_{i=1}^n R_{\beta_i, \beta_0}(\mu_i),
\end{equation}
where $\sum_{i=1}^n \beta_i = t+s-\beta_0$. Taking a supremum among all $\nu \in \Lambda(\boldsymbol{\mu})$ and an infimum among all feasible $(\beta_0, \beta_1, \cdots, \beta_n)$ in \eqref{eq:RVaR_ineq}, we get, for any fixed $(t,s)$ with $0 \le t < t+s \leq 1$,
\begin{equation}\label{eq:RVaR_bound}
	\sup_{\nu \in \Lambda(\boldsymbol{\mu})} R_{t, s}(\nu)  \leq \inf_{\substack{\sum_{i=0}^n \beta_i = t+s \\\beta_0\ge s > 0}}\sum_{i=1}^n R_{\beta_i, \beta_0}(\mu_i).
\end{equation}
The right-hand side of \eqref{eq:RVaR_bound} depends only on the marginal distributions $\mu_1,\dots,\mu_n$ and $(t,s)$, and thus we obtain a novel upper bound on the  worst-case RVaR aggregation. We shall refer to the bound  in  \eqref{eq:RVaR_bound} as a \emph{convolution bound}, since it is obtained from the inf-convolution formula in \eqref{eq:e-1}. To simplify notation, for each $n\in \N$, let $$\Delta_n= \left\{(\beta_0,\beta_1,\dots,\beta_{n}) \in (0,1)\times [0,1)^{n}: \sum_{i=0}^{n} \beta_i =1 \right\},$$
which is the set of vectors in the standard $(n+1)$-simplex with positive first component.  
In all results, $\boldsymbol \beta$ represents $(\beta_0,\beta_1,\dots,\beta_{n}) $.

We formally present the convolution bound in Theorem \ref{th:qa-4prime} below. More importantly, we show that this bound is indeed sharp under  a few sets of conditions, and hence the convolution bounds are useful in calculating worst-case values in risk aggregation problems.
As far as we are aware of, Theorem \ref{th:qa-4prime} is the only result in the literature on RVaR aggregation with given marginal distributions.
The practically relevant case of quantiles ($s \downarrow 0$) will be discussed in detail in Section \ref{sec:2}.

Throughout, by ``admitting a decreasing density" we mean that the distribution has a left-bounded support  and it has a decreasing probability density function with respect to the Lebesgue measure on its support. The case for ``admitting an increasing density" is analogous.

\begin{theorem}\label{th:qa-4prime}
	Let $\boldsymbol \mu=(\mu_1,\dots,\mu_n)\in \M^n$. For any $t, s$ with $0 \leq t < t+s \leq 1$,
	\begin{equation}\label{eq:prime1}
		\sup_{\nu \in \Lambda(\boldsymbol \mu) }R_{t,s} (\nu) \le
		\inf_{\substack{\boldsymbol \beta\in (t+s)\Delta_n \\\beta_0\ge s > 0}}
		\sum_{i=1}^n R_{\beta_i,\beta_0 }(\mu_i).
	\end{equation}
	Moreover, \eqref{eq:prime1} holds as an equality in the following cases:
	\begin{enumerate}[(i)]
		\item \label{item:t0} $t=0$;
		\item \label{item:rvar_decr}
		each of $\mu_1,\dots,\mu_n$ admits a decreasing density beyond its $(1-t-s)$-quantile;
		\item \label{item:mutual} $\sum_{i=1}^n \mu_i \left(q^+_{1-t-s}(\mu_i), q^-_{1}(\mu_i)\right] \le t+s$.
	\end{enumerate}
\end{theorem}

In Theorem \ref{th:qa-4prime}, case \eqref{item:t0} corresponds to the aggregation of ES, which is well known in the literature, e.g., Chapter 8 of \cite{MFE15}. 
Case \eqref{item:rvar_decr} in Theorem \ref{th:qa-4prime} is the most useful as decreasing densities are common in many areas of applications, including but not limited to finance and insurance. 
The proof of this case is quite technical, and it relies on  advanced  results on robust risk aggregation established in \cite{WW16} and \cite{JHW16}.
Case \eqref{item:mutual} corresponds to an assumption which allows for a lower mutually exclusive (see \cite{PW15} and also Definition \ref{def:lme} in Appendix \ref{app:D1}) 
random vector following marginal distributions $\mu_1,\dots,\mu_n$. 
Such a situation is not common, but it may happen in the context of credit portfolio analysis, where  each $\mu_i$ represents the distribution of loss from a defaultable security which has a small probability of being positive. 
For instance, take $t=s=0.05$, $n=50$ and let $\mu_i$ be Bernoulli distributions with  $\mu_i(\{1\})=0.001$ for $i=1,\dots,n$. In this example, the aggregate risk represents the loss from a portfolio of defaultable bonds with default probability $0.001$,
and the condition in case \eqref{item:mutual} is satisfied because $\sum_{i=1}^n \mu_i \left(q^+_{1-t-s}(\mu_i), q^-_{1}(\mu_i)\right]  =\sum_{i=1}^n \mu_i(\{1\})=0.05\le t+s$. 
The proof for case \eqref{item:mutual} is based on convenient properties of a  mutually  exclusive random vector. Moreover, we will show in Figure \ref{fig:rvar_inc} (right panel) in Section \ref{sec:8} that the bound \eqref{eq:prime1} is not sharp for marginals with increasing densities, even for homogeneous marginals; however for quantiles (limits of RVaR), the bound becomes sharp for increasing densities (Theorem \ref{th:qa-4}).

Results that are symmetric to the upper convolution  bounds are collected in
Appendix \ref{sec:lower}. For instance,
a lower bound on $\inf_{\nu\in \Lambda(\boldsymbol \mu)} R_{t,s}(\nu)$, which is symmetric to Theorem \ref{th:qa-4prime},
is given in Theorem \ref{th:qa-4primeprime}.

Case \eqref{item:rvar_decr} in Theorem \ref{th:qa-4prime}
involves conditional distributions above a certain quantile. For $\mu \in \M$ and $t\in [0,1)$, let $\mu^{t+}$ be the probability measure given by
$$
\mu^{t+} (-\infty,x] =\max\left\{ \frac{\mu(-\infty,x]-t}{1-t}, 0\right\},~~x\in \R.
$$
The probability measure $\mu^{t+}$ is called the $t$-tail distribution of $\mu$  by \cite{RU02}.
In other words, $\mu^{t+}$ is the distribution measure of the random variable $q_U(\mu)$ where $U$ is a uniform random variable on $[t,1]$.
Equivalently, $\mu^{t+}$ is the distribution measure of $\mu$ restricted beyond its $t$-quantile. 
For example, the statement in case \eqref{item:rvar_decr} that $\mu$ admits a decreasing density beyond its $(1-t-s)$-quantile is equivalent to the one that $\mu^{(1-t-s)+}$ admits a decreasing density. Moreover, by direct computation, for fixed $\mu \in \M$ and $t \in [0, 1)$, we have
\begin{equation}\label{eq:rvar_scale}
	\begin{aligned}
		 R_{\beta, \alpha} (\mu^{t+}) & = R_{(1-t)\beta, (1-t)\alpha}(\mu), ~ \text{ for all $0 \leq \beta < \beta+\alpha \leq 1$};\\
	 q_u^- (\mu^{t+}) 	& = q_{t+(1-t) u}^- (\mu), ~ \text{ for all $u \in (0,1]$}.
	\end{aligned}
\end{equation}

Using \eqref{eq:rvar_scale}, we obtain Proposition \ref{prop:basic} below  based on Theorem 4.1 of \cite{LW16}. This result is useful in the proof of Theorem \ref{th:qa-4prime}. For $\boldsymbol \mu=(\mu_1,\dots,\mu_n)\in \M^n$ and $t\in [0,1)$, denote by $ \boldsymbol \mu^{t+}= (\mu_1^{t+},\dots,\mu_n^{t+})$.  

\begin{proposition}\label{prop:basic}
	For $\boldsymbol \mu=(\mu_1,\dots,\mu_n)\in \M^n$, $t \in [0,1)$ and $s \in (0, 1-t]$, we have
	$$
	\sup_{\nu \in \Lambda(\boldsymbol \mu) }R_{t,s} (\nu) = \sup_{\nu \in \Lambda(\boldsymbol \mu^{(1-t-s)+ } )}\LES_{\frac{s}{t+s}} (\nu)
	$$
	and
	$$
	\sup_{\nu \in \Lambda(\boldsymbol \mu) }q_{t}^+ (\nu) = \sup_{\nu \in \Lambda(\boldsymbol \mu^{t+} )}q_0^+ (\nu).
	$$
\end{proposition}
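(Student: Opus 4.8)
The plan is to reduce each supremum on the left to the corresponding supremum on the right by the substitution $\nu \mapsto \nu^{(1-t-s)+}$ (respectively $\nu \mapsto \nu^{t+}$), and then show that this substitution sets up a bijection between the two feasible sets $\Lambda(\boldsymbol\mu)$ and $\Lambda(\boldsymbol\mu^{(1-t-s)+})$ (respectively $\Lambda(\boldsymbol\mu^{t+})$), under which the objective functionals agree. The first display uses the first identity in \eqref{eq:rvar_scale}: with $r = 1-t-s$, one has $R_{t,s}(\nu) = R_{(1-r)\frac{t}{1-r},\,(1-r)\frac{s}{1-r}}(\nu)$, and since $t/(1-r) = t/(t+s)$ and $s/(1-r) = s/(t+s)$ with $t/(t+s) + s/(t+s) = 1$, the left side equals $R_{\frac{t}{t+s},\frac{s}{t+s}}(\nu^{r+})$; because $\frac{t}{t+s} = 1 - \frac{s}{t+s}$, this is precisely $\LES_{\frac{s}{t+s}}(\nu^{r+})$ by the definition $\LES_\alpha = R_{1-\alpha,\alpha}$. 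The second display uses the second identity in \eqref{eq:rvar_scale} in the quantile form: $q_t^+(\nu)$, being (a.e.) a limit of averages $R_{\beta,\alpha}$ as in \eqref{eq:quantileconvergence}, transforms under the $t$-tail map to $q_0^+(\nu^{t+})$; more directly, $\nu^{t+}(-\infty,x] > 0 \iff \nu(-\infty,x] > t$, so $\inf\{x : \nu^{t+}(-\infty,x] > 0\} = \inf\{x : \nu(-\infty,x] > t\}$, which is exactly $q_0^+(\nu^{t+}) = q_t^+(\nu)$.

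The substance of the argument is therefore the claim that the tail map $\Phi_r : \nu \mapsto \nu^{r+}$ maps $\Lambda(\boldsymbol\mu)$ \emph{onto} $\Lambda(\boldsymbol\mu^{r+})$. This is where Theorem 4.1 of \cite{LW16} enters, as the excerpt indicates. The key structural fact is that the $r$-tail of an aggregate distribution $\lambda_\mu$ can itself be realized as an aggregate distribution of the $r$-tails of the marginals: if $(X_1,\dots,X_n)\sim\mu\in\Gamma(\boldsymbol\mu)$ and we condition on the event $\{X_1+\dots+X_n \ge q_r^-(\lambda_\mu)\}$ (an event of probability $1-r$, adjusting for atoms on an atomless space), the conditional law of $(X_1,\dots,X_n)$ has marginals that are stochastically larger than — but need not equal — $\mu_1^{r+},\dots,\mu_n^{r+}$. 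The clean statement one wants is that $\sup_{\nu\in\Lambda(\boldsymbol\mu)} R_{r,\,1-r}(\nu)$ and the analogous quantities over $\Lambda(\boldsymbol\mu^{r+})$ coincide because, roughly, the dependence structure achieving a large value of the tail average ``wastes nothing'' below the $r$-quantile; this is the content of the cited result of Liu and Wang, which characterizes worst-case tail behaviour via couplings of the tail distributions. I would invoke Theorem 4.1 of \cite{LW16} to get the inequality $\sup_{\nu\in\Lambda(\boldsymbol\mu)} R_{t,s}(\nu) \le \sup_{\nu\in\Lambda(\boldsymbol\mu^{(1-t-s)+})}\LES_{\frac{s}{t+s}}(\nu)$, and the reverse inequality from the elementary direction: given any $\nu\in\Lambda(\boldsymbol\mu^{(1-t-s)+})$, paste it onto a coupling of the ``lower parts'' $\mu_i$ restricted below their $(1-t-s)$-quantiles (which exists on an atomless space, e.g. by comonotone or independent coupling of those pieces) to build $\tilde\nu\in\Lambda(\boldsymbol\mu)$ whose $(1-t-s)$-tail is $\nu$, so that $R_{t,s}(\tilde\nu) = \LES_{\frac{s}{t+s}}(\nu)$ by the identity above.

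I expect the main obstacle to be the ``pasting'' and measurability bookkeeping: constructing, on the atomless space $(\Omega,\mathcal F,\p)$, a joint law whose $r$-tail part matches a prescribed $\nu\in\Lambda(\boldsymbol\mu^{r+})$ while the complementary part has the correct marginals, handling atoms of $\mu_i$ straddling the quantile $q_r^{\pm}(\mu_i)$ by splitting mass (valid precisely because the space is atomless). A secondary technical point is that $R_{t,s}$, $\LES$, and $q^+$ must be manipulated with care when the relevant integrals are improper — i.e. when a marginal lacks a finite mean — but since $t,s>0$ in the first display (so $R_{t,s}$ and $\LES_{s/(t+s)}$ are averages over a bounded-away-from-the-endpoint window and hence finite) and $q_0^+$ is order-theoretic rather than integral in the second display, these integrability caveats do not actually bite here; I would note this explicitly. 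Once the bijection of feasible sets and the objective identities \eqref{eq:rvar_scale} are in place, both equalities follow by taking suprema.
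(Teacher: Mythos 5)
Your argument is correct and takes essentially the paper's route: the paper proves both identities simply by citing Theorem 4.1 (together with Example 6.3 and Remark 4.1) of \cite{LW16}, which is precisely the result you invoke for the nontrivial inequality, while your scaling identities from \eqref{eq:rvar_scale} and your pasting construction just re-derive the elementary direction that the citation already covers. One caveat: your opening claim that $\nu \mapsto \nu^{(1-t-s)+}$ is a bijection between $\Lambda(\boldsymbol \mu)$ and $\Lambda(\boldsymbol \mu^{(1-t-s)+})$ is not true (as you yourself note afterwards, the tail of an aggregate distribution need not have marginals $\mu_i^{(1-t-s)+}$), but since the proof you actually give argues only at the level of suprema via \cite{LW16} plus pasting, this misstatement does no harm.
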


Proposition \ref{prop:basic} suggests that for the worst-case problems of RVaR aggregation, it suffices to consider the one started from quantile level 0, i.e. the $\LES$ aggregation. In particular, for the worst-case problems of quantile aggregation, it suffices to consider the one at quantile level 0, i.e. the problems $\sup_{\nu \in \Lambda(\boldsymbol \mu^{t+} )}q_0^+ (\nu)$ for generic choices of $\boldsymbol \mu$.
This result will be used repeatedly in our discussions, and it will be the general approach taken in the proof of our main results.

\section{Convolution bounds on quantile aggregation}\label{sec:2}

\subsection{Convolution bounds}

In Theorem \ref{th:qa-4} below we summarize bounds on $\sup_{\nu \in \Lambda(\boldsymbol \mu) }q_t^+ (\nu)$. Most cases can be obtained by sending $s$ to $0$ and replacing $t$ with $(1-t)$ in Theorem \ref{th:qa-4prime}, but a notable difference is that the convolution bounds are  sharp for both decreasing and increasing densities, for $n=2$, and for two types of mutual exclusivity (see Appendix \ref{app:D1}). This is in drastic contrast to the RVaR convolution bounds which are only sharp for decreasing densities or upper mutual exclusivity (see Figure \ref{fig:rvar_inc}).
Results on  lower bounds on $ q^-_t (\nu)$ are put in Appendix \ref{sec:lower}. In particular, Theorem \ref{th:qa-2} is symmetric to Theorem \ref{th:qa-4}.

\begin{theorem}\label{th:qa-4}
	For $\boldsymbol \mu\in \M^n$ and $t\in [0,1)$, we have
	\begin{equation}\label{eq:main1pr}
		\sup_{\nu \in \Lambda(\boldsymbol \mu) }q^+_{t} (\nu) \le \inf_{\boldsymbol \beta\in (1-t)\Delta_n}   \sum_{i=1}^n R_{\beta_i,\beta_0 }(\mu_i).
	\end{equation}
	Moreover, \eqref{eq:main1pr} holds as an equality in the following cases:
	\begin{enumerate}
		\item[(i)] $n \leq 2$;
		\item[(ii)] \label{item_decr}
		each of $\mu_1,\dots,\mu_n$ admits a decreasing density beyond its $t$-quantile;
		\item[(iii)] \label{item_incr}
		each of $\mu_1,\dots,\mu_n$ admits an increasing density beyond its $t$-quantile;
		\item[(iv)] $\sum_{i=1}^n \mu_i\left(q^+_{t}(\mu_i),q^-_{1}(\mu_i)\right] \le 1-t$;
		\item[(v)] $\sum_{i=1}^n \mu_i  \left[ q^+_{t}(\mu_i), q^-_{1}(\mu_i) \right)  \le 1-t$.
	\end{enumerate}
\end{theorem}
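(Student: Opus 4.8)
\textbf{Proof proposal for Theorem \ref{th:qa-4}.}

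The plan is to derive everything from the RVaR result, Theorem \ref{th:qa-4prime}, together with the reduction in Proposition \ref{prop:basic}. First I would establish the inequality \eqref{eq:main1pr}. Using the right-quantile limit in \eqref{eq:quantileconvergence}, namely $q^+_{t}(\nu)=\lim_{s\downarrow 0}R_{t,s}(\nu)$ (after the substitution $t\mapsto 1-t$, so that we are looking at $q^+_{t}$ via $R_{1-t-s,s}$ as $s\downarrow 0$), I would take $s\downarrow 0$ in \eqref{eq:prime1}. The left side converges to $\sup_\nu q^+_t(\nu)$ by monotone/continuity arguments for RVaR in the level, and the right side, with the constraint $\boldsymbol\beta\in(t+s)\Delta_n$, $\beta_0\ge s>0$, converges (after relabeling $1-t$ in place of $t+s$) to $\inf_{\boldsymbol\beta\in(1-t)\Delta_n}\sum_i R_{\beta_i,\beta_0}(\mu_i)$; the only subtlety is justifying the exchange of $\inf$ over $\boldsymbol\beta$ with the limit $s\downarrow0$, which I would handle by noting the right side is monotone in $s$ and lower semicontinuous in $\boldsymbol\beta$ on the (closure of the) simplex, so the limit of the infima equals the infimum at the limit. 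Alternatively, and more cleanly, I would invoke \eqref{eq:e0} directly with quantiles: since $q^+_t=R_{1-t-\beta_0,\beta_0}$ in the limiting sense and the inf-convolution inequality \eqref{eq:e0} degenerates to the classical quantile superadditivity bound $q^+_{\sum\beta_i}(\sum X_i)\le\sum q^+_{?}(X_i)$, one obtains \eqref{eq:main1pr} by the same supremum/infimum manipulation as in \eqref{eq:RVaR_ineq}--\eqref{eq:RVaR_bound}.

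Next I would prove sharpness in each listed case. Cases (i), (ii), and (iv) should follow by taking $s\downarrow0$ in the corresponding equality cases (ii), (iii), (iv) of Theorem \ref{th:qa-4prime} with $t$ there replaced by $1-t$: for each fixed $s>0$ one has equality in \eqref{eq:prime1}, and passing to the limit on both sides (using continuity of the RVaR objective in $s$ and the semicontinuity argument above for the right side) yields equality in \eqref{eq:main1pr}. One must check that the hypotheses transfer: ``decreasing density beyond the $(1-t-s)$-quantile'' for all $s$ small is exactly ``decreasing density beyond the $t$-quantile'' in case (ii); the counting condition in Theorem \ref{th:qa-4prime}(iv) becomes $\sum_i\mu_i(q^+_t(\mu_i),q^-_1(\mu_i)]\le 1-t$ in the limit, matching (iv) here; and $n\le2$ is unchanged. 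For case (i), $n\le2$, I would alternatively give a direct construction via counter-monotonic coupling in the joint upper tail (the classical Makarov/Rüschendorf two-marginal argument), which also re-proves the $n=2$ part of Theorem \ref{th:qa-4prime}(iii) if needed.

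The genuinely new content is case (iii) (increasing densities) and case (v) (the ``closed-interval'' counting condition, i.e. lower mutual exclusivity), which have no direct analogue among the equality cases of the RVaR theorem --- indeed the remark after Theorem \ref{th:qa-4prime} says the RVaR bound \emph{fails} to be sharp for increasing densities. So I expect case (iii) to be the main obstacle. My approach there: reduce via Proposition \ref{prop:basic} to the level-$0$ problem $\sup_{\nu\in\Lambda(\boldsymbol\mu^{t+})}q^+_0(\nu)=\sup\{\essinf(\sum X_i):X_i\sim\mu_i^{t+}\}$, so that after the $t$-tail rescaling \eqref{eq:rvar_scale} the densities $\mu_i^{t+}$ are increasing on their supports. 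For the essential-infimum (best lower endpoint of the sum) problem with increasing marginal densities, I would construct an explicit optimal coupling that keeps all $n$ random variables comonotone-like near the lower endpoint so that the minimum of the sum is attained with the right mass, and verify it matches the convolution bound $\inf_{\boldsymbol\beta\in\Delta_n}\sum_i\LES_{\beta_i/\beta_0}(\mu_i^{t+})$ evaluated at the optimal $\boldsymbol\beta$; concretely I expect the optimal split to put $\beta_0$ determined by where the densities allow a flat (uniform) aggregate, invoking the mixability/complete-mixability results of \cite{WW16,JHW16} on the relevant tail pieces --- increasing density on a bounded-below tail is the ``mirror image'' of the decreasing-density case, so the complete mixability arguments used for Theorem \ref{th:qa-4prime}(iii) apply to the reflected variables. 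For case (v), the argument should parallel Theorem \ref{th:qa-4prime}(iv): construct a \emph{lower} mutually exclusive vector (Definition \ref{def:lme}) with the prescribed marginals --- feasible precisely because $\sum_i\mu_i[q^+_t(\mu_i),q^-_1(\mu_i))\le 1-t$ leaves enough probability to place the variables at their respective lower endpoints disjointly --- and compute $q^+_t$ of its sum explicitly, checking it equals the right side of \eqref{eq:main1pr}. The delicate point throughout will be the open-versus-closed endpoints in the counting conditions and the left-versus-right quantile bookkeeping, which is why (iv) and (v) are stated as a matched pair.
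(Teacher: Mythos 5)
Your treatment of the inequality \eqref{eq:main1pr} and of cases (i), (ii) and (iv) coincides with the paper's: both take $s\downarrow 0$ in \eqref{eq:prime1} via \eqref{eq:quantileconvergence}, using continuity of $R_{\beta_i,\beta_0}$ in the level to pass the equality cases (ii), (iii), (iv) of Theorem \ref{th:qa-4prime} to the quantile limit, and your bookkeeping of how the hypotheses transfer is correct. The genuine gap is case (iii). Your reflection idea is the right start (the paper indeed passes to $\tilde\mu_i$, the laws of $-X_i$, so that $\sup_\nu q_t^+(\nu)=-\inf_{\tilde\nu}q_{1-t}^-(\tilde\nu)$ with the $\tilde\mu_i$ having decreasing densities below their $(1-t)$-quantiles), but from there you propose to re-use ``the complete mixability arguments used for Theorem \ref{th:qa-4prime}(iii)'' and to construct a coupling whose optimal $\boldsymbol\beta$ sits ``where the densities allow a flat aggregate.'' That does not work as stated: the reflected problem is a \emph{best-case} quantile problem, not the worst-case RVaR problem, so the JHW16 worst-case construction behind Theorem \ref{th:qa-4prime}(iii) does not transfer by mirroring. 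What the paper uses instead is the explicit best-case formula (Corollary 4.7 of \cite{JHW16}), $\inf_{\tilde\nu}q_{1-t}^+(\tilde\nu)=\max\bigl\{\bigvee_{i=1}^n q_{1-t}^-(\tilde\mu_i),\ \sum_{i=1}^n R_{t,1-t}(\tilde\mu_i)\bigr\}$, and then — this is the step missing from your plan — shows that the (reflected) convolution bound dominates \emph{both} branches of this maximum by evaluating it at degenerate corners of $\overline{\Delta}_n$: all $\beta_i=0$ except a single $\beta_j$ with $\beta_0=1-t-\beta_j$ (yielding $\bigvee_j q_{1-t}^-(\tilde\mu_j)$ in the limit), and $\beta_0\uparrow 1-t$ (yielding the sum of conditional means). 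Your ``flat aggregate'' heuristic would only recover the joint-mixability branch; Example \ref{ex:ex0} and Figure \ref{fig:bound_p} show precisely that a symmetric/interior choice of $\boldsymbol\beta$ is not sharp for increasing densities, so without the corner analysis the matching of the attained value to $\inf_{\boldsymbol\beta\in(1-t)\Delta_n}\sum_i R_{\beta_i,\beta_0}(\mu_i)$ is not established.

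A second, smaller error is in case (v): the condition $\sum_{i=1}^n\mu_i[q_t^+(\mu_i),q_1^-(\mu_i))\le 1-t$ permits an \emph{upper} mutually exclusive vector (Definition \ref{def:ume}: the events on which the variables lie below their essential suprema are pairwise disjoint), which is what the paper constructs; it is case (iv), inherited from Theorem \ref{th:qa-4prime}(iv), that corresponds to \emph{lower} mutual exclusivity (Definition \ref{def:lme}). Your verbal description (``place the variables at their respective lower endpoints disjointly'') is the right structure, but you cite the wrong definition, and the clean way to finish is not to compute $q_t^+$ of the sum exactly but to note it equals $\min_{1\le i\le n}\bigl(q_0^+(\mu_i^{t+})+\sum_{j\ne i}q_1^-(\mu_j^{t+})\bigr)$ and that this dominates the convolution bound, again by corner choices of $\boldsymbol\beta$ ($\beta_0\downarrow 0$, $\beta_i\uparrow 1$), equality then following from \eqref{eq:main1pr}.
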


\begin{remark}
	If $\mu_1,\dots,\mu_n$ have positive densities on their supports, then
	$\sup_{\nu \in \Lambda(\boldsymbol \mu) }q^-_t (\nu) = \sup_{\nu \in \Lambda(\boldsymbol \mu) }q^+_t (\nu) $
	for all $t\in (0,1)$; see Lemma 4.5 of \cite{BJW14}.
	Hence, using $ q^-_t (\nu) $ or $q^+_t (\nu)$  in Theorem \ref{th:qa-4} is not essential to our discussions.
\end{remark}

\begin{remark}
 The   classic probability bound     $\p( \sum_{i=1}^n X_i \ge \sum_{i=1}^n z_i) \le \sum_{i=1}^n \p(X_i \ge z_i)$ for all $z_1,\dots,z_n\in \R$,
  is a special case of Theorem \ref{th:qa-4} by converting quantile bounds into probability bounds.
To see this, let $\mu_i$ be the distribution of $X_i$ and  $t_i=\p(X_i \ge z_i) $ for $i\in \{1,\dots,n\}$, and let $\nu$ be the distribution of $\sum_{i=1}^n X_i$.
  The bound \eqref{eq:main1pr} gives
  $ q^+_{1-\sum_{i=1}^n t_i} (\nu) \le    \sum_{i=1}^n q^-_{1-t_i  }(\mu_i) \le \sum_{i=1}^n z_i$.
  This implies $\p( \sum_{i=1}^n X_i \ge \sum_{i=1}^n z_i)\le \sum_{i=1}^n t_i$.
\end{remark}

In the literature, some sharp bounds on quantile aggregation for decreasing densities are obtained by \cite{WPY13} and \cite{PR13} in the homogeneous case $(\mu_1=\dots=\mu_n)$ and \cite{JHW16} in the heterogeneous case.
For the heterogeneous case, the method of \cite{JHW16} involves solving a system of $(n+1)$-dimensional implicit ODE (equations (E1) and (E2) of \cite{JHW16}), which requires a highly complicated calculation.
In contrast, our result in Theorem \ref{th:qa-4} gives sharp bounds based on the minimum or maximum of an $(n+1)$-dimensional function.

In the homogeneous case $\mu_1=\dots=\mu_n$, as an immediate consequence of Theorem \ref{th:qa-4}, we obtain the following reduced bounds in which one replaces  $\inf_{\boldsymbol \beta\in (1-t)\Delta_n}   \sum_{i=1}^n R_{\beta_i,\beta_0 }(\mu_i)$
by a one-dimensional optimization problem. 
We show that, in some homogeneous case, the sharp result in Theorem \ref{th:qa-4} can be achieved by the reduced bound. A proof of this result follows from  a combination of Theorem \ref{th:qa-4} and Proposition 1 of \cite{EPRWB14}.
In what follows, $\Lambda_n(\mu) = \Lambda(\mu, \cdots, \mu)$ is the set of the aggregate distribution measures with the homogeneous marginal $\mu$.
\begin{proposition}[Reduced convolution bounds]\label{prop:reduced_bound}
	For $\mu\in \M $ and $t\in [0,1)$, we have
	\begin{equation}\label{eq:main1hom}\sup_{\nu \in \Lambda_n(\mu) }q_t^+ (\nu) \le  \inf_{\alpha \in  (0, (1-t)/n ) } n R_{\alpha, 1-t-n\alpha}(\mu) = \inf_{\alpha \in (0, (1-t)/n)} \frac{n}{1-t-n\alpha}\int_{t+(n-1)\alpha}^{1-\alpha} q_u^{-}(\mu) \d u.
	\end{equation}
 Moreover, \eqref{eq:main1hom} holds as an equality if $\mu$ admits a decreasing density beyond its $t$-quantile.
\end{proposition}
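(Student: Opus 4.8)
The plan is to deduce Proposition \ref{prop:reduced_bound} directly from Theorem \ref{th:qa-4} by exploiting the symmetry of the homogeneous case. First I would observe that when $\mu_1=\dots=\mu_n=\mu$, the objective function $\boldsymbol \beta \mapsto \sum_{i=1}^n R_{\beta_i,\beta_0}(\mu_i) = \sum_{i=1}^n R_{\beta_i,\beta_0}(\mu)$ over the simplex $(1-t)\Delta_n$ has, for each fixed $\beta_0$, a convexity/symmetry structure in $(\beta_1,\dots,\beta_n)$ that lets us restrict to the diagonal $\beta_1=\dots=\beta_n$. Concretely, I would show $\beta \mapsto R_{\beta,\beta_0}(\mu)$ is convex in $\beta$ on $[0,1-\beta_0]$ (it is an average of quantiles over a window of fixed length starting at $1-\beta-\beta_0$, and $\beta \mapsto \int_{1-\beta-\beta_0}^{1-\beta} q^+_{1-t}(\mu)\,\d t$ is convex since the quantile function is nondecreasing); hence by Jensen/Schur-convexity the sum $\sum_i R_{\beta_i,\beta_0}(\mu)$ over $\sum_i \beta_i = 1-t-\beta_0$ is minimized at $\beta_1=\dots=\beta_n=(1-t-\beta_0)/n$. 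This collapses the $(n+1)$-dimensional infimum to a two-dimensional one over $(\beta_0, \alpha)$ with $\beta_0\in(0,1-t)$ and $\alpha=(1-t-\beta_0)/n$, i.e. a one-parameter family after substituting $\beta_0 = 1-t-n\alpha$, giving $\inf_{\alpha\in(0,(1-t)/n)} n R_{\alpha, 1-t-n\alpha}(\mu)$.

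Next I would unwind the definition \eqref{eq:r1} of $R_{\alpha,1-t-n\alpha}(\mu)$ to get the integral formula: by definition $R_{\alpha,1-t-n\alpha}(\mu) = \frac{1}{1-t-n\alpha}\int_\alpha^{1-t-(n-1)\alpha} q^+_{1-u}(\mu)\,\d u$, and the change of variables $u \mapsto 1-u$ turns this into $\frac{1}{1-t-n\alpha}\int_{t+(n-1)\alpha}^{1-\alpha} q^-_u(\mu)\,\d u$ (using that $q^+$ and $q^-$ agree a.e., as noted in the footnote to \eqref{eq:r1}), yielding the claimed second equality in \eqref{eq:main1hom} after multiplying by $n$. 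The upper bound \eqref{eq:main1hom} then follows immediately from \eqref{eq:main1pr} of Theorem \ref{th:qa-4} combined with the diagonal reduction.

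For the sharpness claim, I would invoke case \eqref{item_decr} of Theorem \ref{th:qa-4}: if $\mu$ admits a decreasing density beyond its $t$-quantile, then $\sup_{\nu\in\Lambda_n(\mu)}q^+_t(\nu)$ equals the $(n+1)$-dimensional infimum $\inf_{\boldsymbol\beta\in(1-t)\Delta_n}\sum_{i=1}^n R_{\beta_i,\beta_0}(\mu)$, which by the diagonal reduction above equals the one-dimensional infimum in \eqref{eq:main1hom}. So nothing new is needed here beyond the reduction argument plus Theorem \ref{th:qa-4}\eqref{item_decr}.

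The main obstacle I anticipate is making the diagonal reduction fully rigorous when $\beta_0$ is allowed to vary jointly with the $\beta_i$: one must check that the infimum over the full simplex is genuinely attained (or approached) on the diagonal for every fixed $\beta_0$ simultaneously, which is clean given the per-$\beta_0$ convexity argument, but one should be careful about boundary behavior — e.g. $\beta_0 \downarrow 0$ where $R_{\alpha,\beta_0}$ may blow up, or $\beta_i = 0$ for some $i$ which is allowed in $\Delta_n$ but excluded on the strict diagonal when $t$ is such that $(1-t-\beta_0)/n$ could be forced to $0$. I would handle this by noting the constraint set is compact, the objective is lower semicontinuous (quantile functions are, and averages thereof inherit it), and the diagonal point always lies in the (closed) feasible region, so the reduction to $\inf_{\alpha\in(0,(1-t)/n)}$ — an open interval, matching the requirement $\beta_0 = 1-t-n\alpha > 0$ — is valid with at most a harmless passage to a limit at the endpoints.
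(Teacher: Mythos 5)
There is a genuine gap, and it sits exactly where your argument does real work: the claim that $\beta \mapsto R_{\beta,\beta_0}(\mu)$ is convex on $[0,1-\beta_0]$ ``since the quantile function is nondecreasing'' is false. Monotonicity of $u\mapsto q_u^-(\mu)$ only gives that the window average $\beta\mapsto \frac{1}{\beta_0}\int_{1-\beta-\beta_0}^{1-\beta} q_u^-(\mu)\,\d u$ is \emph{decreasing} in $\beta$; convexity requires $q^-(\mu)$ itself to be convex on the relevant range (equivalently, a decreasing density there). For an increasing density the map is concave, Jensen runs the other way, and the diagonal \emph{maximizes} the slice sum — so your ``diagonal reduction'' $\inf_{\boldsymbol\beta\in(1-t)\Delta_n}\sum_i R_{\beta_i,\beta_0}(\mu)=\inf_\alpha nR_{\alpha,1-t-n\alpha}(\mu)$ is false in general, as the paper itself records (Example \ref{ex:ex0} and Figure \ref{fig:bound_p}, right panel, where \eqref{eq:main1pr} is strictly smaller than \eqref{eq:main1hom}). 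The damage is contained for the inequality half of \eqref{eq:main1hom}: no reduction is needed there, since restricting the infimum in \eqref{eq:main1pr} to $\beta_1=\cdots=\beta_n=\alpha$, $\beta_0=1-t-n\alpha$ can only increase it, so $\sup_\nu q_t^+(\nu)\le\inf_\alpha nR_{\alpha,1-t-n\alpha}(\mu)$ follows at once from Theorem \ref{th:qa-4}. But your sharpness argument hinges entirely on the reduction, and as justified it does not stand.

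It can be repaired, and then it becomes a genuinely different route from the paper's. Under the hypothesis that $\mu$ has a decreasing density beyond its $t$-quantile, $u\mapsto q_u^-(\mu)$ is convex on $[t,1]$, and every window $[1-\beta_i-\beta_0,\,1-\beta_i]$ appearing in \eqref{eq:main1pr} lies in $[t,1]$ because $\beta_i+\beta_0\le 1-t$; hence $\beta\mapsto R_{\beta,\beta_0}(\mu)$ is convex there, Jensen gives $\sum_i R_{\beta_i,\beta_0}(\mu)\ge nR_{\bar\alpha,\beta_0}(\mu)$ with $\bar\alpha=(1-t-\beta_0)/n$, and the full infimum does equal the diagonal one; combining with sharpness case \eqref{item_decr} of Theorem \ref{th:qa-4} yields equality in \eqref{eq:main1hom}. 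The paper instead never proves any diagonal reduction: it obtains the reverse inequality directly from the known homogeneous worst-case formula (Proposition 1 of \cite{EPRWB14}, i.e.\ Theorem 3.4 of \cite{WPY13} / Proposition 8.32 of \cite{MFE15}), which says the supremum equals $n$ times a conditional mean of $\mu$ on some interval $[t+(n-1)\alpha,1-\alpha]$, and then sandwiches with the upper bound. So: state the convexity of the quantile function as a consequence of the decreasing-density assumption (not of monotonicity), restrict the reduction claim to that case, and drop it from the general inequality — then your argument is correct and self-contained modulo Theorem \ref{th:qa-4}\eqref{item_decr}, whereas the paper's proof leans on the external sharp formula of \cite{WPY13}.
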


First, it is clear that the convolution bound \eqref{eq:main1pr} is better (smaller) than the reduced one \eqref{eq:main1hom}, while the latter is easier to compute. They are not generally equal. Second, 
in case $\mu$ admits a decreasing density, Proposition 8.32 of \cite{MFE15} (reformulated from  \citet[Theorem 3.4]{WPY13}) gives
$$ \sup_{\nu \in \Lambda_n(\mu) }q_t^+ (\nu)  =  \frac{n}{1-t-n\alpha}\int_{t+(n-1)\alpha}^{1-\alpha} q_u^{-}(\mu) \d u
$$
for some $\alpha \in [0,(1-t)/n)$. Together with \eqref{eq:main1pr}, we get the sharpness of \eqref{eq:main1hom}.

Since quantiles commute with strictly increasing transforms,
Theorem \ref{th:qa-4} leads to a multiplicative version of the convolution bounds,
which can be useful for some applications, in particular, the O-ring theory  in Section \ref{sec:o-ring} and Section \ref{sec:OR}.
Recall that for any random variable $X $ following distribution $\mu$ and any Borel function $f$, the random variable $f(X)$ has distribution $\mu\circ f^{-1}$ where  $ f^{-1}$ is the set-valued inverse of $f$. 
%
\begin{proposition}\label{th:multiply}
	For $\mu_1, \cdots, \mu_n \in \M$ with support included in $(0, \infty)$, we have
	\begin{equation}\label{eq:main1pr:multiply}
		\sup_{X_i \sim \mu_i, i = 1, \cdots, n } q^+_{t} \left( \prod_{i = 1}^n X_i \right) \le  \exp \left (  \inf_{\boldsymbol \beta\in (1-t)\Delta_n}  \sum_{i=1}^n R_{\beta_i,\beta_0 } \left( \mu_i\circ \exp\right) \right ), ~~ t\in [0,1).
	\end{equation}
	Moreover, \eqref{eq:main1pr:multiply} holds as an equality in the following cases (denote by $f_1, \dots, f_n$ the densities of $\mu_1, \dots, \mu_n$):
	\begin{enumerate}
		\item[(i)] $n \leq 2$;
		\item[(ii)]
		for each $i = 1, \cdots, n$,  $x\mapsto xf_i(x)$  is decreasing beyond the $t$-quantile of $\mu_i$;
		\item[(iii)]
		for each $i = 1, \cdots, n$, $x\mapsto xf_i(x)$ is increasing beyond the $t$-quantile of $\mu_i$;
		\item[(iv)] $\sum_{i=1}^n \mu_i\left(q^+_{t}(\mu_i),q^-_{1}(\mu_i)\right] \le 1-t$;
		\item[(v)] $\sum_{i=1}^n \mu_i  \left[ q^+_{t}(\mu_i), q^-_{1}(\mu_i) \right)  \le 1-t$;
	\end{enumerate}
\end{proposition}

\subsection{Technical discussions}\label{sec:technical}

We do not expect that the formula \eqref{eq:main1pr} always gives sharp bounds, and this is a situation similar to Theorem \ref{th:qa-4prime}. A counter-example of non-sharpness of the bounds in Theorem \ref{th:qa-4} is presented in Section \ref{sec:num_var} with some discrete marginal distributions (see also Example \ref{ex:ex1} in Appendix \ref{app:A}). Nevertheless, in most cases, the bounds in Theorem \ref{th:qa-4} work quite well, as illustrated by the numerical examples later. 
In some special cases, the  reduced bounds in Proposition \ref{prop:reduced_bound} are equivalent to those in Theorem \ref{th:qa-4}. We shall show this does not generally hold (e.g., for some distribution with increasing density) later in Figure \ref{fig:bound_p} (right panel). 

In the following proposition, we note that $\sup_{\nu \in \Lambda(\boldsymbol \mu) }q_t^+ (\nu)$ is always attainable as a maximum, which is implied by Lemma 4.2 of \cite{BJW14}.
\begin{proposition}\label{prop:nu}
	For $\boldsymbol \mu\in \M^n$ and $t \in [0,1)$,  there exists $\nu_+ \in \Lambda(\boldsymbol \mu)$ such that
	$$
	\sup_{\nu \in \Lambda(\boldsymbol \mu) }q_t^+ (\nu)= q_t^+ (\nu_+).
	$$
\end{proposition}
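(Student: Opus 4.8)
The plan is to show that the supremum in $\sup_{\nu\in\Lambda(\boldsymbol\mu)}q_t^+(\nu)$ is attained, and the natural route is via a compactness argument on $\Gamma(\boldsymbol\mu)$ together with appropriate semicontinuity of the functional $\mu\mapsto q_t^+(\lambda_\mu)$. First I would recall that $\Gamma(\boldsymbol\mu)$, the set of couplings of $\mu_1,\dots,\mu_n$, is tight (each marginal is tight) and closed under weak convergence, hence weakly compact by Prohorov's theorem. Taking a maximizing sequence $\mu^{(k)}\in\Gamma(\boldsymbol\mu)$ with $q_t^+(\lambda_{\mu^{(k)}})\to c:=\sup_{\nu\in\Lambda(\boldsymbol\mu)}q_t^+(\nu)$, we may pass to a subsequence converging weakly to some $\mu^\star\in\Gamma(\boldsymbol\mu)$; then the summation map $(x_1,\dots,x_n)\mapsto x_1+\dots+x_n$ is continuous, so $\lambda_{\mu^{(k)}}\to\lambda_{\mu^\star}$ weakly, and it remains to check that $q_t^+$ does not drop in the limit.

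The key analytic step is the semicontinuity: if $\eta_k\to\eta$ weakly in $\M$, then $\limsup_k q_t^+(\eta_k)\le q_t^+(\eta)$ is \emph{false} in general, but the inequality we actually need points the other way — we need $q_t^+(\lambda_{\mu^\star})\ge \limsup_k q_t^+(\lambda_{\mu^{(k)}})=c$, i.e. upper semicontinuity of $q_t^+$ along this sequence, which does hold. Indeed, for any $x<c$ there are infinitely many $k$ with $q_t^+(\lambda_{\mu^{(k)}})>x$, i.e. $\lambda_{\mu^{(k)}}(-\infty,x]\le t$; choosing a continuity point $x'$ of $\lambda_{\mu^\star}$ with $x<x'<c$ (or passing through points where the portmanteau inequality for closed/open sets applies) gives $\lambda_{\mu^\star}(-\infty,x']\le\liminf_k\lambda_{\mu^{(k)}}(-\infty,x']\le t$ via weak convergence, hence $q_t^+(\lambda_{\mu^\star})\ge x'>x$; letting $x\uparrow c$ yields $q_t^+(\lambda_{\mu^\star})\ge c$. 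Combined with $q_t^+(\lambda_{\mu^\star})\le c$ by definition of the supremum, we get equality, so $\nu_+:=\lambda_{\mu^\star}$ works. Since the excerpt already points to Lemma 4.2 of \cite{BJW14} for this attainability, I would in practice simply invoke that lemma, but the self-contained argument above is the substance.

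The main obstacle is the care needed in handling the right quantile $q_t^+$ near atoms of $\lambda_{\mu^\star}$ and the fact that weak convergence only controls the distribution function at continuity points; one must phrase the portmanteau step using the open half-line $(-\infty,x')$ or a continuity point to get the inequality $\lambda_{\mu^\star}(-\infty,x']\le t$ in the correct direction, rather than naively evaluating distribution functions at $x$ itself. A secondary, more routine point is to confirm that the summation functional is genuinely continuous $\R^n\to\R$ (it is) so that weak convergence of couplings transfers to weak convergence of the aggregate laws. Everything else — tightness of $\Gamma(\boldsymbol\mu)$, closedness under weak limits, extraction of a convergent subsequence — is standard, so I would keep that part brief and cite Prohorov's theorem, then reference \cite{BJW14} for the conclusion.
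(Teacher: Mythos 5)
Your argument is correct and is essentially the paper's route: the paper simply invokes Lemma 4.2 of \cite{BJW14} (adding only a remark on upper semicontinuity of $q_0^+$ for $t=0$), and your weak-compactness/continuity-point argument is exactly the standard content behind that citation, with the advantage of treating all $t\in[0,1)$ uniformly. One cosmetic point: as written, $\lambda_{\mu^\star}(-\infty,x']\le\liminf_k\lambda_{\mu^{(k)}}(-\infty,x']$ is the wrong portmanteau direction for a closed half-line, but since you take $x'$ a continuity point of $\lambda_{\mu^\star}$ (or work with the open set $(-\infty,x')$) the step holds anyway, so nothing breaks.
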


We next turn to the right-hand side of \eqref{eq:main1pr}.
Because of the continuity of $R_{\alpha,\beta}$ in $\alpha,\beta\in [0,1]$, the infimum in  $\inf_{\boldsymbol \beta\in (1-t)\Delta_n}   \sum_{i=1}^n R_{\beta_i,\beta_0 }(\mu_i)$ for any $t \in [0,1)$ is attainable in the closure $\overline{\Delta}_n$ of $\Delta_n$; 
see Appendix \ref{app:attain} for details.

 To address  computational efficiency, 
we first focus on the case of monotone densities which are sufficient for (ii) and (iii) in Theorem \ref{th:qa-4} for any $t$. These two assumptions will be used repeatedly later. \begin{enumerate}[(DD)]
		\item[(DD)]
		each of $\mu_1,\dots,\mu_n$ admits a decreasing density;
		\item[(ID)]
		each of $\mu_1,\dots,\mu_n$ admits an increasing density.
	\end{enumerate}
	Under condition  (DD) or (ID),  we can formally argue that the convolution bound is  easy to compute. For an  illustration, consider the infimum problem in \eqref{eq:main1pr} with the condition (DD); here we take $t=0$ without loss of generality due to Proposition \ref{prop:basic}.   
For a fixed $\beta_0\in (0,1)$, note that the mapping 
\begin{align*}
\phi_i: \beta_i\mapsto  \frac{1}{\beta_0} \int_{1-\beta_i-\beta_0}^{1-\beta_i} q_u^-(\mu_i) \d u 
\end{align*}
is convex, because $u\mapsto q_u^-(\mu_i) $ is convex under (DD)
which implies that $\beta_i\mapsto   q_{1-\beta_i-\beta_0}^-(\mu_i)- q_{1-\beta_i}^-(\mu_i) $ is increasing. 
Therefore,  for fixed $\beta_0$,
$$
(\beta_1,\dots,\beta_n) \mapsto  \sum_{i=1}^n R_{ \beta_i  ,\beta_0 }(\mu_i)  = \sum_{i=1}^n \frac{1}{\beta_0} \int_{1-\beta_i-\beta_0}^{1-\beta_i} q_u^-(\mu_i) \d u 
$$
is 
convex since it is 
the sum of convex functions in each component.  The full optimization can be   converted to an $n$-dimensional convex minimization problem  over $(\beta_1,\dots,\beta_n)$ and 
a one-dimensional problem of optimization over $\beta_0$, which is not necessarily convex. The objective is continuous in $\beta_0$, so that the one-dimensional problem is computable by suitable discrete approximation up to any specified accuracy.
In case (ID) holds, the objective is concave in $(\beta_1,\dots,\beta_n)$,
and its solution   always lies on the boundary of the simplex $(1-\beta_0)\overline{\Delta}_{n-1}$. 
When (DD) and (ID) do not hold, the above optimization may be more complicated, but in our numerical experiments in Section \ref{sec:8}, they are always solved quite fast and produce results that are consistent with other methods.
The convolution bound is also compared with a discrete linear programming  formulation in Appendix \ref{sec:R2-1}, showing its advantages in computational time and feasibility in high dimensions.
The next proposition concerns the truncation of the marginal distributions. When calculating the supremum of $q^+_0$ for the aggregation of non-negative risks,
one can safely truncate the marginal distributions at a high threshold. This result is convenient when applying several results in the literature  formulated for  distributions with finite mean or a compact support, including Theorem 1 of \cite{ELW18}.
For a probability measure $\mu \in \M$ and a constant $m \in \R$, let $\mu^{[m]}$ be the distribution of $X\wedge m$ where $X\sim \mu$ and $x\wedge y$ stands for the minimum of two numbers $x$ and $y$. Further denote that $\boldsymbol{\mu}^{[m]} = (\mu_1^{[m]}, \cdots, \mu_n^{[m]})$ for $\boldsymbol{\mu} = (\mu_1, \cdots, \mu_n) \in \M^n$.

\begin{proposition}\label{prop:1end}
	For any distributions  $\mu_1,\dots,\mu_n$   on $[0,\infty]$, $t\in [0,1)$,
	and $m \ge \sum_{i=1}^n q^+_{1-(1-t)/n}(\mu_i)$,
	we have
	\begin{equation}\label{eq:2}
		\sup_{\nu \in \Lambda(\boldsymbol{\mu})} q^+_t(\nu) =  \sup_{\nu \in \Lambda(\boldsymbol{\mu}^{[m]})} q^+_t(\nu).
	\end{equation}
\end{proposition}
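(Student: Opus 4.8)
The plan is to prove the two inequalities between $v:=\sup_{\nu\in\Lambda(\boldsymbol\mu)}q^+_t(\nu)$ and $v':=\sup_{\nu\in\Lambda(\boldsymbol\mu^{[m]})}q^+_t(\nu)$; set $m_0:=\sum_{i=1}^n q^+_{1-(1-t)/n}(\mu_i)$, and note that if $m_0=\infty$ there is no admissible $m\in\R$, so we may assume $m_0<\infty$ (and all quantities are nonnegative, the marginals being supported on $[0,\infty]$). The inequality $v\ge v'$ requires no lower bound on $m$ and is the soft direction. Given $\nu'\in\Lambda(\boldsymbol\mu^{[m]})$, realize it through a random vector $(Y_1,\dots,Y_n)$ on our atomless space; since $\mu_i^{[m]}$ is stochastically dominated by $\mu_i$, one can, coordinatewise and with independent auxiliary randomization, produce $X_i\sim\mu_i$ with $X_i\ge Y_i$ almost surely while keeping the joint law of $(Y_1,\dots,Y_n)$ unchanged (take $X_i=Y_i$ on $\{Y_i<m\}$ and, on $\{Y_i=m\}$, let $X_i$ follow the conditional law of $\mu_i$ on $[m,\infty]$). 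Then $(X_1,\dots,X_n)$ has marginals $\mu_1,\dots,\mu_n$, so the law of $\sum_i X_i$ lies in $\Lambda(\boldsymbol\mu)$, and $\sum_i X_i\ge\sum_i Y_i$ gives $q^+_t\bigl(\sum_i X_i\bigr)\ge q^+_t\bigl(\sum_i Y_i\bigr)$; taking the supremum over $\nu'$ yields $v\ge v'$.

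For the reverse inequality $v'\ge v$, which is where the hypothesis $m\ge m_0$ is used, I would first establish the elementary bound $v\le m_0\le m$. For any coupling $(X_1,\dots,X_n)$ of $\boldsymbol\mu$ and any $\delta>0$, the definition of the upper quantile gives $\mu_i\bigl(-\infty,\,q^+_{1-(1-t)/n}(\mu_i)+\delta\bigr]>1-(1-t)/n$, so $\p\bigl(X_i>q^+_{1-(1-t)/n}(\mu_i)+\delta\bigr)<(1-t)/n$; summing over $i$, $\p\bigl(\sum_i X_i>m_0+n\delta\bigr)<1-t$, i.e.\ $q^+_t\bigl(\sum_i X_i\bigr)\le m_0+n\delta$, and letting $\delta\downarrow 0$ gives $q^+_t\bigl(\sum_i X_i\bigr)\le m_0$. (Alternatively one may invoke the inequality in \eqref{eq:main1pr} with $\beta_0\downarrow 0$ and $\beta_i\to(1-t)/n$, or the classical Makarov-type bound.) Next, take $\nu_+\in\Lambda(\boldsymbol\mu)$ attaining $v$ (Proposition \ref{prop:nu}), realized by $(X_1,\dots,X_n)$, and put $S=\sum_i X_i$, so $q^+_t(S)=v$ and hence $\p(S<v)\le t$ directly from $q^+_t(\mu)=\inf\{x:\mu(-\infty,x]>t\}$. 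Set $Y_i=X_i\wedge m$ and $S'=\sum_i Y_i$, so $(Y_1,\dots,Y_n)$ realizes some element of $\Lambda(\boldsymbol\mu^{[m]})$. The crux is the inclusion $\{S'<v\}\subseteq\{S<v\}$: if $X_{i_0}>m$ for some $i_0$, then $Y_{i_0}=m$ and, since $Y_j\ge 0$ for all $j$, $S'\ge Y_{i_0}=m\ge v$, contradicting $S'<v$; hence on $\{S'<v\}$ every $X_i\le m$, so $S'=S<v$. Therefore $\p(S'<v)\le\p(S<v)\le t$, and this forces $q^+_t(S')\ge v$, again straight from the definition. Thus $v'\ge q^+_t(S')\ge v$, and combined with the first part, $v=v'$.

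I expect the proof to be short, with the only step carrying real content being the bound $v\le m$; this is exactly why the truncation level must dominate $\sum_{i=1}^n q^+_{1-(1-t)/n}(\mu_i)$ --- once $m$ exceeds the worst-case quantile, truncating at $m$ can no longer affect it. The remaining work is careful bookkeeping with the quantile conventions: I would spell out the two elementary implications $q^+_t(S)=v\Rightarrow\p(S<v)\le t$ and $\p(S'<v)\le t\Rightarrow q^+_t(S')\ge v$ (both immediate from $q^+_t(\mu)=\inf\{x:\mu(-\infty,x]>t\}$ together with continuity from below of the distribution function), and note that the nonnegativity of the marginals is precisely what prevents the truncation of a single coordinate from dragging the whole sum below $m$.
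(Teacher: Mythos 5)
Your proposal is correct and follows essentially the same route as the paper: the easy direction via a stochastic-dominance coupling that lifts $Y_i\sim\mu_i^{[m]}$ to $X_i\sim\mu_i$ with $X_i\ge Y_i$, and the hard direction by showing the worst-case quantile never exceeds $m$ so that coordinatewise truncation at $m$ cannot lower it. The only differences are cosmetic: you prove $\sup_\nu q_t^+(\nu)\le\sum_i q^+_{1-(1-t)/n}(\mu_i)$ by an elementary union bound where the paper cites Corollary 1 of \cite{ELW18}, and you phrase the final comparison as the event inclusion $\{S'<v\}\subseteq\{S<v\}$ rather than via $(\sum_i X_i)\wedge m\le\sum_i(X_i\wedge m)$ together with invariance of $q_t^+$ under truncation above the quantile.
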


\subsection{Quantile aggregation at levels $0$ and $1$}\label{sec:21}

Now we restate the specific cases of quantile aggregation $q^+_0$ and $q^-_1$, where an analogous result to Theorem \ref{th:qa-4} is used; see Appendix \ref{sec:lower}.
\begin{proposition}[Convolution bounds at levels $0$ and $1$] \label{prop:qa-1}
	For $\boldsymbol \mu\in \M^n$, we have
	\begin{equation}\label{eq:main1}\sup_{\nu \in \Lambda(\boldsymbol \mu) }q_0^+ (\nu) \le \inf_{\boldsymbol \beta\in \Delta_n} \sum_{i=1}^n R_{ \beta_i  ,\beta_0 }(\mu_i),\end{equation}
	and
	\begin{equation}\label{eq:main2}\inf_{\nu \in \Lambda(\boldsymbol \mu) } q_1^- (\nu) \ge \sup_{\boldsymbol \beta \in \Delta_n} \sum_{i=1}^n R_{1-\beta_i-\beta_0,\beta_0 }(\mu_i).\end{equation}
	The two bounds are both sharp if $n \le 2$, or  each of $\mu_1,\dots,\mu_n$ admits a decreasing (respectively, increasing) density on its support.
\end{proposition}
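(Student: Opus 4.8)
\textbf{Proof plan for Proposition \ref{prop:qa-1}.}

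The plan is to derive both \eqref{eq:main1} and \eqref{eq:main2} directly from the general quantile bound in Theorem \ref{th:qa-4} by specializing the parameter $t$, and then to obtain sharpness from the corresponding cases of that theorem. For \eqref{eq:main1} I would simply set $t=0$ in \eqref{eq:main1pr}: since $(1-t)\Delta_n = \Delta_n$ when $t=0$, the bound $\sup_{\nu \in \Lambda(\boldsymbol \mu)} q_0^+(\nu) \le \inf_{\boldsymbol\beta\in\Delta_n}\sum_{i=1}^n R_{\beta_i,\beta_0}(\mu_i)$ is immediate. For \eqref{eq:main2} I would pass to the reflected marginals $\tilde\mu_i$, the distribution of $-X_i$ for $X_i\sim\mu_i$, and use the standard identity $q_1^-(\nu) = -q_0^+(\tilde\nu)$ together with the fact that $\nu\mapsto\tilde\nu$ is a bijection between $\Lambda(\boldsymbol\mu)$ and $\Lambda(\tilde{\boldsymbol\mu})$; this turns the infimum over $\Lambda(\boldsymbol\mu)$ of $q_1^-$ into minus the supremum over $\Lambda(\tilde{\boldsymbol\mu})$ of $q_0^+$, to which \eqref{eq:main1} applies. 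One then needs the bookkeeping identity $R_{\beta,\alpha}(\tilde\mu) = -R_{1-\beta-\alpha,\alpha}(\mu)$, which follows from the reflection formula $q_{1-u}^+(\tilde\mu) = -q_u^-(\mu)$ and the definition \eqref{eq:r1}; substituting this and reindexing $\beta_i\leftrightarrow 1-\beta_i-\beta_0$ (which preserves membership in $\Delta_n$ up to the boundary, handled by the closure argument already invoked in the excerpt for the attainment of the infimum in $\overline\Delta_n$) converts the $\inf\sum R_{\beta_i,\beta_0}$ bound into the $\sup\sum R_{1-\beta_i-\beta_0,\beta_0}$ bound, giving \eqref{eq:main2}. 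Alternatively, \eqref{eq:main2} can be read off from the symmetric lower-bound statement (Theorem \ref{th:qa-2}) referenced in Appendix \ref{sec:lower}, but I would prefer the self-contained reflection argument.

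For sharpness I would again reduce to Theorem \ref{th:qa-4}. The case $n\le 2$ is exactly case (i) of that theorem, so \eqref{eq:main1} is sharp; applying the same reflection correspondence (reflection maps $n\le 2$ marginals to $n\le 2$ marginals) gives sharpness of \eqref{eq:main2}. If each $\mu_i$ admits a decreasing density on its support, then in particular it admits a decreasing density beyond its $0$-quantile, so case (ii) of Theorem \ref{th:qa-4} applies at $t=0$ and \eqref{eq:main1} is an equality; reflection sends a decreasing density to an increasing density, so the increasing-density hypothesis for \eqref{eq:main2} follows from case (iii) of Theorem \ref{th:qa-4} via the same transformation, and symmetrically the decreasing-density hypothesis for \eqref{eq:main2} pairs with case (iii) for \eqref{eq:main1}. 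Thus the four sharpness claims (two bounds times two density directions, plus the $n\le 2$ case) all follow from cases (i)--(iii) of Theorem \ref{th:qa-4} combined with the reflection bijection.

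The only genuinely delicate point is the boundary behaviour in the reflection/reindexing step: the $\Delta_n$ used here requires $\beta_0>0$ and $\beta_i<1$, and the substitution $\beta_i\mapsto 1-\beta_i-\beta_0$ together with whether $q_1^-$ equals $q_0^+$ of the reflection when marginals have atoms at their endpoints needs a short continuity argument — precisely the kind of argument already alluded to in the excerpt when it notes the infimum is attained in the closure $\overline\Delta_n$ and that $q^\pm$ agree a.e. I would handle this by passing to $\overline\Delta_n$, noting $R_{\beta,\alpha}$ extends continuously to the closed simplex (with values in $[-\infty,\infty]$, and the relevant combinations never being $\infty-\infty$ under the stated hypotheses), and invoking the remark after Theorem \ref{th:qa-4} on the interchangeability of left and right quantiles when densities are positive. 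Everything else is routine substitution.
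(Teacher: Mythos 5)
Your proposal is correct, and for \eqref{eq:main1} it coincides with the paper's argument: the paper proves Proposition \ref{prop:qa-1} in one line, by taking $t=0$ in Theorem \ref{th:qa-4} and $t=1$ in Theorem \ref{th:qa-2}. Where you genuinely differ is in the second inequality: instead of invoking the lower convolution bound of Theorem \ref{th:qa-2} (stated in Appendix \ref{sec:lower} with its proof omitted as ``symmetric''), you derive \eqref{eq:main2} from \eqref{eq:main1} itself via the reflection $X_i \mapsto -X_i$, using $q_1^-(\nu) = -q_0^+(\tilde\nu)$, the bijection between $\Lambda(\boldsymbol \mu)$ and $\Lambda(\tilde{\boldsymbol \mu})$, and the identity $R_{\beta,\alpha}(\tilde\mu) = -R_{1-\beta-\alpha,\alpha}(\mu)$; the same reflection transfers the sharpness cases (i)--(iii) of Theorem \ref{th:qa-4}, with decreasing and increasing densities swapping roles, which correctly yields all the sharpness claims of the proposition. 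This is exactly the symmetry the paper relies on implicitly when it declares the lower-bound results symmetric and omits their proofs (and it is the same device used inside the paper's proof of case (iii) of Theorem \ref{th:qa-4}, see \eqref{eq:sharp_convert}), so the mathematical content is identical; what your route buys is self-containedness within the main-text Theorem \ref{th:qa-4}, while the paper's route is shorter given that Theorem \ref{th:qa-2} is already on record. Two minor points: no reindexing $\beta_i \leftrightarrow 1-\beta_i-\beta_0$ is actually needed, since $R_{\beta_i,\beta_0}(\tilde\mu_i) = -R_{1-\beta_i-\beta_0,\beta_0}(\mu_i)$ already turns the infimum in \eqref{eq:main1} for $\tilde{\boldsymbol\mu}$ into minus the supremum in \eqref{eq:main2} with the same $\boldsymbol\beta$; and $t=0$ lies in the stated range $t\in[0,1)$ of Theorem \ref{th:qa-4}, so no limiting argument is required there, and your handling of the boundary and possible infinite values is consistent with the paper's own well-posedness discussion in Appendix \ref{app:attain}.
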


If $\mu_1,\dots,\mu_n$ have finite means, the inequalities in \eqref{eq:main1} and \eqref{eq:main2} can be combined into a chain of inequalities.

\begin{proposition}\label{prop:prop1}
	For $\boldsymbol \mu=(\mu_1,\dots,\mu_n)\in \M_1^n$, we have
	\begin{equation}\label{eq:main1p}\inf_{\nu \in \Lambda(\boldsymbol \mu) } q_1^- (\nu) \ge \sup_{\boldsymbol \beta \in \Delta_n} \sum_{i=1}^n R_{1-\beta_i-\beta_0,\beta_0 }(\mu_i) \ge \sum_{i=1}^n R_{0,1}(\mu_i)\ge   \inf_{\boldsymbol \beta\in \Delta_n} \sum_{i=1}^n R_{ \beta_i  ,\beta_0 }(\mu_i) \ge  \sup_{\nu \in \Lambda(\boldsymbol \mu) }q_0^+ (\nu).\end{equation}
\end{proposition}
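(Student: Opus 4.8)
The plan is to prove the four-term chain in \eqref{eq:main1p} by reading it as two outer inequalities coming from Proposition \ref{prop:qa-1} and two inner inequalities which are purely pointwise facts about RVaR of a single distribution. First I would observe that, since $\boldsymbol\mu\in\M_1^n$, each $R_{0,1}(\mu_i)=\E[\mu_i]$ is finite, so the middle term $\sum_{i=1}^n R_{0,1}(\mu_i)$ is well defined, and so is everything else in the chain (the extreme quantities $\inf_{\nu}q_1^-(\nu)$ and $\sup_\nu q_0^+(\nu)$ are well defined in $[-\infty,\infty]$, and the $\Delta_n$-optimizations of finite RVaRs are fine). The leftmost inequality $\inf_{\nu}q_1^-(\nu)\ge \sup_{\boldsymbol\beta}\sum_i R_{1-\beta_i-\beta_0,\beta_0}(\mu_i)$ is exactly \eqref{eq:main2}, and the rightmost inequality $\inf_{\boldsymbol\beta}\sum_i R_{\beta_i,\beta_0}(\mu_i)\ge \sup_\nu q_0^+(\nu)$ is exactly \eqref{eq:main1}; both are given by Proposition \ref{prop:qa-1}, so no work is needed there.

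The two inner inequalities are where the actual argument lies, and both reduce to a one-variable monotonicity statement: for any $\mu\in\M_1$ and any $0\le\beta<\beta+\alpha\le1$, one has $R_{1-\beta-\alpha,\alpha}(\mu)\le R_{0,1}(\mu)\le R_{\beta,\alpha}(\mu)$. Indeed, $R_{\beta,\alpha}(\mu)=\frac1\alpha\int_\beta^{\beta+\alpha}q_{1-t}^+(\mu)\,\d t$ is the average of the quantile function over the sub-interval $[1-\beta-\alpha,1-\beta]$ of $[0,1]$, and since the quantile function $u\mapsto q_u^+(\mu)$ is nondecreasing, averaging it over any sub-interval that sits in the \emph{upper} part dominates the average over the whole of $[0,1]$, which in turn dominates the average over any sub-interval sitting in the \emph{lower} part. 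More carefully: $R_{\beta,\alpha}$ with first index $\beta$ means the tail starts at level $1-\beta-\alpha$ and goes up to $1-\beta$, i.e.\ it always \emph{excludes} the bottom $1-\beta-\alpha$ fraction of mass, hence $R_{\beta,\alpha}(\mu)\ge\ES$ over that interval is at least the full mean; symmetrically $R_{1-\beta-\alpha,\alpha}$ is an average over the bottom portion and is at most the mean. Summing over $i=1,\dots,n$ (which is legitimate because each summand is finite) and then taking the supremum over $\boldsymbol\beta\in\Delta_n$ on the left and the infimum over $\boldsymbol\beta\in\Delta_n$ on the right gives precisely the two inner inequalities $\sup_{\boldsymbol\beta}\sum_i R_{1-\beta_i-\beta_0,\beta_0}(\mu_i)\ge\sum_i R_{0,1}(\mu_i)$ and $\sum_i R_{0,1}(\mu_i)\ge\inf_{\boldsymbol\beta}\sum_i R_{\beta_i,\beta_0}(\mu_i)$.

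I would present the monotonicity lemma first (or cite it if a suitable statement appears in the RVaR preliminaries — e.g.\ that $R_{\beta,\alpha}$ is monotone in its parameters and sandwiches the mean), then assemble the chain term by term. A cosmetic point worth one sentence is that, when $\beta_0$ is strictly positive (as it is on $\Delta_n$) and $\alpha+\beta<1$, each $R_{\beta_i,\beta_0}(\mu_i)$ is automatically finite even without the finite-mean assumption, so the only place $\M_1$ is genuinely used is to make the \emph{middle} term $\sum_i R_{0,1}(\mu_i)=\sum_i\E[\mu_i]$ finite and meaningful; this is exactly why the combined chain needs $\boldsymbol\mu\in\M_1^n$ whereas Proposition \ref{prop:qa-1} does not.

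The main obstacle is essentially notational rather than mathematical: one must be careful with the two different conventions $q^+$ vs.\ $q^-$ and with which interval of levels $R_{\beta,\alpha}$ and $R_{1-\beta-\alpha,\alpha}$ correspond to, so that the direction of the averaging-a-monotone-function inequality comes out correct; once the interval bookkeeping is pinned down (tail starting at $1-\beta-\alpha$ for $R_{\beta,\alpha}$, and at $0$ for $R_{1-\beta-\alpha,\alpha}$), the inequalities are immediate. There is no delicate analytic issue and no appeal to the sharpness machinery — this proposition is purely the ``easy direction,'' a bookkeeping corollary of Proposition \ref{prop:qa-1} plus the elementary fact that an average of a nondecreasing function over an upper sub-interval is at least its global average.
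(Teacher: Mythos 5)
Your outer two inequalities are fine and match the paper exactly (they are \eqref{eq:main2} and \eqref{eq:main1} from Proposition \ref{prop:qa-1}), but the two inner inequalities are where your argument breaks. The ``monotonicity lemma'' you invoke, namely $R_{1-\beta-\alpha,\alpha}(\mu)\le R_{0,1}(\mu)\le R_{\beta,\alpha}(\mu)$ for \emph{all} $0\le\beta<\beta+\alpha\le1$, is false: $R_{\beta,\alpha}(\mu)$ averages the quantile function over the window $[1-\beta-\alpha,1-\beta]$, which excludes not only the bottom $1-\beta-\alpha$ of the mass (as you note) but also the \emph{top} $\beta$ of the mass, so for $\beta$ close to $1$ it is a lower-tail average strictly below the mean (e.g.\ $R_{1-\alpha,\alpha}=\LES_\alpha\le R_{0,1}$). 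The sandwich is valid only at the extremes $\beta=0$ (where $R_{0,\alpha}=\ES_\alpha\ge R_{0,1}$) and $\beta+\alpha=1$ (the $\LES$ case); in fact your two claimed inequalities contradict each other under the substitution $\beta\mapsto 1-\beta-\alpha$, which would force $R_{1-\beta-\alpha,\alpha}(\mu)=R_{0,1}(\mu)$ for every $\mu$ and every admissible $(\beta,\alpha)$. There is also a quantifier reversal: an inequality holding for \emph{every} $\boldsymbol\beta\in\Delta_n$ would bound $\sup_{\boldsymbol\beta}\sum_iR_{1-\beta_i-\beta_0,\beta_0}(\mu_i)$ from \emph{above} and $\inf_{\boldsymbol\beta}\sum_iR_{\beta_i,\beta_0}(\mu_i)$ from \emph{below}, which is the opposite of what \eqref{eq:main1p} asserts; to bound a supremum from below you must exhibit a specific (or limiting) admissible choice of $\boldsymbol\beta$.

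The repair is the paper's short limiting argument. Take $\beta_0=1-n\epsilon$ and $\beta_1=\cdots=\beta_n=\epsilon$ with $\epsilon\in(0,1/n)$, which lies in $\Delta_n$ (the boundary point $\beta_0=1$ is excluded from $\Delta_n$, hence the need for a limit rather than a single evaluation). For this choice,
\begin{equation*}
\sum_{i=1}^n R_{1-\beta_i-\beta_0,\beta_0}(\mu_i)=\sum_{i=1}^n R_{(n-1)\epsilon,\,1-n\epsilon}(\mu_i)
\quad\text{and}\quad
\sum_{i=1}^n R_{\beta_i,\beta_0}(\mu_i)=\sum_{i=1}^n R_{\epsilon,\,1-n\epsilon}(\mu_i),
\end{equation*}
and both converge to $\sum_{i=1}^n R_{0,1}(\mu_i)$ as $\epsilon\downarrow0$, since each $\mu_i$ has a finite mean (the averages of the quantile function over $[\epsilon,1-(n-1)\epsilon]$, respectively $[(n-1)\epsilon,1-\epsilon]$, converge to the full average over $[0,1]$ by monotone convergence). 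Therefore the supremum over $\Delta_n$ is at least this limit and the infimum is at most it, which are exactly the two middle inequalities in \eqref{eq:main1p}. This is also where the assumption $\boldsymbol\mu\in\M_1^n$ is genuinely used, as you anticipated, but through this convergence step rather than through any pointwise sandwich.
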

The tuple of distributions $\boldsymbol \mu \in \M^n$ is said to be \emph{jointly mixable} (JM) if $\delta_C\in  \Lambda(\boldsymbol \mu ) $ for some $C\in \R$; see Appendix \ref{app:3}. 
Proposition \ref{prop:prop1} implies that \eqref{eq:main1} and \eqref{eq:main2} become sharp if $\boldsymbol \mu \in \M_1^n$ is JM. If $\mu_1,\dots,\mu_n$  do not have finite means, the relationships
in \eqref{eq:main1p} may not hold generally, which is illustrated by Example \ref{ex:cauchy} in Appendix \ref{app:A}.

\section{Approximation of the extremal dependence}\label{sec:new6}

A significant  advantage of the convolution bounds on the quantile aggregation problem is that we are able to visualize, in certain cases, the extremal dependence structure corresponding to the convolution bounds. In view of Proposition \ref{prop:basic}, for the problems of worst-case quantile aggregation, it suffices to consider the one at quantile level 0, i.e., $\sup_{\nu \in \Lambda(\boldsymbol \mu) }q_0^+ (\nu)$. Similarly, for the problems of the best-case quantile aggregation, it suffices to consider the one at quantile level 1, i.e., $\inf_{\nu \in \Lambda(\boldsymbol \mu) }q_1^- (\nu)$ as in   Proposition \ref{prop:basic_sym}.
The supremum and the infimum can be replaced by a maximum and a minimum, respectively, as implied by Proposition \ref{prop:nu}.

 We will  describe a   dependence structure, which approximately solves      $\max_{\nu \in \Lambda(\boldsymbol \mu) }q_0^+ (\nu)$ and  $\min_{\nu \in \Lambda(\boldsymbol \mu) } q_1^- (\nu)$ in certain cases.
If the marginal distributions all have decreasing densities as in Theorem \ref{th:qa-4} (ii) and Proposition \ref{prop:qa-1}, then this dependence structure precisely attains both the maximum and the minimum above.

\subsection{Extremal dependence structures: Monotone densities
}\label{sec:description}

We first focus on the case of monotone densities. 
To describe the optimal dependence structure,
we divide the sample space $\Omega$ into $(n+1)$ disjoint events
$A_1,\dots,A_n,B$; in other words,
$\Omega = A_1 \cup \cdots \cup A_n \cup B$. These sets have the following interpretations:
\begin{enumerate}
	\item[$(B)$] ``body": the event that all individual random variables take the ``medium value" of their distributions and the sum of them is a constant;
	\item[$(A_i)$] ``$i$-th right tail": the event that the $i$-th individual random variable takes a ``large value" and the other $(n-1)$ random variables take ``small values".
\end{enumerate}	

Intuitively, the dependence structure is summarized as ``joint mix''  (\cite{WW16}; see Appendix \ref{app:3}) and ``(approximate) mutual exclusivity".
Moreover, $A_i$ is a tail event of the $i$-th random variable $X_i$ by \cite{WZ21}.
The above dependence structure  is not completely specified, as one further needs to
 properly specify what we meant by ``large value", ``medium value" and ``small value", and on each event how the random variables are constructed and dependent.
 Unfortunately, it is in general not possible to provide an analytical description, if the marginal distributions are heterogeneous. In the homogeneous case with a decreasing density, an analytical description is possible, as discussed in \cite{WW11}.
More explicit formulas of this dependence structure will be discussed in Section \ref{sec:6}.

An optimal  structure for the problem of both $\max_{\nu \in \Lambda(\boldsymbol \mu) }q_0^+ (\nu)$ and $\min_{\nu \in \Lambda(\boldsymbol \mu) } q_1^- (\nu)$ admits the above dependence structure  when (DD) holds, as observed by \cite{JHW16}; a formal and more general  result on this observation is   Theorem \ref{th:qa-5} below.
The optimality comes from a result of \cite{JHW16} where it is shown that
  the sum under this dependence structure is the minimum with respect to convex order given marginal distributions.
 Moreover, this dependence structure leads to an approximation to
optimality in many relevant situations that (DD) does not hold; some numerical results will be shown in Section \ref{sec:performance}.

We note that the optimal structure for $\max_{\nu \in \Lambda(\boldsymbol \mu) }q_0^+ (\nu)$ or $\min_{\nu \in \Lambda(\boldsymbol \mu) } q_1^- (\nu)$ is not unique in general, and in this section we only describe one such candidate. In all our follow-up discussions, we will focus on this candidate.

For now, assume that all marginal distributions have decreasing densities, i.e.,  (DD) holds.
Our method of convolution bounds allows us to determine the existence of the above events $A_1,\dots,A_n, B$ from the optimizing  vector $\boldsymbol{\beta}  $. We explain this below.
For $\boldsymbol \mu=(\mu_1,\dots,\mu_n)\in \M^n$ and $\boldsymbol \beta =(\beta_0,\beta_1,\dots,\beta_n)\in \overline{\Delta}_n$, we denote by
\begin{equation}\label{eq:R+}
	R^{+}_{\boldsymbol \beta}(\boldsymbol \mu) =\sum_{i=1}^n R_{ \beta_i  ,\beta_0 }(\mu_i).
\end{equation}
The bound  \eqref{eq:main1pr} is sharp in our setting by Theorem \ref{th:qa-4}.
Suppose that $\boldsymbol \beta \in \overline{\Delta}_n$ is the optimizer to \eqref{eq:main1pr}, that is,
$$\max_{\nu \in \Lambda(\boldsymbol \mu) }q_0^+ (\nu) = \inf_{\boldsymbol \beta'\in \Delta_n}   R^{+}_{\boldsymbol \beta'}(\boldsymbol \mu)= R^{+}_{\boldsymbol \beta}(\boldsymbol \mu).$$
 We describe  a  classification on the existence of $A_1,\dots,A_n,B$ based on the obtained value of $\boldsymbol \beta$.
\begin{enumerate}[  1.]
	\item[  1.] If $\beta_1 = \cdots =\beta_n = 0$, then the optimal dependence structure is ``a full joint mix"; that is, the individual random variables add up to a constant on the whole probability space. Only the event $B$ occurs; all   events $A_i$ are of zero probability.
	
	\item[  2.] If $\beta_i \neq 0$ for $i \in I$ where the index set $I$ is a non-empty proper subset of $\{1, \cdots, n\}$, then 
	the events of ``body" and ``$i$-th right tail" occur; i.e., the possible events are $\left\{B,  A_i: i \in I \right\}$.
	
	\item[  3.] If $\beta_i \neq 0$ for  all $i = 1, \cdots, n$, then 
	all $(n+1)$ events occur; i.e., the possible events are $\left\{B, A_1, \cdots, A_n \right\}$.
\end{enumerate}
To show the above classification statement, note that  $\beta_i$ indicates the maximum value  $i$-th random variable $X_i$ takes  on the event $B$. More precisely, the largest value of $X_i$ takes on $B$ is $q_{1-\beta_i}(\mu_i)$. Hence, $\beta_i=0$ means that there is no ``large" values of $X_i$ that is  considered as a ``tail", and thus $A_i$  does not occur.
  The quantile function of the corresponding sum is illustrated in an example by the left panel of Figure \ref{fig:quantileplot}.
\begin{figure}[t]
	\centering
	\caption{Quantile functions for the sum.  Left panel:  decreasing densities ($n = 3$, quantile functions are $\frac{6}{5} r(t)$, $\frac{4}{5} r(t)$ and $\frac{4}{5} r(t)$, where $r(t) = -\log(\epsilon + (1-\epsilon) (1-t)), t \in [0, 1]$) and $\epsilon =0.0001$); Right panel: increasing densities ($n = 3$, quantile functions are $-\frac{6}{5} r(1-t)$, $-\frac{4}{5} r(1-t)$ and $-\frac{4}{5} r(1-t)$, $t \in [0, 1]$.).  The events $A_1, \cdots, A_n, B$ are described in Theorem \ref{th:qa-5}.
	}
	\includegraphics[width=0.43\textwidth]{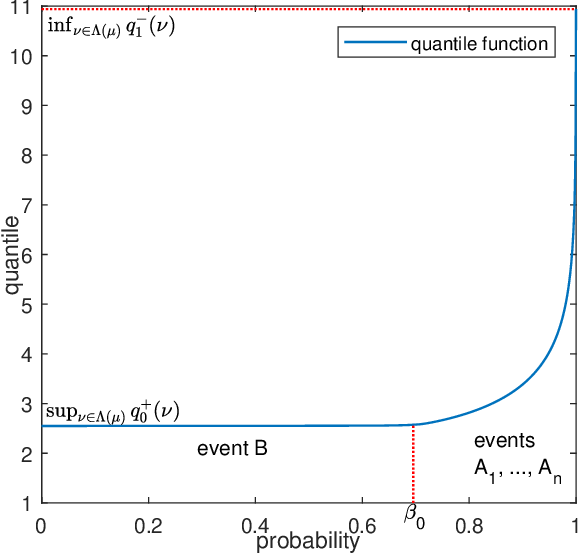}~~
	\includegraphics[width=0.43\textwidth]{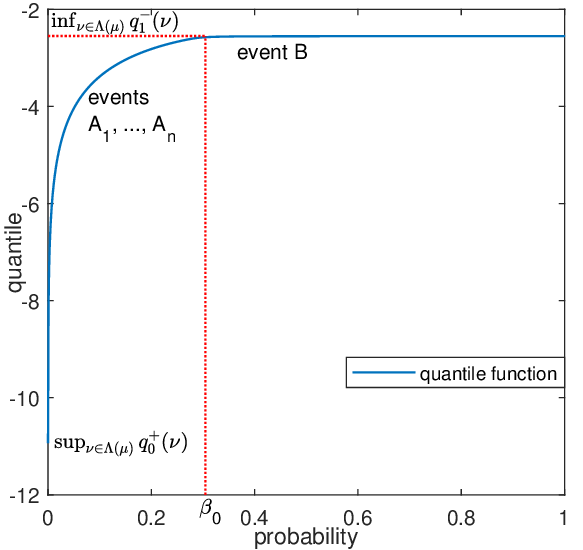}
	\label{fig:quantileplot}
\end{figure}

For a general $t\in (0,1)$, to build a corresponding dependence structure for $\max_{\nu \in \Lambda(\boldsymbol \mu) }q_t^+ (\nu)$, we need to
build the above  events $B$ and $A_1,\dots,A_n$ for the conditional distributions $\mu_1^{t+},\dots,\mu_n^{t+}$. These events will take up probability $1-t$ in total. The remaining event $C$ has probability $t$, which can specified as
\begin{enumerate}[($C$)]
	\item[($C$)] ``feet": the event that all individual random variables take   small values below their $t$-quantile.
	\end{enumerate}
Conditional on the event $C$, the dependence structure of $(X_1,\dots,X_n)$ no longer matters, as it does not contribute to the quantile of the sum.
For the optimality, it suffices to require (DD) to hold for  $\mu_1^{t+},\dots,\mu_n^{t+}$.
 Similarly we can deal with the case of $\min_{\nu \in \Lambda(\boldsymbol \mu) } q_t^- (\nu)$.

The above discussions also apply to the setting with (ID) in place of (DD) by replacing each involved random variable   $X_i$  with its negation $-X_i$. In this case, a similar dependence structure can be constructed based on an event $B$ of ``body" and $n$ events $A_1,\dots,A_n$ of ``left tail". An example is presented in the right panel of Figure \ref{fig:quantileplot}. We omit the details here.

In general, if (DD) and (ID) do not hold, but the densities are approximately increasing or decreasing, then we can still use the above construction, and obtain an approximately optimal structure. This will be discussed next.

\subsection{The general case and an approximation}\label{sec:6}


 Section \ref{sec:description} contains a   description of a class of dependence structures that leads to the optimized value of  $\max_{\nu \in \Lambda(\boldsymbol \mu) }q_t^+ (\nu)$ under the assumption (DD) or (ID).
In this section, we discuss more on this class of dependence structures and show that some further specifications may be used as an approximation for the cases without (DD) and (ID).

In the following, for $0 \le \alpha < \beta \le 1$
and any  probability  measure $\mu$, we  let $\mu^{[\alpha, \beta]}$ be the probability measure given by
$$
\mu^{[\alpha, \beta]}(-\infty,x]=\frac{\left(\min\left\{\mu(-\infty,x], \beta\right\} -\alpha\right)_+}{\beta-\alpha},~~x\in \R.
$$
Equivalently, $\mu^{[\alpha, \beta]}$ is the distribution measure of the random variable $q_V (\mu)$ where $V \sim \mathrm U[\alpha, \beta]$, a uniform random variable on $[\alpha, \beta]$. In particular, $\mu^{[\alpha, 1]} = \mu^{\alpha+}$ is the $\alpha$-tail distribution   of $\mu$ in Section \ref{sec:rvar}.

We say that a random vector $(X^*_1,\dots,X^*_n)$ attains the maximum of   $q_0^+$ for $\boldsymbol \mu=(\mu_1,\dots,\mu_n)\in \M^n$ if $X_1^*\sim \mu_1,\dots,X_n^*\sim \mu_n$ and $q_0^+(X_1^*+\dots+X_n^*)=\max_{\nu \in \Lambda(\boldsymbol \mu) }q_0^+ (\nu)$.
The existence of the maximizer $\nu_+ \in \Lambda(\boldsymbol \mu)$ is guaranteed by Proposition \ref{prop:nu}.

Next, we introduce a special class of dependence structures in a way similar to Section \ref{sec:description}.
Fix $\boldsymbol \beta=(\beta_0,\beta_1,\dots,\beta_n)\in \overline{\Delta}_n$ and  $\boldsymbol \mu=(\mu_1,\dots,\mu_n)\in \M^n$. Let the random vector $(X_1^*,\dots,X_n^*)$ satisfy
\begin{equation}
	\label{eq:opt_struc}
	\left\{\begin{array}{l}
		X_i^* =  Z_i \id_{A_i} + W_i \id_{B^c\backslash A_i} + Y_i \id_{B}, ~ i = 1,\cdots,n,
		\\
		\mbox{where $(A_1,\dots,A_n, B)$ is a partition of $\Omega$ independent of all others, and  $\p(A_i)= \beta_i$ for each $i$,}
		\\\mbox{ $Z_i \sim \mu_i^{[1-\beta_i, 1]}$, $W_i \sim \mu_i^{[0,1-\beta_0-\beta_i]}$, $Y_i \sim \mu_i^{[1-\beta_0-\beta_i, 1-\beta_i]}$, and $\sum_{i=1}^n Y_i = R^{+}_{\boldsymbol \beta}(\boldsymbol \mu)$ almost surely}.
	\end{array}\right.
\end{equation}
The existence of $(X_1^*,\dots,X_n^*)$  satisfying \eqref{eq:opt_struc} requires some conditions, which will be clear from Theorem \ref{th:qa-5} below.
The construction in \eqref{eq:opt_struc}  is not unique.
In particular,   the dependence among $(Z_1,\dots,Z_n,W_1,\dots,W_n)$ is not specified.
A specification may be given by
\begin{equation}
	\label{eq:opt_explic}
	\left\{\begin{array}{l}
		X_i^* = q_{1-\frac{\beta_i}{1-\beta_0}U}^{-}(\mu_i) \id_{\{U \in [0, 1-\beta_0),K = i\}} + q_{\frac{1-\beta_0-\beta_i}{1-\beta_0}U}^{-}(\mu_i) \id_{\{U \in [0, 1-\beta_0),K \neq i\}} + Y_i \id_{\{U \in [1-\beta_0, 1]\}}\\
		\mbox{for each $i =1,\cdots,n$, }
		\mbox{where $U \sim \mathrm U[0,1]$, $\p(K=i) = \frac{\beta_i}{1-\beta_0}$ for  $i=1,\dots,n$,}\\
		\mbox{the random vector $(Y_1,\dots,Y_n)$ is coupled by \eqref{eq:opt_struc}, and $U,K,(Y_1,\dots,Y_n)$ are independent.}
	\end{array}\right.
\end{equation}

\begin{theorem}\label{th:qa-5}
	Suppose that $\boldsymbol \mu=(\mu_1,\dots,\mu_n)\in \M^n$
	and
	$
	\max_{\nu \in \Lambda(\boldsymbol \mu) }q_0^+ (\nu) =  R^{+}_{\boldsymbol \beta}(\boldsymbol \mu)
	$
	for some $\boldsymbol \beta \in \overline{\Delta}_n$.
	There exists a random vector $(X_1^*,\dots,X_n^*)$ of the form \eqref{eq:opt_struc} attaining the maximum of $q_0^+$ for $\boldsymbol \mu$. Moreover, if $\beta_0 = 1$, then $\boldsymbol{\mu}$ is jointly mixable; if $\beta_0 \neq 1$, $\beta_1,\cdots,\beta_n > 0$ and  the minimum  of each of the functions $h_i: (0,1-\beta_0]\to \R$,
	\begin{equation}\label{eq:h}
		h_i(u) = q_{1-\frac{\beta_i}{1-\beta_0}u}^{-}(\mu_i) + \sum_{j\ne i} q_{\frac{1-\beta_0-\beta_j}{1-\beta_0}u}^{-}(\mu_j), ~~ i = 1, \cdots, n,
	\end{equation}
	is attained at $u=1-\beta_0$, then $(X_1^*,\dots,X_n^*)$ in \eqref{eq:opt_explic} attains the maximum of $q_0^+$ for $\boldsymbol \mu$. 
\end{theorem}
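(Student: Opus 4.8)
The plan is to verify the two claims separately — first the existence of a maximizer of the abstract form \eqref{eq:opt_struc}, then the more explicit statement that \eqref{eq:opt_explic} works under the stated minimality condition on the $h_i$'s — using Proposition \ref{prop:basic} to reduce everything to the $q_0^+$ aggregation, and the structural results behind Theorem \ref{th:qa-4} (in particular the extremal couplings used to prove sharpness of the convolution bound). First I would handle the degenerate case $\beta_0=1$: then $\boldsymbol\beta = (1,0,\dots,0)$, so $R^+_{\boldsymbol\beta}(\boldsymbol\mu) = \sum_i R_{0,1}(\mu_i) = \sum_i \E[\mu_i]$, and the hypothesis $\max_\nu q_0^+(\nu) = \sum_i \E[\mu_i]$ together with the chain in Proposition \ref{prop:prop1} (applied after truncation via Proposition \ref{prop:1end} to reduce to the finite-mean case, or directly if means are finite) forces the essential infimum of the aggregate to equal its mean, which means $\boldsymbol\mu$ is jointly mixable; the constant random vector realizing $\delta_C \in \Lambda(\boldsymbol\mu)$ is then visibly of the form \eqref{eq:opt_struc} with $A = \emptyset$, $\sum_i Y_i \equiv C = R^+_{\boldsymbol\beta}(\boldsymbol\mu)$.

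For the generic case $\beta_0 \neq 1$, the key is to unpack what the form \eqref{eq:opt_struc} is doing: on the event $A_i$ (probability $\beta_i$) component $i$ sits in its own upper tail $\mu_i^{[1-\beta_i,1]}$; on $A\setminus A_i$ (probability $\beta_0+\beta_i$... wait, $\p(A)-\p(A_i) = \sum_j\beta_j - \beta_i = 1-\beta_0-\beta_i$) it sits in its lower block $\mu_i^{[0,1-\beta_0-\beta_i]}$; and on $A^c$ (probability $\beta_0$) it sits in the middle block $\mu_i^{[1-\beta_0-\beta_i,1-\beta_i]}$, where the middle blocks are coupled to sum to the constant $R^+_{\boldsymbol\beta}(\boldsymbol\mu)$ — this last requirement is exactly joint mixability of $(\mu_1^{[1-\beta_0-\beta_1,1-\beta_1]},\dots,\mu_n^{[1-\beta_0-\beta_n,1-\beta_n]})$ with mixing constant $R^+_{\boldsymbol\beta}(\boldsymbol\mu)$, since $\frac1{\beta_0}\int_{1-\beta_0-\beta_i}^{1-\beta_i} q_u^-(\mu_i)\,du = R_{\beta_i,\beta_0}(\mu_i)$ and these sum to $R^+_{\boldsymbol\beta}(\boldsymbol\mu)$. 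I would argue that such a JM coupling must exist because the bound \eqref{eq:main1} is attained at $\boldsymbol\beta$: by Proposition \ref{prop:nu} there is a maximizing $\nu_+$, and tracking through the proof of sharpness in Theorem \ref{th:qa-4} (which ultimately rests on a JM-type construction in the relevant tail/middle regions) shows the optimal coupling has precisely the block structure of \eqref{eq:opt_struc}; on $A^c$ one checks $X_1^*+\dots+X_n^* = R^+_{\boldsymbol\beta}(\boldsymbol\mu)$, while on $A_i$ one uses that the contribution of the other components' lower blocks plus $\mu_i$'s upper tail is at least $R^+_{\boldsymbol\beta}(\boldsymbol\mu)$, giving $q_0^+(S) \ge R^+_{\boldsymbol\beta}(\boldsymbol\mu)$ and hence equality by \eqref{eq:main1}.

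For the explicit construction \eqref{eq:opt_explic}, the point is that it is a concrete realization of \eqref{eq:opt_struc} in which the roles are driven by a single uniform $U$ and an independent selector $K$: on $\{U \ge 1-\beta_0\}$ we are in $A^c$ (the middle blocks $Y_i$, still coupled by \eqref{eq:opt_struc}); on $\{U < 1-\beta_0, K=i\}$ we are in $A_i$; and on $\{U<1-\beta_0,K\ne i\}$ in $A\setminus A_i$. One must check the marginals: conditional on $K=i$, component $i$ equals $q^-_{1-\frac{\beta_i}{1-\beta_0}U}(\mu_i)$ with $\frac{\beta_i}{1-\beta_0}U \sim \mathrm U[0,\beta_i]$ restricted to $U<1-\beta_0$, i.e.\ the top $\beta_i$ quantile-mass; combined over all three regimes the component's law is $\mu_i$ — a routine quantile-transform bookkeeping. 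The substantive step is verifying $q_0^+(S) = R^+_{\boldsymbol\beta}(\boldsymbol\mu)$: on $\{U\ge 1-\beta_0\}$, $S = \sum Y_i = R^+_{\boldsymbol\beta}(\boldsymbol\mu)$; on $\{U<1-\beta_0, K=i\}$, $S = h_i(U)$ with $h_i$ as in \eqref{eq:h}, so $S \ge \min_{(0,1-\beta_0]} h_i$, and the hypothesis that this minimum is attained at $u = 1-\beta_0$ gives $\min h_i = h_i(1-\beta_0) = q^-_{1-\beta_i}(\mu_i) + \sum_{j\ne i} q^-_{1-\beta_0-\beta_j}(\mu_j)$, which equals (by continuity matching the endpoints of the blocks and the JM identity on the middle) $R^+_{\boldsymbol\beta}(\boldsymbol\mu)$; hence $S \ge R^+_{\boldsymbol\beta}(\boldsymbol\mu)$ a.s., so $q_0^+(S) \ge R^+_{\boldsymbol\beta}(\boldsymbol\mu)$, and the reverse inequality is \eqref{eq:main1}.

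The main obstacle I anticipate is establishing the existence of the jointly-mixable coupling of the middle blocks $\mu_i^{[1-\beta_0-\beta_i,1-\beta_i]}$ with the exact constant $R^+_{\boldsymbol\beta}(\boldsymbol\mu)$ — this is not automatic from joint mixability alone, and I expect it has to be extracted from the internal structure of the proof of sharpness of Theorem \ref{th:qa-4} rather than cited as a black box; in the explicit version \eqref{eq:opt_explic} the analogous difficulty is packaged into the hypothesis that $h_i$ attains its minimum at the right endpoint, so there the burden shifts to checking the endpoint value $h_i(1-\beta_0)$ really equals $R^+_{\boldsymbol\beta}(\boldsymbol\mu)$, which requires the boundary-matching/JM identity for the middle blocks. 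A secondary technical nuisance is handling distributions without finite mean (so that $R_{0,1}$ may be $\pm\infty$), which I would dispatch by the truncation Proposition \ref{prop:1end} in the nonnegative case and by noting that $\beta_0 \ge$ some positive amount keeps all the $R_{\beta_i,\beta_0}$ finite in the generic case.
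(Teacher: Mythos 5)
Your plan for the second half (checking marginals of \eqref{eq:opt_explic}, and the chain ``$S=h_K(U)\ge h_K(1-\beta_0)$ on $\{U<1-\beta_0\}$, $S=\sum_i Y_i=R^{+}_{\boldsymbol\beta}(\boldsymbol\mu)$ on $\{U\ge 1-\beta_0\}$, reverse inequality from \eqref{eq:main1}'') matches the paper's Steps 4--5. But the first claim of the theorem --- that some maximizer has the block form \eqref{eq:opt_struc}, in particular that the middle blocks $\mu_i^{[1-\beta_0-\beta_i,\,1-\beta_i]}$ admit a coupling summing to the constant $R^{+}_{\boldsymbol\beta}(\boldsymbol\mu)$ --- is exactly what you leave unproved: you propose to extract it by ``tracking through the proof of sharpness of Theorem \ref{th:qa-4}''. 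That route does not work in general, because the hypothesis of Theorem \ref{th:qa-5} is only the abstract equality $\max_\nu q_0^+(\nu)=R^{+}_{\boldsymbol\beta}(\boldsymbol\mu)$, which can hold for marginals satisfying none of the sharpness conditions (i)--(v) of Theorem \ref{th:qa-4}; the constructions behind those sharpness cases (e.g.\ the decreasing-density construction of Jakobsons--Han--Wang) are simply unavailable here. The paper instead proves the structure directly from the equality: take an arbitrary maximizer $\nu_+$ (Proposition \ref{prop:nu}) realized by $(X_1^*,\dots,X_n^*)$ with $\sum_i X_i^*\ge q_0^+(\nu_+)$ a.s.; equality in $q_0^+(\nu_+)\le R_{1-\beta_0,\beta_0}(\nu_+)\le\sum_i R_{\beta_i,\beta_0}(\mu_i)$ forces an atom of mass at least $\beta_0$ at the essential infimum; then, applying $\ES_{\beta_0}$-subadditivity to the truncated variables $T_i=X_i^*\id_{A_i^c}+m\id_{A_i}$ and invoking the characterization of equality in ES subadditivity via common tail events, the ``body'' events $\{q^-_{1-\beta_0-\beta_i}(\mu_i)\le X_i^*\le q^-_{1-\beta_i}(\mu_i)\}$ are shown to coincide and the sum to be constant there; finally, the first-order conditions \eqref{eq:foc} for $\boldsymbol\beta$ (which is a minimizer of the convolution bound, since the hypothesis plus \eqref{eq:main1} pins it down) yield disjointness of the upper-tail events $A_i$ by a contradiction argument with the lower-tail events. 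None of these steps is in your proposal, and you yourself flag this as the main obstacle, so the existence claim (and with it the coupling of $(Y_1,\dots,Y_n)$ that \eqref{eq:opt_explic} relies on) remains a genuine gap.

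A second, smaller gap: you justify $h_i(1-\beta_0)=R^{+}_{\boldsymbol\beta}(\boldsymbol\mu)$ ``by continuity matching the endpoints of the blocks and the JM identity on the middle''. Constancy of $\sum_j Y_j$ only gives $\esssup Y_i+\sum_{j\ne i}\essinf Y_j\le R^{+}_{\boldsymbol\beta}(\boldsymbol\mu)$, and the quantile conventions go the wrong way for the other bound, so this does not yield the identity. The correct source is again the stationarity of $\boldsymbol\beta$ as a minimizer of $\boldsymbol\beta'\mapsto R^{+}_{\boldsymbol\beta'}(\boldsymbol\mu)$: with $\beta_0\in(0,1)$ and $\beta_i>0$, \eqref{eq:foc} reads $R^{+}_{\boldsymbol\beta}(\boldsymbol\mu)=q^-_{1-\beta_i}(\mu_i)+\sum_{j\ne i}q^-_{1-\beta_0-\beta_j}(\mu_j)=h_i(1-\beta_0)$, which is what Step 5 of the paper uses. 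Incorporating the first-order conditions explicitly (they also drive the disjointness argument above) is necessary to make both halves of your argument go through.
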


Theorem \ref{th:qa-5} gives  useful information on the worse-case dependence structure attaining $\max_{\nu \in \Lambda(\boldsymbol \mu) } q_0^+ (\nu)$ based on our knowledge of minimizer $\boldsymbol{\beta}$. The form \eqref{eq:opt_explic}  gives a specification of all $n$ ``right tail" events: for each $i = 1, \cdots, n$, on the ``$i$-th right tail" event $A_i$, the $(n-1)$ other random variables are all comonotonic, while they are counter-monotonic to the $i$-th individual random variable.    Theorem \ref{th:qa-5} can be applied to arbitrary quantile levels $t$ by considering the conditional distributions  $\mu_1^{t+},\dots,\mu_n^{t+}$.

In the homogeneous case ($\mu_1=\cdots =\mu_n$), the condition for optimality of the dependence structure \eqref{eq:opt_explic} holds for any distribution with a decreasing density if $\beta_0 \neq 1$ and $\beta_1 = \cdots =\beta_n = \frac{1-\beta_0}{n}$. In this case, $h_1 = \cdots = h_n$ on $(0, 1-\beta_0]$. According to Theorem 3.2 and Proposition 3.4 of \cite{BJW14}, $h_1$ is decreasing on $(0, 1-\beta_0]$. Theorem \ref{th:qa-5} then shows that the corresponding measure $\nu_+$ attains the worst-case quantile aggregation. In the heterogeneous case, we give some numerical examples to show the performance of \eqref{eq:opt_explic} in Section \ref{sec:8}.

The dependence structure \eqref{eq:opt_struc} is motivated by the discussions in Section \ref{sec:description} on the setting (DD), and hence it performs well for distributions with approximately decreasing densities. For  the setting (ID), the optimal $\boldsymbol{\beta}$ in \eqref{eq:main1pr} for   $\max_{\nu \in \Lambda(\boldsymbol{\mu})} q_0^+ (\nu)$ is often given by $\beta_0 = 0$, $\beta_i = 1$ for some $i$, $\beta_j = 0$ for other $j \neq i$. Hence, in \eqref{eq:opt_struc}, we have $\p(A_i) = 1$ for some $i$ and $\p(A^c) = \p(A_j) = 0$ for other $j \neq i$. In this case, although Theorem \ref{th:qa-5} holds true, the dependence in \eqref{eq:opt_struc} is completely unspecified. Nevertheless, for  approximately increasing densities, an alternative explicit dependence structure can be similarly designed based on an event $B$ of ``body" and $n$ events $A_1,\dots,A_n$ of ``left tail". We omit the details.

The only unspecified part in  \eqref{eq:opt_explic} is the design of $Y_1,\dots,Y_n$  which add up to a constant. Such random variables are known to exist under some conditions of joint mixability, but they are not easy to explicitly construct  or to simulate except for some very simple cases such as uniform marginal distributions.
Below, we  give an explicit suboptimal  dependence structure as an approximation of  \eqref{eq:opt_explic} without the vector $(Y_1,\dots,Y_n)$:
\begin{equation}\label{eq:opt_notjm}
	X_i^* = \left\{
	\begin{aligned}
		& q_{1-\frac{\beta_i}{1-\beta_0}U}^{-}(\mu_i) \id_{\{K = i\}} + q_{\frac{1-\beta_0-\beta_i}{1-\beta_0}U}^{-}(\mu_i) \id_{\{K \neq i\}}, ~ \text{ if } \beta_0 \neq 1,\\
		& q_{1-\frac{1}{n}U}^{-}(\mu_i) \id_{\{K = i\}} + q_{\frac{n-1}{n}U}^{-}(\mu_i) \id_{\{K \neq i\}}, ~ \text{ if } \beta_0 = 1,
	\end{aligned}
	\right.
	~~~~ i =1,\cdots,n,
\end{equation}
where $U,K$ are given in \eqref{eq:opt_explic} and we further set $\p(K=i)=\frac{1}{n}$, $i=1,\cdots,n$ in case $\beta_0=1$ (i.e., set $\beta_i/(1-\beta_0) = 1/n$). For $(X_1^*,\dots,X_n^*)$  in  \eqref{eq:opt_notjm}, it is easy to see that $X_i^*\sim \mu_i$  for each $i=1,\dots,n$. If $\beta_0 \neq 1$, using $h_i$ in \eqref{eq:h}, the essential infimum of $\sum_{i=1}^n X_i^*$ is given by
\begin{equation}\label{eq:essinf_sum}
	\min_{1 \leq i \leq n} \min_{0 \leq x \leq 1} h_i(x) = \min_{1 \leq i \leq n} \min_{0 \leq x \leq 1}  \left\{q_{1-\frac{\beta_i}{1-\beta_0}x}^{-}(\mu_i) + \sum_{j\ne i} q_{\frac{1-\beta_0-\beta_j}{1-\beta_0}x}^{-}(\mu_j)\right\}.
\end{equation}
Since $X_1^*,\dots,X_n^*$  are obtained by explicit construction,
the above infimum \eqref{eq:essinf_sum} serves as a lower bound for $\sup_{\nu \in \Lambda(\boldsymbol \mu) }q_0^+ (\nu)$. If $\beta_0 \neq 1$,   the first-order condition in the optimality of $\boldsymbol{\beta}$ gives $h_i(1-\beta_0)= R^{+}_{\boldsymbol \beta}(\boldsymbol \mu)$ for $i=1,\cdots,n$ satisfying $\beta_i \neq 0$;   see \eqref{eq:foc} in Appendix \ref{app:proof}.

The dependence structure in \eqref{eq:opt_notjm} has an explicit formula as soon as $\beta_0$ is computed, so it has at most the same computational complexity as computing the convolution bound; see the explanations of the computational issues in the numerical results in Section \ref{sec:8}.

This construction can be further improved as follows. For any $\boldsymbol \beta' \in \Delta_n$, define
$$
H(\boldsymbol \beta') = \min_{1 \leq i \leq n} \min_{0 \leq x \leq 1} h_i(x; \boldsymbol \beta') = \min_{1 \leq i \leq n} \min_{0 \leq x \leq 1}  \left\{q_{1-\frac{\beta_i'}{1-\beta_0'}x}^{-}(\mu_i) + \sum_{j\ne i} q_{\frac{1-\beta_0'-\beta_j'}{1-\beta_0'}x}^{-}(\mu_j)\right\}.
$$
We then solve another $n$-dimensional optimization problem
\begin{equation}\label{eq:problem_another}
	\sup_{\boldsymbol \beta' \in \Delta_n} H(\boldsymbol \beta').
\end{equation}
The maximum point is denoted by $\boldsymbol \gamma=(\gamma_0,\gamma_1,\dots,\gamma_n) \in \overline{\Delta}_n$. 
Hence, we have the suboptimal dependence structure
\begin{equation}\label{eq:opt_supsub}
	X_i^* = \left\{
	\begin{aligned}
		& q_{1-\frac{\gamma_i}{1-\gamma_0}U}^{-}(\mu_i) \id_{\{K = i\}} + q_{\frac{1-\gamma_0-\gamma_i}{1-\gamma_0}U}^{-}(\mu_i) \id_{\{K \neq i\}}, ~ \text{ if } \gamma_0 \neq 1,\\
		& q_{1-\frac{1}{n}U}^{-}(\mu_i) \id_{\{K = i\}} + q_{\frac{n-1}{n}U}^{-}(\mu_i) \id_{\{K \neq i\}}, ~ \text{ if } \gamma_0 = 1,
	\end{aligned}
	\right.
	~~~~ i =1,\cdots,n,
\end{equation}
where $U,K$ are independent, $U \sim \mathrm{U}[0,1]$ and $\p(K=i)=\frac{\gamma_i}{1-\gamma_0}$, $i=1,\cdots,n$ if $\gamma_0 \neq 1$ and $\p(K=i)=\frac{1}{n}$, $i=1,\cdots,n$ if $\gamma_0 = 1$. For $(X_1^*,\dots,X_n^*)$  in  \eqref{eq:opt_supsub}, it is easy to see that $X_i^*\sim \mu_i$  for each $i=1,\dots,n$ and the essential infimum of $\sum_{i=1}^n X_i^*$ is $H(\boldsymbol \gamma)$.

It turns out that \eqref{eq:opt_notjm} gives a good approximation for the maximum value of $q_0^+$ in many cases and \eqref{eq:opt_supsub} does even better. The numerical performance will be illustrated in Section \ref{sec:8}. Note that
\begin{equation}\label{eq:approx_interval}
	H(\boldsymbol \beta) \leq H(\boldsymbol \gamma)\leq \sup_{\nu \in \Lambda(\boldsymbol \mu) }q_0^+ (\nu)  \leq R^{+}_{\boldsymbol \beta}(\boldsymbol \mu).
\end{equation}
As a result, we provide two-side approximation intervals $[H(\boldsymbol{\beta}), R_{\boldsymbol{\beta}}^+(\boldsymbol{\mu})]$ or $[H(\boldsymbol{\gamma}), R_{\boldsymbol{\beta}}^+(\boldsymbol{\mu})]$ for true value of $\sup_{\nu \in \Lambda(\boldsymbol \mu) }q_0^+ (\nu)$. If only $\boldsymbol{\beta}$ is provided, the former interval can be adopted to approximate the worst-case quantile aggregation; if  it is convenient to conduct another optimization \eqref{eq:problem_another}, the latter one would be more accurate in approximation.

\section{Dual formulation}\label{sec:7}
In this section, we investigate the dual formulation of the quantile aggregation problem. In Theorem \ref{th:qa-4}, 
the convolution bound \eqref{eq:main1pr} is obtained by an $n$-dimensional optimization problem.
The main result in this section is that  under continuity conditions  the convolution bound \eqref{eq:main1pr} 
is equal to a dual bound \eqref{eq:n_dim_dual_bound}, with a convenient correspondence between the minimizers of both problems.

The following proposition gives a dual bound  on  quantile aggregation, which is essentially Theorem 4.17 of \cite{R13} that is expressed in terms of probability instead of quantiles. 
\begin{proposition}\label{prop:n_dual}
	For $t \in [0,1)$,  it holds that
	\begin{equation}\label{eq:n_dim_dual_bound}
		\mbox{\rm [dual bound]} \quad~~~~ \sup_{\nu \in \Lambda(\boldsymbol \mu) }q_t^+ (\nu) \le D_n^{-1}(1-t),
	\end{equation}
	where
	$
	D_n^{-1}(\alpha) = \inf \{ x \in \R: D_n(x) < \alpha \},~ \alpha \in (0,1]
	$
	and the function $D_n: \R \to \R$ is defined by
	\begin{equation}\label{eq:D_n}
		D_n(x) = \inf_{\mathbf{r} \in \Delta_n(x)} \left\{ \sum_{i=1}^n \frac{1}{x-r} \int_{r_i}^{x-r+r_i} \mu_i(y, \infty) \d y \right\}, ~ x \in \R,
	\end{equation}
	where
	$\mathbf{r} = (r_1, \cdots, r_n)$, $r = \sum_{i=1}^n r_i$ and
	$\Delta_n(x) = \{(r_1, \cdots, r_n) \in \R^n: \sum_{i=1}^n r_i < x \}$.
\end{proposition}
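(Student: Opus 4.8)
The plan is to derive \eqref{eq:n_dim_dual_bound} as a restatement of a known duality result for the quantile aggregation problem, translating between the ``probability language'' in which that result is usually stated and the ``quantile language'' we need here. Recall that the worst-case quantile aggregation is equivalent to the best-possible lower bound on $\p(S \le x)$; concretely, define
\[
M(x) = \inf_{\nu \in \Lambda(\boldsymbol\mu)} \nu(-\infty, x] = 1 - \sup_{\nu \in \Lambda(\boldsymbol\mu)} \nu(x,\infty),
\]
and observe that $\sup_{\nu\in\Lambda(\boldsymbol\mu)} q_t^+(\nu) \le y$ if and only if $M(y) > t$ fails to hold for all smaller thresholds --- more precisely, $\sup_{\nu} q_t^+(\nu) = \inf\{y : M(y) > t\}$ by the defining relation between quantiles and distribution functions, together with Proposition \ref{prop:nu} which guarantees the supremum is attained. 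Matching this against the stated form $D_n^{-1}(1-t) = \inf\{x : D_n(x) < 1-t\}$, the claim reduces to showing $D_n(x) \ge \sup_{\nu\in\Lambda(\boldsymbol\mu)}\nu(x,\infty)$, i.e. $D_n(x)$ is a valid dual upper bound on the worst-case tail probability at level $x$.

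The core step is therefore to invoke the dual bound of \cite{R13} (Theorem 4.17 there) for $\sup_{\nu\in\Lambda(\boldsymbol\mu)}\nu(x,\infty)$ and check that it coincides with the expression \eqref{eq:D_n}. The classical duality (going back to the mass-transportation / Makarov-type arguments) states that the worst-case probability that $X_1+\dots+X_n > x$ equals an infimum over admissible "dual functions" $f_i$ with $\sum f_i(x_i) \ge \id_{\{\sum x_i > x\}}$; the optimal such functions are piecewise-linear, and after optimizing over the breakpoints one is left with a finite-dimensional minimization over the shift parameters $r_1,\dots,r_n$. I would verify that the summand $\frac{1}{x-r}\int_{r_i}^{x-r+r_i}\mu_i(y,\infty)\,\d y$ is exactly the contribution of the $i$-th marginal to this optimal dual value: the factor $1/(x-r)$ is the common slope of the linear pieces (forced by the constraint that the pieces must glue into an indicator of $\{\sum x_i > x\}$), and the integral $\int_{r_i}^{x-r+r_i}\mu_i(y,\infty)\,\d y$ is $\E[(X_i - r_i)_+ \wedge (x-r)] $ written via the layer-cake/tail-integral identity $\E[(X_i-a)_+\wedge b] = \int_a^{a+b}\mu_i(y,\infty)\,\d y$. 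The constraint set $\Delta_n(x) = \{\mathbf r : \sum r_i < x\}$ is simply the requirement $x - r > 0$ so that the slope is well-defined and positive.

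Concretely the key steps in order are: (1) record the equivalence $\sup_{\nu} q_t^+(\nu) = \inf\{x : \sup_{\nu}\nu(x,\infty) < 1-t\}$, using monotonicity of $x \mapsto \sup_\nu \nu(x,\infty)$ and right-continuity (or a limiting argument via Proposition \ref{prop:nu}); (2) quote the transportation-duality upper bound for $\sup_\nu \nu(x,\infty)$ from \cite{R13}, reducing it to an infimum over piecewise-linear dual functions; (3) perform the reduction of that functional infimum to the finite-dimensional form, identifying the common slope $1/(x-r)$ and the per-marginal integral via the tail-integral identity for $\E[(X_i-a)_+\wedge b]$; (4) conclude $\sup_\nu \nu(x,\infty) \le D_n(x)$, hence $\sup_\nu q_t^+(\nu) \le D_n^{-1}(1-t)$. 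Since the proposition only asserts an inequality (a dual \emph{bound}, not an identity), step (3) only needs the ``easy'' direction --- we feasibly restrict to the parametric family of dual functions and take the infimum over it --- which sidesteps the harder strong-duality argument.

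The main obstacle I anticipate is the bookkeeping in step (3): getting the shift-parameter normalization exactly right so that the piecewise-linear dual functions indeed dominate $\id_{\{\sum x_i > x\}}$ for \emph{every} choice of $\mathbf r \in \Delta_n(x)$, and matching the resulting constant (the ``intercept'' of the affine pieces, which must sum to the indicator's value on the relevant region) against the displayed formula. In particular one must be careful that the integration limits $r_i$ and $x - r + r_i$ are the correct ones (they encode where the $i$-th linear piece switches from its increasing branch to the flat value $1$), and that summing the intercepts over $i=1,\dots,n$ produces precisely the threshold $x$ rather than an off-by-$r$ discrepancy --- this is exactly why the factor $1/(x-r)$ with $r = \sum r_i$, and not $1/x$, appears. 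Once that alignment is checked, the rest is a direct citation of \cite{R13}.
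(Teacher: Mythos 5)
Your proposal is correct and follows essentially the same route as the paper: the paper likewise quotes Theorem 4.17 of \cite{R13} for the bound $\inf_{\nu\in\Lambda(\boldsymbol\mu)}\nu(-\infty,x]\ge 1-D_n(x)$ and then applies the standard inversion argument to pass from tail probabilities to $q_t^+$. Your additional sketch of re-deriving $D_n(x)$ as the value of the feasible piecewise-linear dual functions $f_i(y)=\bigl(\min\{(y-r_i)_+,\,x-r\}\bigr)/(x-r)$ via the identity $\E[(X_i-a)_+\wedge b]=\int_a^{a+b}\mu_i(y,\infty)\,\d y$ is sound (and indeed only needs the easy direction), but it is just an unpacking of the cited result rather than a different argument.
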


Below we always write $\mathbf{r} =(r_1,\dots,r_n)$ and $r = \sum_{i=1}^n r_i$.  
We find that the dual bound \eqref{eq:n_dim_dual_bound} is equal to  our convolution bound \eqref{eq:main1pr} if the marginal distribution and quantile functions are continuous.
\begin{theorem}\label{th:qa-6}
	For fixed $t \in [0,1)$, let $x = D_n^{-1}(1-t)$. Suppose that each of $\mu_1,\dots,\mu_n$ has continuous distribution and quantile functions.
	The convolution bound \eqref{eq:main1pr} and the  dual bound \eqref{eq:n_dim_dual_bound} share the same value $x$.
	Moreover, the correspondence between the minimizers $\boldsymbol \beta \in \overline{\Delta}_n$  of \eqref{eq:main1pr} and $\mathbf{r}$ in the closure of $\Delta_n(x)$ of \eqref{eq:D_n} is given by:
	\begin{equation}\label{eq:corres}
		\mu_i(-\infty,r_i] = 1- \beta_0-\beta_i,~~ \mu_i(-\infty,x-r+r_i] = 1-\beta_i, ~~ i=1,\cdots,n.
	\end{equation}
	%
	%
	%
\end{theorem}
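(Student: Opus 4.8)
\textbf{Proof proposal for Theorem \ref{th:qa-6}.}
The plan is to treat the convolution problem \eqref{eq:main1pr} and the dual problem \eqref{eq:n_dim_dual_bound}--\eqref{eq:D_n} as two coordinate systems for the \emph{same} optimization, with \eqref{eq:corres} the dictionary between the variables $\boldsymbol\beta$ and $(x,\mathbf r)$. Because each $\mu_i$ has continuous distribution and quantile functions, the maps $y\mapsto\mu_i(-\infty,y]$ and $u\mapsto q^-_u(\mu_i)$ are mutual inverses on the relevant ranges, so \eqref{eq:corres} is an unambiguous substitution losing no mass at break-points. The computational workhorse is the integration-by-parts identity: for $0\le\beta_i<\beta_i+\beta_0\le 1$, with $c_i=q^-_{1-\beta_0-\beta_i}(\mu_i)$ and $d_i=q^-_{1-\beta_i}(\mu_i)$,
\begin{equation*}
\beta_0\,R_{\beta_i,\beta_0}(\mu_i)=\int_{1-\beta_0-\beta_i}^{1-\beta_i}q^-_v(\mu_i)\,\d v=\int_{c_i}^{d_i}\mu_i(y,\infty)\,\d y-\beta_i(d_i-c_i)+\beta_0 c_i,
\end{equation*}
obtained from \eqref{eq:r1} by substituting $y=q^-_v(\mu_i)$ and one integration by parts.

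For the inequality ``convolution bound $\le D_n^{-1}(1-t)$'', I would fix any $x$ with $D_n(x)<1-t$. Under the continuity hypotheses the inner infimum in \eqref{eq:D_n} is attained at some $\mathbf r\in\Delta_n(x)$, and its stationarity conditions (differentiating in $r_k$, noting the upper limit $x-r+r_k$ of the $k$th integral does not move while the lower limit does) force $\mu_i(r_i,\,x-r+r_i]$ to equal a common value $\beta_0^{\circ}>0$ for all $i$, with $\beta_0^{\circ}+\sum_{i=1}^n\mu_i(x-r+r_i,\infty)=D_n(x)$. Set $\beta_i:=\mu_i(x-r+r_i,\infty)$ and $\beta_0:=\beta_0^{\circ}+(1-t-D_n(x))$; then $\boldsymbol\beta\in(1-t)\overline{\Delta}_n$, the constraint $\sum_i\beta_i<1-t$ following from $\beta_i\le\frac1{x-r}\int_{r_i}^{x-r+r_i}\mu_i(y,\infty)\,\d y$ by monotonicity of $y\mapsto\mu_i(y,\infty)$. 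Since $\beta_0^{\circ}=\mu_i(r_i,x-r+r_i]$ makes $c_i=r_i$ and $d_i=x-r+r_i$ in the displayed identity, summing it over $i$ and using $\sum_i\beta_i=D_n(x)-\beta_0^{\circ}$ gives $\sum_i R_{\beta_i,\beta_0^{\circ}}(\mu_i)=x$; as $\alpha\mapsto R_{\beta,\alpha}(\mu)$ is non-increasing and $\beta_0\ge\beta_0^{\circ}$, we get $\sum_i R_{\beta_i,\beta_0}(\mu_i)\le x$, so the convolution bound is $\le x$, and infimizing over admissible $x$ proves the claim.

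The reverse inequality and the bijection of minimizers come from the same construction. If some $x<$ (convolution bound) had $D_n(x)<1-t$, the previous step would produce a feasible $\boldsymbol\beta$ with $\sum_i R_{\beta_i,\beta_0}(\mu_i)\le x<$ (convolution bound), a contradiction; hence $D_n^{-1}(1-t)\ge$ (convolution bound), so the two bounds share a common value $x_0$. Continuity of $D_n$ (from continuity of the marginals and dominated convergence) forces $D_n(x_0)=1-t$, so at $x=x_0$ no padding is needed and the $\boldsymbol\beta$ built above is already a minimizer of \eqref{eq:main1pr}, linked to the inner minimizer $\mathbf r$ of \eqref{eq:D_n} exactly by \eqref{eq:corres}. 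Conversely, given a minimizer $\boldsymbol\beta$ of \eqref{eq:main1pr}, its first-order conditions --- \eqref{eq:foc} together with stationarity in $\beta_0$ --- say precisely that $d_i-c_i=q^-_{1-\beta_i}(\mu_i)-q^-_{1-\beta_0-\beta_i}(\mu_i)$ is independent of $i$ and that the common value $h_i(1-\beta_0)$ (in the notation of \eqref{eq:h}) equals $\sum_k R_{\beta_k,\beta_0}(\mu_k)=x_0$; taking $r_i:=q^-_{1-\beta_0-\beta_i}(\mu_i)$ yields \eqref{eq:corres}, and the integration-by-parts identity shows this $\mathbf r$ attains $D_n(x_0)=1-t$, i.e.\ minimizes the inner problem. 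So \eqref{eq:corres} is a bijection between the two sets of minimizers.

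The main obstacle is soft rather than algebraic: I would have to establish that the inner infimum in \eqref{eq:D_n} and the outer minimization in \eqref{eq:main1pr} are attained and that $D_n$ is continuous, and it is exactly here that continuity of the marginal distribution and quantile functions is indispensable, so that $\mu_i(-\infty,\cdot]$ and $q^-(\mu_i)$ invert with no flat or jump segments and the reparameterization preserves every integral. The remaining care concerns the boundary: coordinates $\beta_i=0$ (for which \eqref{eq:corres} degenerates to the inequality $x-r+r_i\ge q^-_1(\mu_i)$, i.e.\ $r_i$ at or beyond the essential supremum), the cases $\beta_0=1-t$ or $q^-_1(\mu_i)=+\infty$, and reconciling the strict ``$<$'' in the definition of $D_n^{-1}$ with the continuity of $D_n$; these are routine but must be spelled out.
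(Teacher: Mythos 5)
Your route is the paper's route in its essentials: first‑order conditions for each optimization, an integration‑by‑parts identity turning $\beta_0 R_{\beta_i,\beta_0}(\mu_i)$ into $\int_{r_i}^{x-r+r_i}\mu_i(y,\infty)\,\d y$ plus linear terms, and the dictionary \eqref{eq:corres}. The genuine gap is that you never prove the inequality $D_n^{-1}(1-t)\le{}$convolution bound. Your second paragraph establishes, correctly, that any $x$ with $D_n(x)<1-t$ dominates the convolution bound, i.e.\ convolution bound $\le D_n^{-1}(1-t)$. Your third paragraph, announced as "the reverse inequality", argues by contradiction that no $x<{}$convolution bound can have $D_n(x)<1-t$ and concludes "$D_n^{-1}(1-t)\ge{}$convolution bound" — but that is the \emph{same} inequality as before, restated in contrapositive form. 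Consequently "the two bounds share a common value $x_0$" is asserted, not proved, and the subsequent claims ("$D_n(x_0)=1-t$", "this $\mathbf r$ attains $D_n(x_0)$") presuppose the missing half. The missing half is what the paper does first: take a minimizer $\boldsymbol\beta$ of \eqref{eq:main1pr} with value $x_1$, put $r_i=q^-_{1-\beta_0-\beta_i}(\mu_i)$, and use \eqref{eq:foc} with the same integration by parts to get $\frac{1}{x_1-r}\sum_i\int_{r_i}^{x_1-r+r_i}\mu_i(y,\infty)\,\d y=1-t$, hence $D_n(x_1)\le 1-t$, and then $D_n^{-1}(1-t)\le x_1$ because $D_n$ is (strictly) decreasing where positive. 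Your "conversely" sentence contains essentially this computation, but you deploy it only to match minimizers after the equality of values has already been assumed; reorganized as above it would close the gap, so the flaw is structural/logical rather than computational, but as written the theorem is not proved.

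A second, smaller set of issues: the "soft" points you defer are exactly where the continuity hypotheses and a real argument are needed, and some are not routine. The strict decrease and continuity of $D_n$ on $\left(-\infty,\sum_i q_1^-(\mu_i)\right)$ — which the paper proves from continuity of the quantile functions — is indispensable both for $D_n(D_n^{-1}(1-t))=1-t$ and for converting $D_n(x_1)\le 1-t$ into $D_n^{-1}(1-t)\le x_1$ under the strict inequality in the definition of $D_n^{-1}$. The outer infimum in \eqref{eq:main1pr} is attained only on $\overline{\Delta}_n$, and at boundary minimizers ($\beta_i=0$, or $\beta_0=0$) the stationarity condition \eqref{eq:foc} becomes an inequality or the construction degenerates to $r=x$; the paper treats these as separate cases with limiting arguments, which your first‑order‑condition argument does not cover. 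Likewise, the inner infimum in \eqref{eq:D_n} is over the unbounded set $\Delta_n(x)$ and need not be attained at an interior stationary point (it may only be approached as $r\to x$), and your assertion that stationarity forces the common mass $\beta_0^{\circ}=\mu_i(r_i,x-r+r_i]$ to be strictly positive is unjustified; if $\beta_0^{\circ}=0$ the division in your summed identity fails, and this degenerate case must be handled separately, as the paper does.
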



As far as we are aware of, there are no sharpness results on the dual bound in the  setting of heterogeneous marginals.
Therefore, our main results on convolution bounds also contribute to the literature by establishing the sharpness of the dual bounds in several situations, as  the convolution bound and the dual bound are usually equal.
Moreover, we note that the convolution bound is applicable to RVaR aggregation problems, whereas the dual bound based on probability is specific to quantile aggregation.
On the computational side, as the set $(1-t)\Delta_n$ is bounded and the set $\Delta_n(x)$ is unbounded, optimization of the convolution bound \eqref{eq:main1pr} is often easier than that of the dual bound \eqref{eq:n_dim_dual_bound}. Moreover,  \eqref{eq:n_dim_dual_bound} needs to additionally compute an inverse function from $D_n$.  Further, the equivalence in Theorem \ref{th:qa-6} may fail for discrete distributions; see Table \ref{tab:r1-1} of Section \ref{sec:8}. 

In the homogeneous case $\mu_1=\dots=\mu_n$, \cite{EP06} derived a (reduced) dual bound for the worst-case quantile aggregation based on a one-dimensional optimization problem:
\begin{equation}\label{eq:dual_expression}
	\mbox{\rm [reduced dual bound]} \quad~~~~ D^{-1}(1-t) = \inf \left\{ x \in \R: D(x) < 1-t \right\},
\end{equation}
where
$$
D(x) = \inf_{a < \frac{x}{n}} \frac{n}{x-na} \int_{a}^{x-(n-1)a} \mu\left((y, \infty)\right) \d y,~~ x \in \R.
$$
This dual bound is a special case of \eqref{eq:n_dim_dual_bound} by letting $r_1=\cdots=r_n$ in \eqref{eq:D_n}. Thus, the reduced dual bound \eqref{eq:dual_expression} is larger than or equal to the dual bound \eqref{eq:n_dim_dual_bound},
as well as our convolution bound \eqref{eq:main1pr} by Theorem \ref{th:qa-6}. Similarly to the discussion in Section \ref{sec:2}, the dual bound \eqref{eq:n_dim_dual_bound} and the reduced one \eqref{eq:dual_expression} are not generally equal. 

Similarly to Theorem \ref{th:qa-6}, one can show that the reduced dual bound \eqref{eq:dual_expression} is the same as the reduced convolution bound \eqref{eq:main1hom} if the marginal distribution and quantile functions are continuous. In Figure \ref{fig:bound_p} (right panel) of Section \ref{sec:8}, we give out examples that \eqref{eq:main1pr} is strictly smaller than \eqref{eq:dual_expression}.



\section{Numerical illustration}\label{sec:8}

In this section, the convolution bounds in Theorems \ref{th:qa-4prime}-\ref{th:qa-4} are computed and compared with the existing bounds by numerical examples, including the dual bound of  \cite{EP06}  and the rearrangement algorithm (RA) of \cite{PR12} and \cite{EPR13}. We give some numerical examples to show the performance of the candidate and suboptimal dependence structures \eqref{eq:opt_explic} and \eqref{eq:opt_notjm} in the heterogeneous case. 

We briefly explain the output of RA. 
If a tuple $ \boldsymbol \mu$ of  marginal distributions is given as quantile functions or distribution functions, then RA involves   discretization of the marginal distributions by $N$ steps, where $N$ is chosen as $10^5$ in our implementations. If the marginal distributions are given as empirical distributions of data, then discretization is not needed.
Running RA on  $\boldsymbol \mu$ returns an interval $[\underline{s}_N, \bar{s}_N]$, whose left and right end-points are close when $N$ is sufficiently large, providing an approximation for $\sup_{\nu \in \Lambda(\boldsymbol \mu) }q_t^+(\nu)$.  The left end-point is always a (numerical) lower bound, whereas the right end-point is not a lower or upper bound. Although RA is a popular algorithm, there is no   guarantee that its produced lower bound converges to the true value of $\sup_{\nu \in \Lambda(\boldsymbol \mu) }q_t^+(\nu)$, and there are no theoretical results on the time for RA to converge.
For the above claims and a detailed explanation on implementing  RA, see \cite{EPR13}.
Consistently with the literature, we treat  the upper value produced by RA as a good approximation of the true value of the worst-case RVaR, although no convergence result is established. 

Next, we explain how we compute the convolution bound, denoted by $B_{\rm conv}$. 
We use the built-in function \texttt{fmincon} in MATLAB to numerically compute $B_{\rm conv}$, where the input are the marginal quantile functions. 
No discretization is needed for this computation, as long as marginal quantiles can be specified.\footnote{Note that many distributions, including empirical distributions from data, have their quantile functions as built-in functions in most computational softwares. For those that do not have a built-in quantile function, computing a numerical quantile function is a standard and simple task.}
All computations are performed on MATLAB R2017b with Intel(R) Core(TM) i5-8250U CPU @ 1.60GHz. 
 In our implementations, we use the default optimization method, the interior-point algorithm. The convergence criterion is a termination tolerance scalar value $10^{-6}$ on the first-order optimality, which is the default choice, and this  convergence criterion is met in all computations. 
 Note, however, that the optimization problem required in computing $B_{\rm conv}$ is generally non-convex, and so global optimality may not always be attainable by \texttt{fmincon}.
 Nonetheless, in theory,  as the left  end-point $\underline{s}_N$ of RA is a lower bound and $B_{\rm conv}$ is  an upper bound, we have
$$
\mbox{RA output $\underline{s}_N$} \le \mbox{true~} \sup_{\nu \in \Lambda_n(\mu)} q_{t}^+ (\nu) \le \mbox{true~}   B_{\rm conv} \le \mbox{computation of $B_{\rm conv}$}.
$$
As far as we know, there is no theoretical guarantee that the global optimum in the computation of  $B_{\rm conv}$ is attained.  Nevertheless,  in most results,  $\underline{s}_N$ and computed values of  $B_{\rm conv}$  coincide almost perfectly, and therefore convergence is practically verified in these cases.

With discretization, the quantile aggregation problem can also be   formulated into a linear programming (LP) problem with an exponential number of variables. The LP formulation  approximates  to the true optimal value of the problem, but it is difficult if the dimension $n$ or the number of points in the discretization is high. In a real risk management problem where loss distributions are typically continuous (such as asset prices or insurance losses), a fine discretization is required to ensure good approximation, making LP very slow. In Appendix \ref{sec:R2-1}, we provide a detailed comparison among LP, RA and the convolution bound to compute the quantile aggregation problem.

\subsection{Convolution bounds on RVaR aggregation}\label{sec:nume_rvar}
For any $t, s$ with $0 \leq t < t+s \leq 1$, we numerically compute the RVaR aggregation value $R_{t,s} (\nu)$, $\nu \in \Lambda_n(\mu)$ with different methods in the homogeneous case, where the marginal distribution is identical and denoted by $\mu$. The convolution bound is given by \eqref{eq:prime1} and the true value is approximated by RA.

We fix $t+s=0.9$ and change $s \in (0, 0.9)$ to simulate values of $\sup_{\nu \in \Lambda_n(\mu)} R_{t,s} (\nu)$. In Figure \ref{fig:rvar_inc} (left panel), we check Theorem \ref{th:qa-4prime} that the convolution bound \eqref{eq:prime1} is sharp for marginals with decreasing densities. In Figure \ref{fig:rvar_inc} (right panel), we see that the convolution bound \eqref{eq:prime1} is not sharp for marginals with increasing densities. Although this bound is not sharp for increasing densities, the difference is small and it performs quite well numerically. Moreover, in Figure \ref{fig:rvar_inc}, the convolution bound \eqref{eq:prime1} is sharp if $t = 0$ (Theorem \ref{th:qa-4prime}) and $s \downarrow 0$ (Theorem \ref{th:qa-4}). 
\begin{figure}[t]
	\centering
	\caption{Bounds for $\sup_{\nu \in \Lambda_3(\mu)} R^+_{0.9-s, s}(\nu)$. Left panel: $\mu = $ Pareto$(1,1/2)$ with a decreasing density $\frac{1}{2}x^{-3/2}, ~ x \in [1, \infty)$. 
		Right panel:  $\mu$ has an increasing density $\frac{5}{9}(101-x)^{-\frac{3}{2}}, \; x\in[1,100]$.
	}
	\includegraphics[width=0.48\textwidth]{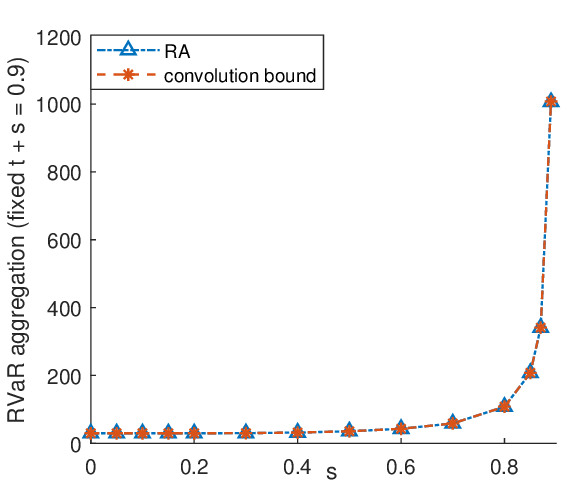}~~
	\includegraphics[width=0.48\textwidth]{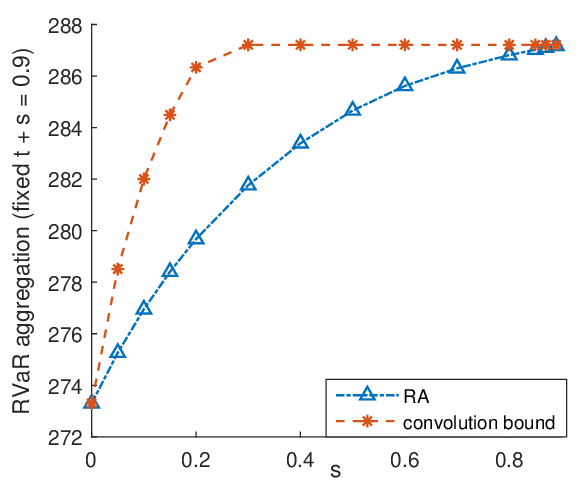}
	\label{fig:rvar_inc}
\end{figure}


\subsection{Numerical comparison with existing results}\label{sec:num_var}

For $t \in [0,1)$, we numerically compare the quantile aggregation value $q_t^+ (\nu)$, $\nu \in \Lambda_n(\mu)$ with analytical bounds obtained in the homogeneous case, where the marginal distribution is identical and denoted by $\mu$. Recall that the convolution bound is given by \eqref{eq:main1pr}, the (reduced) dual bound derived in \cite{EP06} is given by \eqref{eq:dual_expression} and the reduced convolution bound is given by \eqref{eq:main1hom}. The standard bound is derived from the lower Fr\'{e}chet-Hoeffding bound (see Remark A.29 of \cite{FS16}). We also give the quantile aggregation value under a comonotonic scenario for comparison. 

\begin{figure}[hbtp]
	\centering
	\caption{Bounds for $\sup_{\nu \in \Lambda_3(\mu) }q_t^+ (\nu)$. Left panel: $\mu = $ Pareto$(1,1/2)$ with a decreasing density $\frac{1}{2}x^{-3/2}, ~ x \in [1, \infty)$.
		Right panel: $\mu$ has an increasing density $\frac{5}{9}(101-x)^{-3/2}, \; x\in[1,100]$.  In the left panel, ``reduced dual bound", ``reduced convolution bound", ``convolution bound" and ``RA" have the same curve, and for better visibility the ``RA" curve is not plotted.  In the right panel, ``RA" and ``convolution bound" have the same curve and ``reduced dual bound" and ``reduced convolution bound" have the same curve.}
	\includegraphics[width=0.48\textwidth]{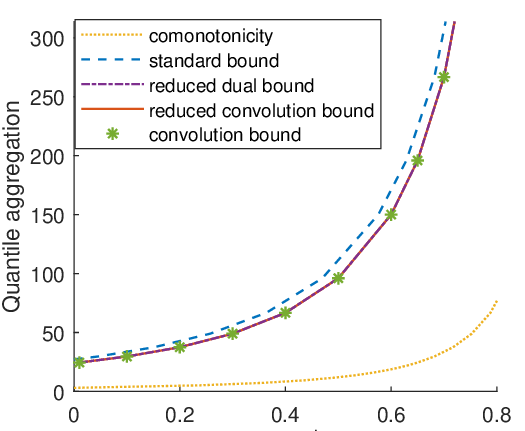}~~
	\includegraphics[width=0.48\textwidth]{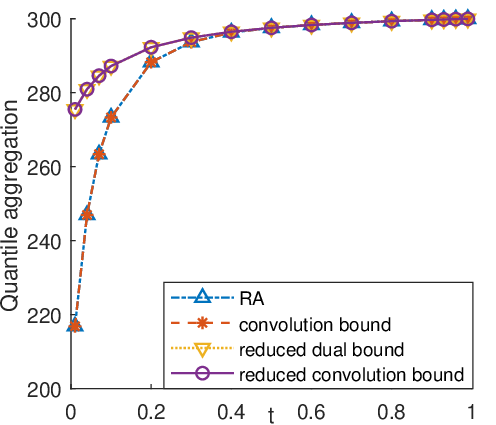}
	\label{fig:bound_p}
\end{figure}

This section also serves as a numerical illustration for the worst-case risk aggregation in Section \ref{sec:qrm}. In Figure \ref{fig:bound_p}, we compute \eqref{eq:intrinsic_prob} in the setting that  losses in a portfolio follow some given marginal distributions. Figure  \ref{fig:bound_p} (left panel) illustrates that the convolution bound \eqref{eq:main1pr}, the reduced convolution bound \eqref{eq:main1hom} and the reduced dual bound \eqref{eq:dual_expression} share the same value of quantile aggregation for a Pareto distribution. The standard bound performs worst as an upper bound for $\sup_{\nu \in \Lambda_n(\mu)} q_t^+ (\nu)$. The comonotonic scenario serves as a lower bound. Results for other distributions such as Lognormal and Gamma distribution are similar and we omit them.

In Figure \ref{fig:bound_p} (right panel),  we plot analytical bounds of the maximum possible quantile aggregation value $\sup_{\nu \in \Lambda_3(\mu) }q_t^+ (\nu)$, where the convolution bound \eqref{eq:main1pr} achieves a strictly smaller value than the reduced dual bound \eqref{eq:dual_expression}. It means that our bound \eqref{eq:main1pr} is an analytically better bound for quantile aggregation. Figure \ref{fig:bound_p} (right panel) further shows that \eqref{eq:main1pr} is better than the reduced convolution bound \eqref{eq:main1hom}; see also Example \ref{ex:ex0}.


In  Table \ref{table6}, we numerically check the performance of the bound \eqref{eq:main1pr} against RA in more detail. 
\begin{table}[t]\centering
	\captionof{table}{RA (with $N=10^5$) and the convolution bound $B_{\rm conv} $ to compute $\sup_{\nu \in \Lambda(\boldsymbol \mu) }q_0^+(\nu)$ for two different settings of heterogeneous marginal distributions. RA produces an interval whose left-end point is a lower bound, and  the convolution bound $B_{\rm conv} $ is an upper bound which is sharp in the first setting of Pareto distributions. }\label{table6}
	\begin{tabular}{c | c c | c c} 
		& $X_i \sim $ Pareto$(1,\alpha_i)$ & time & $X_i \sim $ Pareto$(1, i+2)$, $i=1,\cdots,20$ & time  \\
		& $\alpha_i = 2+i$,  & $n = 20$  & $X_{20+i} \sim $LogN$(5-i,{( {i}/{2})^2})$, $i=1,\cdots,20$ & $n = 60$ \\
		&$i=1,\dots,20$ & & $X_{40+i} \sim \Gamma(i+1 ,\frac{10}{i})$, , $i=1,\cdots,20$ &  \\
		\hline
		RA & [22.5966, 22.5971] & 111s & [539.5141, 539.6205] & 639s \\
		$B_{\rm conv} $ \eqref{eq:main1pr} & 22.5968 & 46s & 539.5611 & 672s \\
		RA minus \eqref{eq:main1pr} & [$-2.04*10^{-4}$, $3.02*10^{-4}$] & & [-0.0470, 0.0594] &  \\
		\hline
	\end{tabular}
\end{table}

Concerning performance, Figure \ref{fig:bound_p} (right panel) and Table \ref{table6} both indicate that the convolution bound and RA have a similar value for most cases. We discuss three aspects. First, we emphasize again that the true value of $\sup_{\nu \in \Lambda(\boldsymbol \mu) }q_0^+(\nu)$ is generally unavailable. It is available in cases with monotone densities, where the true value equals to the convolution bound according to Theorem \ref{th:qa-4}. It is the case of Figure \ref{fig:bound_p} and  the first model in Table \ref{table6}.   Second,  if the true value of $\sup_{\nu \in \Lambda(\boldsymbol \mu) }q_0^+(\nu)$ is   unknown, then we can use the (upper) convolution bound together with the lower bound provided by the RA to approximately target the true value. As shown in the second model of in Table \ref{table6}, the difference between the two bounds is quite small and we can approximately know the true value. Third, we show that in some cases RA does not perform well while the convolution bound provides a sharp result; see Example \ref{ex:RA_not_sharp}.

\begin{example}\label{ex:RA_not_sharp}
	Let $\mu$ be a triatomic uniform distribution on $\{1,2,3\}$.
	By constructing a random vector uniformly distributed on $\{(1,2,3),(2,3,1), (3,1,2)\}$, we get $\sup_{\nu \in \Lambda( \mu,\mu,\mu) } q_0^+ (\nu) =6$. As a result, \eqref{eq:main1pr} provides a sharp upper bound   $\inf_{\boldsymbol{\beta} \in \Delta_3} R_{\boldsymbol{\beta}}^+ (\mu,\mu,\mu) = 6$ with the optimal $\boldsymbol{\beta}=(1,0,0,0)$. However, the interval provided by the RA is $[5, 5]$; see Example \ref{ex:RA-fails} for details.
\end{example}

Concerning computation time, we find that the convolution bound \eqref{eq:main1pr} is computed quicker than or similarly to RA. 
 In conclusion, \eqref{eq:main1pr} is not only a   good analytical upper bound, but also performs quickly in the numerical calculation for the maximum possible lower end-point $\sup_{\nu \in \Lambda(\boldsymbol \mu) }q_0^+(\nu)$.
 
 Theorem \ref{th:qa-6} assumes continuous distribution   and quantile functions. 
 Generally, equivalence between quantile methods and probability methods can be troublesome when dealing with discrete distributions. 
We illustrate in a simple example that the equivalence in Theorem \ref{th:qa-6} may fail. 
Define $\mu = \text{Bernoulli}(0.5)$ and $n = 3$. We numerically compute the values of RA, the convolution bound and the dual bound on $\sup_{\nu \in \Lambda_3(\mu) }q_t^+ (\nu)$  in Table \ref{tab:r1-1}. The true values are available in this simple setting for comparison. We make two observations. First,  as we mentioned above, the true value of $\sup_{\nu \in \Lambda_3(\mu) }q_t^+ (\nu)$ is bounded from below by the RA left end-point and from above by the convolution bound and is exactly obtained if these two bounds are equal. Second, for $t = 0.3$ and $0.4$, the values of the convolution bound  $B_{\text{conv}}$ and the dual bound $B_{\text{dual}}$ are not the same. In these cases, we observe that $B_{\text{conv}}$ is closer (equal) to the true value  than $B_{\text{dual}}$. 
\begin{table}[t]\centering
	\captionof{table}{RA (with $N = 10^5$), true value,  convolution bound $B_{\text{conv}}$ and dual bound $B_{\text{dual}}$ on $\sup_{\nu \in \Lambda_3(\mu) }q_t^+ (\nu)$.}
	\begin{tabular}{c | c | c | c | c | c | c | c} 
		$t  $ & 0.1 & 0.2 & 0.3 & 0.4 & 0.5 & 0.6 & 0.7\\
		\hline
		RA (left end-point) &  1  &   1  &   2  &   2  &   $\textbf{2}$  &   3  &   3 \\
		\hline
		true value &  1  &   1  &   \textbf{2}  &   \textbf{2}  &   \textbf{3}  &   3  &   3 \\
		\hline
		$B_{\text{conv}}$ &  1.6667  &  1.875 &  $\textbf{2}$  & $\textbf{2}$ &    $\textbf{3}$  &  3  &  3 \\
		\hline
		$B_{\text{dual}}$  &  1.6667  &  1.875  &  $\textbf{2.1429}$ &  $\textbf{2.5}$  &  3  &  3  &  3  \\
		\hline
	\end{tabular}
	\label{tab:r1-1}
\end{table}

\subsection{Performance of extremal dependence structures} \label{sec:performance}

Recall that in Section \ref{sec:6} we propose a candidate dependence structure \eqref{eq:opt_explic} for the worst-case quantile aggregation. We also state a suboptimal structure \eqref{eq:opt_notjm} without involving $Y_1, \cdots, Y_n$. A better suboptimum \eqref{eq:opt_supsub} is obtained by solving another optimization problem and a two-side approximation interval $[H(\boldsymbol{\gamma}), R_{\boldsymbol{\beta}}^+(\boldsymbol{\mu})]$ is established. We now give some numerical examples to compare their corresponding lower end-points in the heterogeneous case with $n=3$. As shown in Theorem \ref{th:qa-5}, possible values of the aggregation variable in \eqref{eq:opt_explic} are those of the functions $h_1, h_2, h_3$ on $[0,1-\beta_0]$, while the corresponding values in \eqref{eq:opt_notjm} are those of $h_1, h_2, h_3$ on $[0,1]$. Thus, the lower end-point derived from \eqref{eq:opt_explic} is attained at the minimal values of all $h_1, h_2, h_3$ on $[0,1-\beta_0]$, while that from \eqref{eq:opt_notjm} is attained at those on $[0,1]$.
\begin{figure}[t]
	\centering
	\caption{Performance of extremal dependence structures with settings in Table \ref{table:subs1}.  In each panel, we plot the function of $h_i$ and the values of    quantile aggregation  provided by the convolution bound, the suboptimum $\boldsymbol{\beta}$ and the suboptimum $\boldsymbol{\gamma}$. }
	\label{fig:h}
	\includegraphics[width=0.9\textwidth]{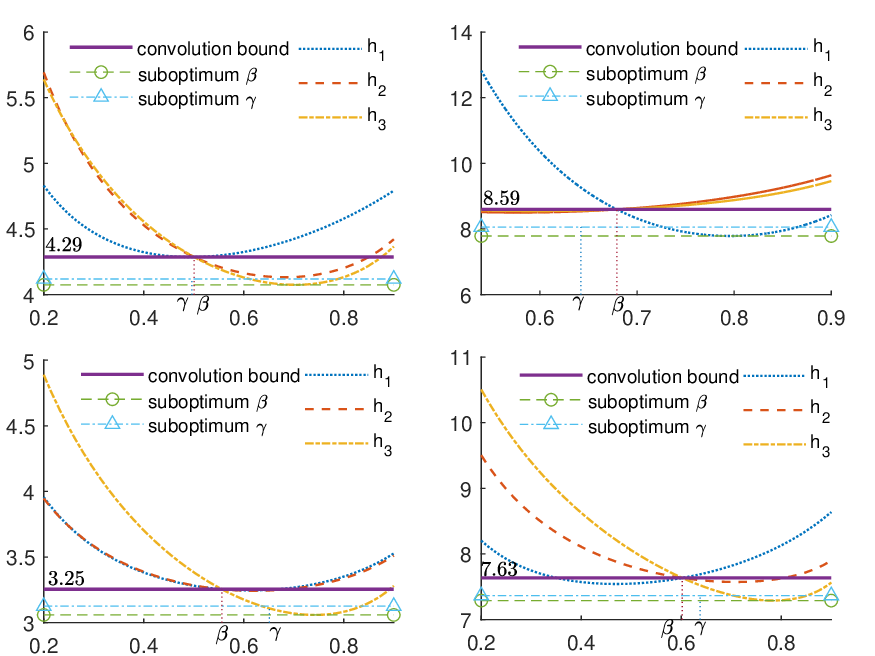}
\end{figure}

\begin{table}[t]\centering
	\small
	\captionof{table}{Numerical values of lower end-points in Figure \ref{fig:h}.}\label{table:subs1}
	\begin{tabular}{c | c | c | c | c} 
		& $\mu_1=$ Pareto(1,3) & $\mu_1=$ Pareto(1,1/3) & $\mu_1=$ Pareto(1,3) & $\mu_1=$ Pareto(1,3)    \\
		& $\mu_2=$ LogN(0,1)  & $\mu_2=$ LogN(0,1) & $\mu_2=$ LogN(-1,1) & $\mu_2=$ LogN(0,1)  \\
		& $\mu_3=$ $\Gamma(1,2)$ & $\mu_3=$ $\Gamma(1,2)$ & $\mu_3=$ $\Gamma(1,2)$ & $\mu_3=$ $\Gamma(3,2)$ \\
		\hline
		Mean & 5.1487 & $\infty$ & 4.1065 & 9.1487 \\
		RA & [4.2856,4.2857] & [8.5933,8.5936] & [3.2545,3.2545] & [7.6338,7.6341] \\
		$[H(\boldsymbol{\gamma}), R_{\boldsymbol{\beta}}^+(\boldsymbol{\mu})]$ & [4.1185, 4.2857] & [8.055, 8.5936] & [3.1254,3.2545] & [7.3653,7.634]\\
		\hline
		Bound \eqref{eq:main1pr} & 4.2857 & 8.5936 & 3.2545 & 7.634 \\
		Candidate \eqref{eq:opt_explic} & 4.2855 & 8.4995 & 3.2545 & 7.5415 \\
		Suboptimum \eqref{eq:opt_notjm} & 4.0739 & 7.7835 & 3.0587 & 7.2889 \\
		Suboptimum \eqref{eq:opt_supsub} & 4.1185 & 8.055 & 3.1254 & 7.3653 \\
		\hline
	\end{tabular}
\end{table}

In Figure \ref{fig:h}, according to the sufficient condition in Theorem \ref{th:qa-5}, the third subfigure shows that \eqref{eq:opt_explic} gives out the worst-case quantile aggregation. Even in the other subfigures, the essential infimum of $\sum_{i=1}^n X_i^*$ of \eqref{eq:opt_explic}, which is the minimal value of $h_1$ on $[0,1-\beta_0]$, is just slightly lower than the corresponding $q_0^{+}(\nu_+)$.
We further show the numerical values in Table \ref{table:subs1},
including the convolution bound, the RA results, and values from the suboptimal structures \eqref{eq:opt_notjm} and \eqref{eq:opt_supsub}.
Recall that \eqref{eq:opt_notjm} is based on $\boldsymbol{\beta}$, while \eqref{eq:opt_supsub} requires solving $\boldsymbol{\gamma}$ in another optimization problem.
The suboptimal methods  give explicit random vectors, and hence it is useful in visualizing the worst case of quantile aggregation. In Table \ref{table:subs1}, both \eqref{eq:opt_notjm} and \eqref{eq:opt_supsub} produce numbers close to the convolution bound in many cases, while \eqref{eq:opt_supsub} is always better but requires more computation. 

\section{Two applications}\label{sec:OR}


\subsection{Robust risk management}	

	The convolution bounds can be directly applied to compute the worst-case or best-case risk aggregation  problem   in  Section \ref{sec:qrm} for the risk measure being VaR  or RVaR. 
	In this section, we solve the robust portfolio selection problem   \eqref{eq:r1-robust} also presented in   Section \ref{sec:qrm}. 
	Using a standard Lagrangian technique, for \eqref{eq:r1-robust},   it is equivalent to solve 
	\begin{align}
			\mbox{maximize} ~~~&  \inf_{\mathbf{X} \sim \boldsymbol{\mu}} \E[u(\boldsymbol{\lambda} \cdot \mathbf{X})] - \xi    \sup_{\mathbf{X} \sim \boldsymbol{\mu}} q_t^+(\boldsymbol{\lambda} \cdot (-\mathbf{X})) ~~~\mbox{over $\boldsymbol{\lambda} \in \overline{\Delta}_{n-1}$} ,
			\label{eq:r1-2}
	\end{align}
	where $\xi\ge 0$ is a Lagrangian multiplier. After    obtaining the optimizer $\boldsymbol{\lambda}^*_\xi$ of   \eqref{eq:r1-2} for varying $\xi\ge 0$  we can calibrate $\xi$ with the risk   constraint $\sup_{\mathbf{X} \sim \boldsymbol{\mu}} q_t^+(\boldsymbol{\lambda}^*_\xi \cdot  (-\mathbf{X})) = x$   to solve \eqref{eq:r1-robust} whenever the risk   constraint is binding. In what follows, we will focus on \eqref{eq:r1-2}.
	Recall that $u$ is a strictly concave and increasing function. 
	For $\boldsymbol \lambda =(\lambda_1,\dots,\lambda_n)$, 
	 	the first term in  the problem  \eqref{eq:r1-2} admits a simple formula
		\begin{align} \label{eq:r1-3} 
	\inf_{\mathbf{X} \sim \boldsymbol{\mu}} \E[u(\boldsymbol{\lambda} \cdot \mathbf{X})] = \int_0^1 u\left( \sum_{i=1}^n \lambda_i q_v^+(\mu_i) \right) \d v =: U(\boldsymbol \lambda),
		\end{align}
	because the worst-case portfolio is comonotonic (e.g., Corollary 3.29 of \cite{R13}).
	Since $\boldsymbol \lambda \mapsto u( \sum_{i=1}^n \lambda_i q_v^+(\mu_i) ) $ is concave,  so is $U$. The concavity of $U$ implies that  a maximizer for \eqref{eq:r1-3} may favour some diversification. 
	On the other hand, as shown by Proposition 7.1 of \cite{CLLW22}, a minimizer for  $\sup_{\mathbf{X} \sim \boldsymbol{\mu}} q_t^+(\boldsymbol{\lambda} \cdot (-\mathbf{X}))$  favours no diversification  if the marginal distributions are identical and satisfy  (DD) or (ID). 
	Therefore, intuitively, there is a trade-off between diversification and concentration in \eqref{eq:r1-2}. 
	Applying the convolution bound  in the form of Theorem \ref{th:qa-2} (the symmetric version of Theorem \ref{th:qa-4}) and using the positive homogeneity of RVaR, we have 
	\begin{align*}
	   \inf_{\mathbf{X} \sim \boldsymbol{\mu}} \E[u(\boldsymbol{\lambda} \cdot \mathbf{X})] - \xi  \sup_{\mathbf{X} \sim \boldsymbol{\mu}}  q_t^+(\boldsymbol{\lambda} \cdot (-\mathbf{X}))  
	&=  U(\boldsymbol \lambda) + \xi   \inf_{\mathbf{X} \sim \boldsymbol{\mu}} q_{1-t}^-(\boldsymbol{\lambda} \cdot \mathbf{X})  \\
	&\ge  
	 U(\boldsymbol \lambda)  + \xi   \sup_{\boldsymbol \beta\in (1-t)\Delta_n}  \sum_{i=1}^n R_{1-\beta_0-\beta_i,\beta_0 }(\lambda_i X_i) \\ 
	&=   \sup_{\boldsymbol \beta\in (1-t)\Delta_n} \left\{ U(\boldsymbol \lambda)+ \xi   \sum_{i=1}^n \lambda_i R_{1-\beta_0-\beta_i,\beta_0 }(\mu_i) \right\} .
\end{align*}
The above objective will be maximized over $\boldsymbol \lambda$. 
For a fixed $\boldsymbol \beta\in (1-t)\Delta_n$, the above objective is concave in $\boldsymbol \lambda$, which is easy to maximize. As we discussed in Section \ref{sec:technical}, the optimization of $ \boldsymbol \beta$ is also often simple.
The inequality above becomes an equality when the convolution bound is sharp. 
This is guaranteed if $\mu_1,\dots,\mu_n$ have increasing (or decreasing) tail densities below level $1-t$ (Theorem \ref{th:qa-2}).
Since $t$ is close to $1$, this requirement is very weak and it is satisfied by portfolio models in practice. 
  We will assume that the convolution bound is sharp from now on.

	
For a simple illustration, we consider three heterogeneous assets with normal and log-normal distributions and different parameters. The parameters of these distributions are chosen such that the problem is non-trivial in the sense that the three marginal distributions do not dominate each other.
 We take an exponential utility function to characterize the preference of the decision maker. The numerical results of the following setting are given in Table \ref{table:QRM}. 	
	\begin{table}[t] 
		\centering
		\caption{Different optimal portfolios under different risk   constraints, where
			$n = 3$, $t = 0.99$, $X_1 \sim \text{N}(0.9, 1.8^2)$, $X_2 \sim \text{N}(0.1, 0.2^2)$, $X_3=Y-4$ with $Y \sim \text{LogN}(1.5, 1.2^2)$,  and $u(x) = 5(1- \exp(-x/5))$, $x \in \R$}
		\label{table:QRM}
		\def\arraystretch{1.5}
		\begin{tabular}{c|c|c|c|c|c}
			Lagrangian $\xi$ & $0$ & $0.01$  & $0.1$ & $1$ & $10$ \\
			\hline	
					risk constraint $x$ & $\ge 3.075$ &  2.489     & 1.626    & 0.459   & 0.365          \\
			\hline  
			 
			optimal portfolio $\boldsymbol \lambda^*$ & $(0.41, 0.23, 0.36)$ & $(0.23, 0.38, 0.39)$ &  $(0, 0.64, 0.36)$ & $(0, 0.97, 0.03) $ & $(0, 1, 0)$  \\\hline
			  utility $U(\boldsymbol \lambda^*)$ & 0.704	 &  0.702    & 0.678        & 0.200  &   0.095 
			    \\
			\hline 
		\end{tabular}
	\end{table}
	If the Lagrangian multiplier $\xi = 0$, then the problem is robust utility maximization under uncertainty  without risk constraint (or, the risk constraint is not binding), and the optimizer is a  diversified portfolio $(0.41, 0.23, 0.36)$. As the Lagrangian multiplier $\xi$ increases, the role of the risk constraint is getting more important, and the optimal portfolio becomes more concentrated. In case $\xi = 10$, the optimal portfolio is to only invest in the second asset, which has the smallest expected return and the smallest variance (thus, the safest choice). This is consistent to our intuition that when the penalty on the worst-case dependence is large, the decision maker prefers a concentrated portfolio, which does not have dependence uncertainty. A similar phenomenon is also observed by  \cite{PP18} and \cite{CLLW22} in different settings without the utility term $U(\boldsymbol \lambda)$.   

\subsection{The O-ring model}

We proceed to analyze the O-ring model presented in Section \ref{sec:o-ring}. Our goal is to find
the minimum value of  \eqref{prob:o-ring} as well as its optimizing dependence structure. This optimizing dependence structure will yield matching patterns in the label market. 
For this, we   use Proposition \ref{th:multiply}  and the arguments in Section \ref{sec:description} on the extremal dependence.

For an illustration, we will use the following simple setting: $\mu_Z=\mathrm{Beta}(5/6, 1)$, $\mu_1=\mu_2= \mathrm{Beta}(5/4, 1)$. That is, there are two workers in each firm, where the product value of the firm and the successful probabilities of the workers follow the Beta distributions.
Note that all Beta distributions of the form $\mathrm{Beta}(\alpha, 1)$ satisfy the condition in Proposition  \ref{th:multiply} (iii); equivalently, (ID) holds for the distributions of $\log (Z)$, $\log(X_1)$ and $\log(X_2)$.
Hence,
the right-hand side of \eqref{eq:main1pr:multiply} is the true minimum value in \eqref{prob:o-ring}.
For values of the threshold $y \in [0, 1)$, we obtain the corresponding minimum probability $t$ as well as the optimal $\boldsymbol \beta/(1-t)$ of \eqref{eq:main1pr:multiply} in Table \ref{table:Oring}.
We will explain this table in more detail below.

\begin{table}[t]
	\centering
	\caption{Different cases of the optimal matching}
	\label{table:Oring}
		\def\arraystretch{1.5}
	\begin{tabular}{c|c|c|c}
		threshold $y$   &  $(0, 0.046)$ & $[0.046, 0.353)$  & $[0.353, 1)$  \\
		\hline
		corresponding probability $t$ & $(0, 0.077)$ & $[0.077, 0.420)$ & $[0.420, 1)$ \\
		\hline
		optimal $\boldsymbol{\beta}/(1-t)$ of \eqref{eq:main1pr:multiply} &  $(0, 1, 0, 0)$ & $(0, 1, 0, 0)$ & $(1, 0, 0, 0)$ \\
		\hline
		optimal $\boldsymbol{\beta}$  of \eqref{eq:lower} &  $(<1,>0,>0,>0)$ & $( <1, >0, 0, 0)$ & $(1, 0, 0, 0)$\\
		\hline
		possible events & $C$, $B$, $A_Z$, $A_X$ & $C$, $B$, $A_Z$ & $C$, $B$ \\\hline
	\end{tabular}
\end{table}

For a given threshold $y \in (0,1)$ and its corresponding probability level $t \in (0, 1)$,
by Proposition \ref{prop:basic},
one needs to  consider an optimal matching of the conditional distributions $\mu_Z^{t+}$, $\mu_1^{t+}$ and $\mu_2^{t+}$ on an event with total probability $1-t$,
and the matching on the remaining event $C$ with probability $t$ can be arranged arbitrarily.
As $\mu_1=\mu_2$, by symmetry, the overall optimal dependence structure
includes four possible events ($\Omega = A_Z \cup A_X \cup B\cup C$):
\begin{enumerate}[$(A_Z)$]
	\item[$(B)$] two medium-skilled workers work together as a team in a medium-value  firm;
	\item[$(A_Z)$] a low-value  firm hires two high-skilled workers as a team;
	\item[$(A_X)$] a low-skilled worker works with a high-skilled coworker in a high-value  firm;
	\item[$(C)$] two very low-skilled workers work  in a very low-value  firm.
\end{enumerate}	
In the above construction of $(Z,X_1,X_2)$, we have
$$
B\cup A_Z \cup A_X = \{Z\ge q^+_t (\mu_Z)\}=\{X_1\ge q^+_t (\mu_1)\}= \{X_2\ge q^+_t (\mu_2)\};$$ 
$$
C=  \{Z < q^+_t (\mu_Z)\}=\{X_1< q^+_t (\mu_1)\}= \{X_2< q^+_t (\mu_2)\}.
$$
Such a structure is called $t$-concentration by \cite{WZ21}.

For a given threshold $y \in (0,1)$ and its corresponding probability level $t \in (0, 1)$,
by the arguments in Section \ref{sec:description}, since  (ID) holds,  to determine the possible events in the dependence structure, one should compute the optimal $\boldsymbol{\beta}$ from the lower convolution bound \eqref{eq:main2pr:multiply}, i.e.,
\begin{equation}\label{eq:lower}
	\sup_{\boldsymbol \beta\in \Delta_n} \sum_{i = Z, 1, 2} R_{1-\beta_i-\beta_0,\beta_0 } \left(  \mu_i^{t+} \circ \exp \right).
\end{equation}
We denote the optimal $\boldsymbol{\beta}$  in Table \ref{table:Oring} by $(1 - \beta_Z - \beta_1 - \beta_2, \beta_Z, \beta_1, \beta_2)$. As  in Section \ref{sec:description}, we have the following classification:
\begin{enumerate}[Case 1.]
	\item[Case 1.] If $y \in [0.353, 1)$, then $\beta_Z = \beta_1 = \beta_2 = 0$, implying that the events $C$ and $B$ occur.
	
	\item[Case 2.] If $y \in [0.046, 0.353)$,  then  $\beta_Z > 0$ and $\beta_1 = \beta_2 = 0$,  implying that the events $C$, $B$ and $A_Z$ occur.
	
	\item[Case 3.] If $y \in (0, 0.046)$,  then $\beta_Z, \beta_1, \beta_2 > 0$, implying that all the events $C$, $B$, $A_Z$ and $A_X$ occur.	
\end{enumerate}

The event $C$ corresponds to the proportion of firms and workers that are given up by the matching problem. 
Since our goal is to obtain as many project values above $y$ as possible, some projects have to be left behind, and they are composed of low-value firms and low-skilled workers. 
The event $B$ corresponds to the proportion of medium-value firms which hire medium-skilled workers. This reflects the majority of firms and workers and they are matched together. The event $A_Z$ means that the low-value firm has to hire high-skilled workers to minimize  the global deficiency proportion of production.

The event $A_X$  matches a  high-value  firm and a high-skilled worker with a low-skilled coworker. If the firm value is high enough and the threshold $y$ is low enough, then there is no point for this firm to hire  two   high-skilled workers anymore; in fact, the firm can hire one high-skilled worker and reduce its cost by hiring a low-skilled coworker  if the goal is only to bypass the threshold $y = 0.046$. This may be realistic in settings where robots or automated machines are cheaper and less effective than human workers, but  they are sufficient to pass a threshold of interest (e.g., quality control) for the firm, so the firm would use robots or automated machines. However, this situation does not happen if the threshold is high enough.

The optimal matching, featured with events $A_Z$ and $A_X$, for   problem \eqref{prob:o-ring} is quite different from the classic result in \cite{K93}, where high-skilled workers are always matched with high-value firms. Certainly, the objectives in the two settings are different.
To explain this from the perspective of dependence, the product function $(z,x_1,\dots,x_n)\mapsto z  \prod_{i=1}^n x_i $ is a supermodular function, and its expected value is maximized by positive matching, that is, comonotonicity; see e.g., \citet[Section 2]{PW15}. On the other hand, $(z, x_1, \dots, x_n)\mapsto \id_{\{z  \prod_{i=1}^n x_i \le y \}}$
is neither supermodular nor submodular, and its minimization (or maximization) is highly complicated and involves both positive and negative matching; see \citet[Section 3]{PW15} for extremal negative dependence. Translating this into the O-ring theory, to minimize the percentage of production values under a threshold,  one needs to assign  high-skilled workers and  high-value firms to assist  less-performed workers or firms. Such a matching policy is quite common in socially relevant real situations, e.g., team tournaments, help groups, and financial assignments, to name a few. The appearance of negative matching is getting increasing attention in various economic contexts; see e.g., the recent work of \cite{BTZ21, BTWZ23}.

\section{Conclusion}\label{sec:9}
Using the RVaR convolution result  of \cite{ELW18}, we establish new (semi-analytical) bounds for the problem of quantile aggregation, and show that these bounds are sharp in many cases with analytical formulas in the literature. We  can interpret the corresponding worst-case dependence structure and give  explicit  construction for the complicated optimization problem.
The convolution bounds cover all existing theoretical results on quantile aggregation. Moreover,  the proposed bound has advantages in its tractability, interpretability, and computation.

The level of theoretical difficulty in quantile aggregation leaves ample room for future adventures and challenges. For instance, the sharpness of convolution bounds under general conditions, other than those in Theorems \ref{th:qa-4prime},  \ref{th:qa-4}, \ref{th:qa-4primeprime} and \ref{th:qa-2},  is an open question.
For the interested reader, we connect our results to the theory of joint mixability in Appendix \ref{app:3}, where many questions remain to be open.
Additional information on the dependence structure, other than the marginal distributions, can be incorporated in the quantile aggregation problem, and it usually leads to highly challenging questions; see e.g., \cite{BRV17, BRVW17} and \cite{BKLP22}.
In view of the broad appearance of quantile aggregation,
its application domain includes many problems in economics, finance, risk management, statistics, and scheduling, in addition to the two applications discussed Section \ref{sec:ex}.
We mention some applications in Appendix \ref{app:applications}, on which many relevant questions warrant thorough future investigation.


\appendix
\setcounter{lemma}{0}
\renewcommand{\thelemma}{A.\arabic{lemma}}
\setcounter{proposition}{0}
\renewcommand{\theproposition}{A.\arabic{proposition}}
\setcounter{theorem}{0}
\renewcommand{\thetheorem}{A.\arabic{theorem}}
\setcounter{definition}{0}
\renewcommand{\thedefinition}{A.\arabic{definition}}
\setcounter{corollary}{0}
\renewcommand{\thecorollary}{A.\arabic{corollary}}
\setcounter{example}{0}
\renewcommand{\theexample}{A.\arabic{example}}
\setcounter{equation}{0}
\renewcommand{\theequation}{A.\arabic{equation}}

\section{Lower convolution bounds}\label{sec:lower}

In this appendix, we quickly collect results on lower convolution bounds for $\inf_{\nu \in \Lambda(\boldsymbol \mu) } R_{t,s} (\nu)$
and $\inf_{\nu \in \Lambda(\boldsymbol \mu) } q^-_t (\nu)$,
and some related results.
The proofs of these results are symmetric to those on the upper convolution bounds, and they are omitted.

\begin{theorem}[RVaR aggregation]\label{th:qa-4primeprime}
	Let $\boldsymbol \mu=(\mu_1,\dots,\mu_n)\in \M^n$.
	For any $t, s$ with $0 \leq t < t+s \leq 1$,
	\begin{equation}\label{eq:prime2}
		\inf_{\nu \in \Lambda(\boldsymbol \mu) } R_{t,s} (\nu) \ge \sup_{\substack{ \boldsymbol \beta \in (1-t)\Delta_n \\\beta_0\ge s > 0} }
		\sum_{i=1}^n R_{1-\beta_i-\beta_0,\beta_0 }(\mu_i).
	\end{equation}
	Moreover,   \eqref{eq:prime2} holds as an equality in the following cases:
	\begin{enumerate}[(i)]
		\item $t+s=1$;
		\item \label{item:rvar_incr2}
		each of $\mu_1,\dots,\mu_n$ admits an increasing density below its $(1-t)$-quantile;
		\item $\sum_{i=1}^n \mu_i \left[q_0^+(\mu_i), q_{1-t}^-(\mu_i)\right) \le  1-t$.
	\end{enumerate}
\end{theorem}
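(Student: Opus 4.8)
The plan is to derive Theorem~\ref{th:qa-4primeprime} from Theorem~\ref{th:qa-4prime} by the reflection $X\mapsto -X$. The key elementary facts are
$$
q^+_t(-X)=-q^-_{1-t}(X)\quad(t\in[0,1)),\qquad R_{\beta,\alpha}(-X)=-R_{1-\beta-\alpha,\alpha}(X)\quad(0\le\beta<\beta+\alpha\le1),
$$
the second following from the first by the substitution $u\mapsto 1-u$ in the integral \eqref{eq:r1}. For $\boldsymbol\mu=(\mu_1,\dots,\mu_n)\in\M^n$ let $\widetilde{\boldsymbol\mu}=(\widetilde\mu_1,\dots,\widetilde\mu_n)$ denote the tuple of reflected marginals, $\widetilde\mu_i$ being the law of $-X_i$ for $X_i\sim\mu_i$. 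Negating every coordinate is a bijection $\Gamma(\boldsymbol\mu)\to\Gamma(\widetilde{\boldsymbol\mu})$ sending $X_1+\dots+X_n$ to its negative, so $\nu\mapsto\widetilde\nu$ is a bijection $\Lambda(\boldsymbol\mu)\to\Lambda(\widetilde{\boldsymbol\mu})$, and the second identity applied with $(\beta,\alpha)=(t,s)$ gives
$$
\inf_{\nu\in\Lambda(\boldsymbol\mu)}R_{t,s}(\nu)=-\sup_{\nu\in\Lambda(\widetilde{\boldsymbol\mu})}R_{1-t-s,\,s}(\nu).
$$

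First I would apply the inequality in Theorem~\ref{th:qa-4prime} to $\widetilde{\boldsymbol\mu}$ with the pair $(t',s')=(1-t-s,\,s)$, noting $t'+s'=1-t$, to obtain
$$
\sup_{\nu\in\Lambda(\widetilde{\boldsymbol\mu})}R_{1-t-s,\,s}(\nu)\;\le\;\inf_{\substack{\boldsymbol\beta\in(1-t)\Delta_n\\\beta_0\ge s>0}}\sum_{i=1}^n R_{\beta_i,\beta_0}(\widetilde\mu_i).
$$
Since $R_{\beta_i,\beta_0}(\widetilde\mu_i)=-R_{1-\beta_i-\beta_0,\beta_0}(\mu_i)$, negating this inequality produces exactly \eqref{eq:prime2}; the index set $(1-t)\Delta_n$ and the constraint $\beta_0\ge s>0$ are unchanged because $\Delta_n$ is symmetric in $\beta_1,\dots,\beta_n$, and the ``not $\infty-\infty$'' proviso on the right-hand side of \eqref{eq:prime2} is the mirror image of the one in Theorem~\ref{th:qa-4prime}. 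For the equality statements I would transfer each hypothesis of Theorem~\ref{th:qa-4prime} through the reflection: $t'=0$ is $t+s=1$ (case (i)); $n\le2$ is self-dual (case (ii)); ``$\widetilde\mu_i$ admits a decreasing density beyond its $(1-t'-s')=t$-quantile'' is precisely ``$\mu_i$ admits an increasing density below its $(1-t)$-quantile'' (case (iii)), since $\widetilde\mu_i$ restricted above its $t$-quantile is the reflection of $\mu_i$ restricted below its $(1-t)$-quantile; and $\sum_i\widetilde\mu_i\big(q^+_{1-t'-s'}(\widetilde\mu_i),q^-_1(\widetilde\mu_i)\big]=\sum_i\mu_i\big[q^+_0(\mu_i),q^-_{1-t}(\mu_i)\big)\le 1-t$ is case (iv). In each case Theorem~\ref{th:qa-4prime} yields equality for $\widetilde{\boldsymbol\mu}$, hence for $\boldsymbol\mu$ after negation.

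Since Theorem~\ref{th:qa-4prime} already absorbs all the analytic difficulty (the monotone-density case in particular relying on \cite{WW16} and \cite{JHW16}), the remaining work is bookkeeping. The step I expect to need the most care is keeping the left/right quantile conventions straight in the two reflection identities and checking that they stay valid — or are vacuous — at the boundary levels where $R_{\beta,\alpha}$ can equal $\pm\infty$; secondarily, one must confirm via \eqref{eq:rvar_scale} that ``admits a density below the $(1-t)$-quantile'' is the exact mirror of ``admits a density beyond the $t$-quantile'', so that the four equality hypotheses of the two theorems correspond. None of this is conceptually hard, which is presumably why the paper states the result and omits the proof.
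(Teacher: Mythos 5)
Your reflection argument is correct and is exactly the symmetry the paper has in mind: Appendix \ref{sec:lower} states that the proofs of the lower convolution bounds are symmetric to those of the upper ones and omits them, and your identities $q^+_t(-X)=-q^-_{1-t}(X)$, $R_{\beta,\alpha}(-X)=-R_{1-\beta-\alpha,\alpha}(X)$ together with the translation of the four equality cases (including the interval endpoints flipping to $[q^+_0(\mu_i),q^-_{1-t}(\mu_i))$) carry this out faithfully. Nothing further is needed.
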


Let $\mu^{t-}$ be the probability measure given by
$$
\mu^{t-} \left(-\infty, x\right]= \min\left\{ \frac{\mu\left(-\infty, x\right]}{t}, 1\right\},~~x\in \R.
$$
That is, $\mu^{t-}$ is the distribution measure of the random variable $q_V(\mu)$ where $V$ is a uniform random variable on $[0,t]$. In the case of Theorem \ref{th:qa-4primeprime} (iii), it equivalently means that each of $\mu_1^{(1-t)-},\dots,\mu_n^{(1-t)-}$ admits an increasing density. We denote by $\boldsymbol \mu^{t- }= (\mu_1^{t-},\dots,\mu_n^{t-})$. Proposition \ref{prop:basic_sym} (symmetric to Proposition \ref{prop:basic}) shows  relevant results.

\begin{proposition}\label{prop:basic_sym}
	For $\boldsymbol \mu=(\mu_1,\dots,\mu_n)\in \M^n$, for $0 \leq t < t+s \leq 1$,
	we have
	$$
	\inf_{\nu \in \Lambda(\boldsymbol \mu) } R_{t,s} (\nu) = \inf_{\nu \in \Lambda(\boldsymbol \mu^{(1-t)-}) } \ES_{s/(1-t)} (\nu)
	$$
	and
	$$	\inf_{\nu \in \Lambda(\boldsymbol{\mu})} q_t^- (\nu) = \inf_{\nu \in \Lambda(\boldsymbol \mu^{t- } )}q_1^- (\nu).$$
\end{proposition}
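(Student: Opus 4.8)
The plan is to obtain Proposition~\ref{prop:basic_sym} from Proposition~\ref{prop:basic} by the reflection $x\mapsto -x$, which is exactly the ``symmetry'' invoked at the beginning of this appendix. (An essentially equivalent route is to reprove it directly along the lines of Proposition~\ref{prop:basic}, using the analogue of \eqref{eq:rvar_scale} for $\mu^{t-}$, namely $q^-_u(\mu^{t-})=q^-_{tu}(\mu)$ and $R_{\beta,\alpha}(\mu^{t-})=R_{1-t(1-\beta),\,t\alpha}(\mu)$, together with the reflected form of Theorem~4.1 of \cite{LW16}; the reflection argument is shorter.)

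\textbf{Step 1 (a reflection dictionary).} For $\mu\in\M$ let $\mu^\sharp$ denote its pushforward under $x\mapsto -x$, and set $\boldsymbol\mu^\sharp=(\mu_1^\sharp,\dots,\mu_n^\sharp)$. First I would verify that $\nu\mapsto\nu^\sharp$ is a bijection from $\Lambda(\boldsymbol\mu)$ onto $\Lambda(\boldsymbol\mu^\sharp)$: if $(X_1,\dots,X_n)$ has marginals $\mu_1,\dots,\mu_n$, then $(-X_1,\dots,-X_n)$ has marginals $\mu_1^\sharp,\dots,\mu_n^\sharp$ and aggregate $-(X_1+\dots+X_n)$. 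Next, directly from the definitions, $q^-_t(\mu^\sharp)=-q^+_{1-t}(\mu)$ and $q^+_t(\mu^\sharp)=-q^-_{1-t}(\mu)$, and integrating these gives $R_{t,s}(\mu^\sharp)=-R_{1-t-s,\,s}(\mu)$; in particular $\ES_\alpha(\mu^\sharp)=-\LES_\alpha(\mu)$ and $q^+_0(\mu^\sharp)=-q^-_1(\mu)$. Finally the tail operations are interchanged: $(\mu^\sharp)^{t+}=(\mu^{(1-t)-})^\sharp$, and with $t$ replaced by $1-t$, $(\mu^\sharp)^{(1-t)+}=(\mu^{t-})^\sharp$, because conditioning $-X$ above its $t$-quantile amounts to conditioning $X$ below its $(1-t)$-quantile and then reflecting. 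These are all short computations, the only subtlety being that it is $q^-$ which turns into $q^+$ under reflection.

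\textbf{Step 2 (chaining with Proposition~\ref{prop:basic}).} For the first identity, the bijection together with $R_{t,s}(\nu)=-R_{1-t-s,\,s}(\nu^\sharp)$ gives
\[
\inf_{\nu\in\Lambda(\boldsymbol\mu)}R_{t,s}(\nu)=-\sup_{\rho\in\Lambda(\boldsymbol\mu^\sharp)}R_{1-t-s,\,s}(\rho),
\]
and Proposition~\ref{prop:basic} applied to $\boldsymbol\mu^\sharp$ with $(t',s')=(1-t-s,\,s)$ --- admissible because $t'\ge 0$, $s'>0$ and $t'+s'=1-t\le 1$ --- rewrites the right-hand side as $-\sup_{\rho\in\Lambda((\boldsymbol\mu^\sharp)^{t+})}\LES_{s/(1-t)}(\rho)$, using $1-t'-s'=t$ and $s'/(t'+s')=s/(1-t)$. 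Substituting $(\boldsymbol\mu^\sharp)^{t+}=(\boldsymbol\mu^{(1-t)-})^\sharp$ and $\LES_\alpha(\rho)=-\ES_\alpha(\rho^\sharp)$ and running the bijection backwards then produces $\inf_{\sigma\in\Lambda(\boldsymbol\mu^{(1-t)-})}\ES_{s/(1-t)}(\sigma)$, which is the claim. The second identity follows in the same fashion from the second part of Proposition~\ref{prop:basic} at level $1-t$, using $q^-_t(\nu)=-q^+_{1-t}(\nu^\sharp)$, $(\boldsymbol\mu^\sharp)^{(1-t)+}=(\boldsymbol\mu^{t-})^\sharp$ and $q^+_0(\rho)=-q^-_1(\rho^\sharp)$; alternatively it can be read off the first identity by letting $s\downarrow 0$ and appealing to \eqref{eq:quantileconvergence}.

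\textbf{Main obstacle.} There is no conceptual difficulty here; the two points that need genuine care are (a) the one-sided-quantile bookkeeping under $x\mapsto -x$, where a careless step would shift a quantile level, and (b) the degenerate endpoint $t+s=1$, where $\ES_{s/(1-t)}=\ES_1$ is the mean and both sides may take the value $+\infty$ --- this is handled exactly as the analogous case $t=0$ of Proposition~\ref{prop:basic}, i.e.\ by restricting to marginals with finite mean or by the truncation device behind Proposition~\ref{prop:1end}.
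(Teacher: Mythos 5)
Your proof is correct and follows essentially the route the paper intends: the paper omits this proof as ``symmetric'' to the upper-bound results, and your reflection dictionary ($\nu\mapsto\nu^\sharp$, $R_{t,s}(\mu^\sharp)=-R_{1-t-s,s}(\mu)$, $(\mu^\sharp)^{t+}=(\mu^{(1-t)-})^\sharp$) chained with Proposition \ref{prop:basic} is precisely that symmetry made explicit. The only nitpick is in your closing remark: at the endpoint $t+s=1$ the possibly infinite values are $-\infty$ rather than $+\infty$, and no extra truncation is needed since the sign-flip identities hold in $[-\infty,\infty]$ and Proposition \ref{prop:basic} already covers $t'=0$.
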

Similarly to the worst-case values,  for the best-case values of RVaR aggregation, it suffices to consider the one ended at quantile level 1, i.e. the $\ES$ aggregation. In particular, for the worst-case problems of quantile aggregation, it suffices to consider the one at quantile level 1, i.e. the problems $\inf_{\nu \in \Lambda(\boldsymbol \mu^{t-} )}q_1^+ (\nu)$ for generic choices of $\boldsymbol \mu$.


\begin{theorem}[Quantile aggregation]\label{th:qa-2}
	For $\boldsymbol \mu\in \M^n$, for $t\in (0,1]$, we have
	\begin{equation}\label{eq:main2pr}\inf_{\nu \in \Lambda(\boldsymbol \mu) } q^-_t (\nu) \ge \sup_{\boldsymbol \beta \in t \Delta_n} 	\sum_{i=1}^n R_{1-\beta_i-\beta_0,\beta_0 }(\mu_i).\end{equation}
	Moreover, \eqref{eq:main2pr} holds as an equality in the following cases:
	\begin{enumerate}[(i)]
		\item $n \leq 2$;
		\item \label{item2_incr}
		each of $\mu_1,\dots,\mu_n$ admits an increasing density below its $t$-quantile; 
		\item \label{item2_decr}
		each of $\mu_1,\dots,\mu_n$ admits a decreasing density below its $t$-quantile; 
		\item $\sum_{i=1}^n \mu_i  \left[q_0^+(\mu_i), q_{t}^-(\mu_i)\right)  \leq t$;
		\item $\sum_{i=1}^n \mu_i   \left(q_0^+(\mu_i), q_{t}^-(\mu_i)\right]    \leq t$.
		
	\end{enumerate}
\end{theorem}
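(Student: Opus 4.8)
The plan is to obtain Theorem~\ref{th:qa-2} as the mirror image of Theorem~\ref{th:qa-4} under reflection. For $\mu\in\M$ let $\check\mu$ denote the law of $-X$ when $X\sim\mu$, and for $\boldsymbol\mu=(\mu_1,\dots,\mu_n)$ put $\check{\boldsymbol\mu}=(\check\mu_1,\dots,\check\mu_n)$. First I would record the elementary reflection identities
$$
q_t^-(\check\mu)=-q_{1-t}^+(\mu),\qquad q_t^+(\check\mu)=-q_{1-t}^-(\mu),\qquad R_{\beta,\alpha}(\check\mu)=-R_{1-\beta-\alpha,\,\alpha}(\mu),
$$
valid for all admissible indices, where the last equation is read in the extended reals and uses that $q^+$ and $q^-$ coincide Lebesgue-a.e., so it is unaffected by the choice of quantile in \eqref{eq:r1}. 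Next I would observe that $\nu\mapsto\check\nu$ is a bijection from $\Lambda(\boldsymbol\mu)$ onto $\Lambda(\check{\boldsymbol\mu})$: if $X_1+\dots+X_n$ has law $\nu$ with $X_i\sim\mu_i$, then $(-X_1)+\dots+(-X_n)=-(X_1+\dots+X_n)$ has law $\check\nu$ with $-X_i\sim\check\mu_i$, and conversely.

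Granting these, for $t\in(0,1]$ one has
$$
\inf_{\nu\in\Lambda(\boldsymbol\mu)}q_t^-(\nu)=\inf_{\nu\in\Lambda(\boldsymbol\mu)}\bigl(-q_{1-t}^+(\check\nu)\bigr)=-\sup_{\eta\in\Lambda(\check{\boldsymbol\mu})}q_{1-t}^+(\eta).
$$
Applying the inequality of Theorem~\ref{th:qa-4} to $\check{\boldsymbol\mu}$ at level $1-t\in[0,1)$ gives $\sup_{\eta}q_{1-t}^+(\eta)\le\inf_{\boldsymbol\beta\in t\Delta_n}\sum_{i=1}^nR_{\beta_i,\beta_0}(\check\mu_i)$; substituting $R_{\beta_i,\beta_0}(\check\mu_i)=-R_{1-\beta_i-\beta_0,\beta_0}(\mu_i)$ and pulling the sign through the infimum turns the right-hand side into $-\sup_{\boldsymbol\beta\in t\Delta_n}\sum_{i=1}^nR_{1-\beta_i-\beta_0,\beta_0}(\mu_i)$, and negating both sides yields \eqref{eq:main2pr}. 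The absence of $\infty-\infty$ in these sums is inherited verbatim from the corresponding discussion for Theorem~\ref{th:qa-4}.

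For the equality cases I would check that the hypotheses (i)--(v) of Theorem~\ref{th:qa-2} for $\boldsymbol\mu$ are exactly the hypotheses (i)--(v) of Theorem~\ref{th:qa-4} for $\check{\boldsymbol\mu}$ at level $1-t$, so that sharpness transfers along the same chain of equalities. The case $n\le2$ is reflection-invariant. Since the $(1-t)$-quantile of $\check\mu_i$ equals $-q_t(\mu_i)$ and the region ``beyond'' it corresponds point-by-point to the region of $\mu_i$ below $q_t(\mu_i)$, a decreasing density of $\check\mu_i$ beyond its $(1-t)$-quantile is the same as an increasing density of $\mu_i$ below its $t$-quantile, and symmetrically for the other pair, matching (ii) and (iii). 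Finally, from the reflection identities, $\check\mu_i\bigl(q_{1-t}^+(\check\mu_i),q_1^-(\check\mu_i)\bigr]=\mu_i\bigl[q_0^+(\mu_i),q_t^-(\mu_i)\bigr)$ and $\check\mu_i\bigl[q_{1-t}^+(\check\mu_i),q_1^-(\check\mu_i)\bigr)=\mu_i\bigl(q_0^+(\mu_i),q_t^-(\mu_i)\bigr]$, while the threshold $1-(1-t)$ becomes $t$, matching (iv) and (v).

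The only point requiring genuine care is the bookkeeping of open versus closed endpoints, both in the reflection identities for $q^\pm$ and in the two mass identities above (reflection sends $(a,b]$ to $[-b,-a)$ and $[a,b)$ to $(-b,-a]$, which is exactly why (iv) and (v) swap roles across the two theorems). I do not expect any new mathematical obstacle: once this reflection dictionary is pinned down, the statement is a direct transcription of Theorem~\ref{th:qa-4}, which is precisely why the proof can be labelled ``symmetric'' and omitted in the main text.
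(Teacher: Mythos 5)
Your proposal is correct, and it is essentially the paper's intended argument: the appendix omits the proof of Theorem \ref{th:qa-2} precisely because it is the "symmetric" counterpart of Theorem \ref{th:qa-4}, and the reflection dictionary you set up (with $R_{\beta,\alpha}(\check\mu)=-R_{1-\beta-\alpha,\alpha}(\mu)$ and the endpoint swaps that exchange cases (iv) and (v)) is the same device the paper itself uses inside the proof of Theorem \ref{th:qa-4}(iii) via the measures $\tilde\mu_i$ of $-X_i$. Your bookkeeping of quantile levels, the scaling $t\Delta_n$, and the transfer of the five sharpness cases all check out.
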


Proposition \ref{prop:reduced_bound_sym} (symmetric to Proposition \ref{prop:reduced_bound}) concerns a reduced lower convolution bound.
\begin{proposition} \label{prop:reduced_bound_sym}
	For $\mu\in \M $ and $t\in (0,1]$, we have
	\begin{equation}\label{eq:main2hom}\inf_{\nu \in \Lambda_n(\mu) } q_t^- (\nu) \ge \sup_{\alpha \in (0,t/n)} n R_{1-t+(n-1)\alpha, t-n\alpha}(\mu) = \sup_{\alpha \in (0,t/n)} \frac{n}{t-n\alpha} \int_{\alpha}^{t-(n-1)\alpha} q_s^{-}(\mu) \d s.
	\end{equation}
	Moreover, \eqref{eq:main2hom} holds as an equality if $\mu$ admits an increasing density below its $t$-quantile.
\end{proposition}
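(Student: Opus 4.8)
The plan is to derive the proposition from Theorem~\ref{th:qa-2}, the full lower convolution bound on quantile aggregation, specialized to the homogeneous case $\mu_1=\dots=\mu_n=\mu$. For the inequality in \eqref{eq:main2hom}, I restrict the supremum in Theorem~\ref{th:qa-2} to the symmetric slice $\beta_1=\dots=\beta_n=\alpha$, $\beta_0=t-n\alpha$, which lies in $t\Delta_n$ exactly when $\alpha\in(0,t/n)$; since $1-\alpha-(t-n\alpha)=1-t+(n-1)\alpha$, the sum $\sum_{i=1}^n R_{1-\beta_i-\beta_0,\beta_0}(\mu)$ collapses to $nR_{1-t+(n-1)\alpha,\,t-n\alpha}(\mu)$, giving the first inequality. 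The identity between the two right-hand sides is a change of variables: inserting $\beta=1-t+(n-1)\alpha$ and width $t-n\alpha$ into $R_{\beta,\alpha'}(\mu)=\frac1{\alpha'}\int_\beta^{\beta+\alpha'}q^+_{1-u}(\mu)\,\d u$ and substituting $s=1-u$ turns the limits $1-t+(n-1)\alpha$ and $1-\alpha$ into $\alpha$ and $t-(n-1)\alpha$, while $q^+_s=q^-_s$ a.e.\ finishes it; note that $0\le\beta$ and $\beta+(t-n\alpha)=1-\alpha<1$, so the RVaR is finite.

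For the ``moreover'' part, assume $\mu$ admits an increasing density below its $t$-quantile. Theorem~\ref{th:qa-2}(ii), applied with identical marginals, gives $\inf_{\nu\in\Lambda_n(\mu)}q^-_t(\nu)=\sup_{\boldsymbol\beta\in t\Delta_n}\sum_{i=1}^n R_{1-\beta_i-\beta_0,\beta_0}(\mu)$, so it remains to show this supremum equals the right-hand side of \eqref{eq:main2hom}; the ``$\ge$'' direction is the inequality already proved, so only ``$\le$'' is needed. Fix $\beta_0\in(0,t)$ and write $R_{1-\beta-\beta_0,\beta_0}(\mu)=\frac1{\beta_0}\int_\beta^{\beta+\beta_0}q^+_s(\mu)\,\d s=:g(\beta)$ (same substitution $s=1-u$); one must maximize $\sum_{i=1}^n g(\beta_i)$ over the compact simplex $\{\beta_i\ge0,\ \sum_i\beta_i=t-\beta_0\}$. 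The hypothesis says $\mu^{t-}$ has an increasing density, i.e.\ the distribution function of $\mu$ is convex on $[q^+_0(\mu),q^-_t(\mu))$, i.e.\ $q^-_{\cdot}(\mu)$ (equivalently $q^+_{\cdot}(\mu)$, which agrees with it a.e.) is concave on $(0,t)$; since $\beta+r\in[0,t]$ for $\beta\in[0,t-\beta_0]$ and $r\in[0,\beta_0]$, the average of translates $g(\beta)=\frac1{\beta_0}\int_0^{\beta_0}q^+_{\beta+r}(\mu)\,\d r$ is concave on $[0,t-\beta_0]$. By Jensen's inequality $\sum_i g(\beta_i)\le n\,g\!\left(\tfrac{t-\beta_0}{n}\right)$, attained at the barycenter, so the inner supremum equals $nR_{1-t+(n-1)\alpha,\,t-n\alpha}(\mu)$ with $\alpha=(t-\beta_0)/n$; taking the supremum over $\beta_0\in(0,t)$, equivalently $\alpha\in(0,t/n)$, gives ``$\le$'' and hence equality.

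The conceptual content is thin given Theorem~\ref{th:qa-2}: the only substantive step is the Schur-concavity/Jensen reduction of the $n$-dimensional simplex optimization to its symmetric diagonal, which is immediate once $g$ is known to be concave. I expect the real work to be bookkeeping: translating ``increasing density below the $t$-quantile'' cleanly into concavity of $q^-_{\cdot}(\mu)$ on $(0,t)$ (using that a positive density makes $F$ a homeomorphism onto its range there, so $F^{-1}$ is unambiguously concave), handling the left/right quantile convention inside integrals via $q^+=q^-$ a.e., and checking the degenerate behaviour as $\alpha\downarrow0$ or $\alpha\uparrow t/n$ and the trivial case $n=1$. An alternative that avoids Theorem~\ref{th:qa-2}(ii) is to pass to $-X$ and invoke the reflected Wang--Peng--Yang formula (Proposition~8.32 of \cite{MFE15}) for the homogeneous worst-case VaR under decreasing densities, then sandwich it between \eqref{eq:main2hom} and the true value, mirroring the remark following Proposition~\ref{prop:reduced_bound}; but routing through Theorem~\ref{th:qa-2} keeps everything internal.
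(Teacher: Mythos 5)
Your proof is correct, and the first half (restricting the supremum in Theorem \ref{th:qa-2} to the diagonal $\beta_1=\dots=\beta_n=\alpha$, $\beta_0=t-n\alpha$, plus the change of variables $s=1-u$) is exactly how the paper obtains the inequality in \eqref{eq:main2hom}. Where you diverge is the equality part: the paper (whose proof of this proposition is the mirror image of its proof of Proposition \ref{prop:reduced_bound}) does not touch the full $n$-dimensional bound at all; it invokes the known homogeneous sharp formula for monotone densities (Proposition 1 of \cite{EPRWB14}, equivalently \citet[Theorem 3.4]{WPY13} / Proposition 8.32 of \cite{MFE15}, reflected via $X\mapsto -X$), which says the extremal value equals the conditional-mean expression at \emph{some} $\alpha$, and then sandwiches it against the already-proved inequality. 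You instead stay internal to the paper: you use sharpness of the full convolution bound (Theorem \ref{th:qa-2}(ii)) and reduce the $n$-dimensional supremum to the symmetric slice by a Jensen/Schur-concavity argument, using that an increasing density below the $t$-quantile makes $s\mapsto q_s^-(\mu)$ concave on $(0,t)$ (the same observation the paper exploits in Example \ref{ex:ex0}) and hence $\beta\mapsto\frac1{\beta_0}\int_\beta^{\beta+\beta_0}q_s^-(\mu)\,\d s$ concave. Both routes are sound. The paper's is shorter but leans on the external analytic formula; yours is self-contained given Theorem \ref{th:qa-2}(ii) and yields as a by-product that the full and reduced convolution bounds coincide for homogeneous marginals in this increasing-density regime, a fact the paper does not state (and which, as Example \ref{ex:ex0} shows, fails for the mirror-image worst-case problem, where concavity pushes Jensen the wrong way). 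The minor loose ends in your write-up (possible $-\infty$ values of $g$ near $0$, and the a.e.\ identification $q^+=q^-$) are harmless: under the concavity hypothesis no $+\infty$ terms can occur for $\beta_i+\beta_0\le t$, so the Jensen step never meets an ``$\infty-\infty$''.
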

Proposition \ref{prop:nu_sym} (symmetric to Proposition \ref{prop:nu})  shows that $\inf_{\nu \in \Lambda(\boldsymbol \mu) } q_1^- (\nu) $ is always attainable and the infimum can be replaced by a minimum.
\begin{proposition}\label{prop:nu_sym}
	For $\boldsymbol \mu\in \M^n$ and $t \in (0,1]$,  there exists $\nu_- \in \Lambda(\boldsymbol \mu)$ such that
	$
	\inf_{\nu \in \Lambda(\boldsymbol \mu) } q_t^- (\nu) =  q_t^- (\nu_-).
	$
\end{proposition}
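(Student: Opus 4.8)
The plan is to mirror the proof of Proposition~\ref{prop:nu}, which for the supremum case follows from Lemma~4.2 of \cite{BJW14}. Two routes are available: one can apply the reflection $X\mapsto -X$ (which turns a left quantile of a sum into a right quantile of the reflected sum and exchanges infima with suprema) to reduce verbatim to Proposition~\ref{prop:nu}; or one can give a short self-contained compactness argument. I would take the second route, since the bookkeeping of left/right quantiles under reflection is more error-prone than simply redoing the argument.

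Fix $\boldsymbol\mu=(\mu_1,\dots,\mu_n)\in\M^n$ and $t\in(0,1]$, and set $I=\inf_{\nu\in\Lambda(\boldsymbol\mu)}q^-_t(\nu)$. First I would record two standard ingredients. (a) The set $\Gamma(\boldsymbol\mu)$ of couplings with the prescribed one-dimensional marginals is tight and weakly closed, hence weakly sequentially compact by Prokhorov's theorem; since the summation map $\mu\mapsto\lambda_\mu$ is continuous for weak convergence, $\Lambda(\boldsymbol\mu)$ is weakly sequentially compact as well. (b) The functional $\nu\mapsto q^-_t(\nu)$ is upper semicontinuous for weak convergence: if $\nu_k\to\nu$ weakly and $q^-_t(\nu_k)\le x$ for every $k$, then $\nu_k((-\infty,x])\ge t$ for every $k$, and since $(-\infty,x]$ is closed the portmanteau theorem yields $\nu((-\infty,x])\ge\limsup_k\nu_k((-\infty,x])\ge t$, so $q^-_t(\nu)\le x$; letting $x$ decrease to $\limsup_k q^-_t(\nu_k)$ gives $q^-_t(\nu)\le\limsup_k q^-_t(\nu_k)$.

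The argument then closes at once: pick a minimizing sequence $\nu_k\in\Lambda(\boldsymbol\mu)$ with $q^-_t(\nu_k)\to I$, extract via (a) a subsequence converging weakly to some $\nu_-\in\Lambda(\boldsymbol\mu)$, and deduce from (b) that $q^-_t(\nu_-)\le\limsup_k q^-_t(\nu_k)=I$; the opposite inequality $q^-_t(\nu_-)\ge I$ is immediate from $\nu_-\in\Lambda(\boldsymbol\mu)$. Hence $\nu_-$ attains the infimum. As a byproduct, because $q^-_t$ never takes the value $-\infty$ on a probability measure when $t>0$, this also shows $I>-\infty$.

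The one point that needs care is the direction of the semicontinuity: one must pair $\limsup$ with the portmanteau inequality for the \emph{closed} half-line $(-\infty,x]$, which is exactly what makes $q^-_t$ upper semicontinuous (rather than lower semicontinuous) and hence compatible with taking an infimum — by contrast, the symmetric Proposition~\ref{prop:nu} relies on lower semicontinuity of $q^+_t$ via open half-lines. Beyond getting this direction right, I do not anticipate any genuine obstacle.
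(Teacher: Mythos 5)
Your argument is correct, but it proceeds differently from the paper: the paper omits the proof entirely, declaring it symmetric to Proposition \ref{prop:nu}, whose own proof just cites Lemma 4.2 of \cite{BJW14} (i.e., the reflection route you mention first and then decline). Your second, self-contained route — tightness and weak closedness of $\Gamma(\boldsymbol\mu)$, Prokhorov, continuity of the summation pushforward so that $\Lambda(\boldsymbol\mu)$ is weakly sequentially compact (the paper itself records closedness of $\Lambda(\boldsymbol\mu)$ via Theorem 2.1 of \cite{BJW14}), and then the inequality $q_t^-(\nu)\le\limsup_k q_t^-(\nu_k)$ obtained from the portmanteau bound on the closed half-line $(-\infty,x]$ — is a complete replacement for that citation, works for all $t\in(0,1]$ including $t=1$ (where values may be $+\infty$, in which case attainment is trivial), and has the advantage of not importing the bookkeeping of the reflection $X\mapsto -X$. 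What the paper's route buys is brevity and uniformity with the rest of Appendix A; what yours buys is a proof readable without \cite{BJW14}. One caveat: your naming of the key property is reversed relative to standard usage and to the paper's own wording. The inequality you prove and use, $q_t^-(\nu)\le\limsup_k q_t^-(\nu_k)$ along weakly convergent sequences (hence $\le\liminf$, by applying it to subsequences), is \emph{lower} semicontinuity of $q_t^-$, which is exactly what infimum attainment over a compact set requires; correspondingly, Proposition \ref{prop:nu} rests on \emph{upper} semicontinuity of $q_t^+$, as the paper states in its proof. This is purely terminological — the displayed inequalities and their portmanteau derivation are the right ones — but you should fix the labels so the closing paragraph does not assert the opposite of what you proved.
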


Proposition \ref{th:multiply_lower} (symmetric to Proposition \ref{th:multiply}) presents a lower convolution bound for multiplicative risks.

\begin{proposition}\label{th:multiply_lower}
	For $\mu_1, \cdots, \mu_n \in \M$ with support included in $(0, \infty)$, we have
	\begin{equation}\label{eq:main2pr:multiply}
		\inf_{X_i \sim \mu_i, i = 1, \cdots, n } q^-_{t} \left( \prod_{i = 1}^n X_i \right) \ge  \exp \left\{ \sup_{\boldsymbol \beta\in t\Delta_n}  \sum_{i=1}^n R_{1-\beta_i-\beta_0,\beta_0 } \left(  \mu_i\circ \exp \right) \right\}, ~~ t\in (0,1].
	\end{equation}
	Moreover, \eqref{eq:main2pr:multiply} holds as an equality in the following cases (denote by $f_1, \dots, f_n$ the densities of $X_1, \dots, X_n$):
	\begin{enumerate}
		\item[(i)] $n \leq 2$;
		\item[(ii)]
		for each $i = 1, \cdots, n$,  $x\mapsto xf_i(x)$  is decreasing beyond the $t$-quantile of $\mu_i$;
		\item[(iii)]
		for each $i = 1, \cdots, n$, $x\mapsto xf_i(x)$ is increasing beyond the $t$-quantile of $\mu_i$.
		\item[(vi)] $\sum_{i=1}^n \mu_i  \left[q_0^+(\mu_i), q_{t}^-(\mu_i)\right)  \leq t$;
		\item[(vii)] $\sum_{i=1}^n \mu_i   \left(q_0^+(\mu_i), q_{t}^-(\mu_i)\right]    \leq t$.
	\end{enumerate}
\end{proposition}

\subsection*{Acknowledgements}
We thank an Editor, an Associate Editor, three anonymous referees,  Peng Liu, and Giovanni Puccetti for many helpful comments that have greatly improved the paper. JB and YL gratefully acknowledge financial support from the Air Force Office of Scientific Research under award number FA9550-20-1-0397, and additional support is gratefully acknowledged from NSF 1915967, 2118199, 2229011. YL acknowledges support from The Chinese University of Hong Kong, Shenzhen research startup fund (No. UDF01003336) and Shenzhen Excellent Science and Technology Innovation Talents Development Plan (No. RCBS20231211090814028) and is partly supported by the Guangdong Provincial Key Laboratory of Mathematical Foundations for Artificial Intelligence (Grant No. 2023B1212010001). HL acknowledges support from the National Science Foundation under grants CAREER CMMI-1834710 and IIS-1849280. RW acknowledges financial support from the Natural Sciences and Engineering Research Council of Canada (RGPIN-2024-03728, CRC-2022-00141).

%

\small

%

\newpage

\setcounter{lemma}{0}
\renewcommand{\thelemma}{EC.\arabic{lemma}}
\setcounter{proposition}{0}
\renewcommand{\theproposition}{EC.\arabic{proposition}}
\setcounter{theorem}{0}
\renewcommand{\thetheorem}{EC.\arabic{theorem}}
\setcounter{definition}{0}
\renewcommand{\thedefinition}{EC.\arabic{definition}}
\setcounter{corollary}{0}
\renewcommand{\thecorollary}{EC.\arabic{corollary}}
\setcounter{example}{0}
\renewcommand{\theexample}{EC.\arabic{example}}
\setcounter{equation}{0}
\renewcommand{\theequation}{EC.\arabic{equation}}
\setcounter{remark}{0}
\renewcommand{\theremark}{EC.\arabic{remark}}

\begin{center}
	\Large Proposed E-Companion: Technical Appendices  \ref{app:proof}-\ref{app:applications}
\end{center}

\normalsize

\section{Proofs of main results}\label{app:proof}

\subsection{Proofs in Section \ref{sec:rvar}}
\label{app:D1}
We first present a lemma slightly generalizing the RVaR inequalities in Theorem 1 of \cite{ELW18} to include distributions possibly with no finite mean.

\begin{lemma}\label{lem:1}
	Let  $\alpha_1,\dots,\alpha_n,\beta_1,\dots,\beta_n\in [0,1]$. Denote by $b = \sum_{i=1}^n \beta_i$ and $a = \bigvee_{i=1}^n \alpha_i$. If $b + a \le 1$, then for all $\boldsymbol{\mu} = (\mu_1,\dots,\mu_n)\in \M^n$ and $\nu \in \Lambda(\boldsymbol{\mu})$,
	\begin{equation}\label{eq:lem1-1}
		R_{b, a}(\nu) \leq \sum_{i=1}^n R_{\beta_i, \alpha_i}(\mu_i),
	\end{equation}
	provided the right-hand side of \eqref{eq:lem1-1} is well-defined (no  ``$\infty - \infty$").
\end{lemma}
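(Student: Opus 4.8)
When $\mu_1,\dots,\mu_n\in\M_1$, inequality \eqref{eq:lem1-1} is precisely \eqref{eq:e0}, i.e.\ Theorem~1 of \cite{ELW18}; so the only task is to remove the finite-mean hypothesis. If the right-hand side of \eqref{eq:lem1-1} equals $+\infty$ there is nothing to prove, so I will assume it is finite; then $R_{\beta_i,\alpha_i}(\mu_i)<\infty$ for every $i$, and $\ES_{\alpha_i}(\mu_i)<\infty$ whenever $\beta_i=0$. Fix a coupling $(X_1,\dots,X_n)$ with $X_i\sim\mu_i$ and $X_1+\dots+X_n\sim\nu$, and write $S=\sum_{i=1}^n X_i$. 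The plan is to squeeze $S$ between bounded sums, to which \eqref{eq:e0} applies, and pass to the limit. The device that streamlines this is the observation that truncating $\mu_i$ \emph{from above} at any level $m\ge q^+_{1-\beta_i}(\mu_i)$ leaves $R_{\beta_i,\alpha_i}(\mu_i)$ unchanged when $\beta_i>0$ (because $R_{\beta_i,\alpha_i}$ averages quantiles only at levels $\le 1-\beta_i$) and can only decrease it when $\beta_i=0$; hence upper truncation never increases $\sum_{i=1}^n R_{\beta_i,\alpha_i}(\cdot)$.

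Concretely, pick $m_0\in\R$ with $m_0\ge q^+_{1-\beta_i}(\mu_i)$ for every $i$ with $\beta_i>0$ (finite, since $q^+_u$ is finite at every level $u<1$), and for $m\ge m_0$, $M>0$ set $X_i^{m,M}=(X_i\wedge m)\vee(-M)$, with law $\mu_i^{m,M}$, and $Y_{m,M}=\sum_{i=1}^n X_i^{m,M}$. Each $X_i^{m,M}$ is bounded, so \eqref{eq:e0} gives $R_{b,a}(Y_{m,M})\le\sum_{i=1}^n R_{\beta_i,\alpha_i}(\mu_i^{m,M})$. Letting $M\uparrow\infty$ with $m$ fixed, $Y_{m,M}\downarrow S_m:=\sum_{i=1}^n(X_i\wedge m)\le nm$; since $q^+_u(Y_{m,M})$ is nonincreasing in $M$ and bounded above by the constant $nm$ on the finite interval $[1-b-a,1-b]$, dominated convergence gives $R_{b,a}(Y_{m,M})\downarrow R_{b,a}(S_m)$ in $[-\infty,nm]$, and likewise $R_{\beta_i,\alpha_i}(\mu_i^{m,M})\downarrow R_{\beta_i,\alpha_i}(\mu_i^{[m]})$. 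Hence $R_{b,a}(S_m)\le\sum_{i=1}^n R_{\beta_i,\alpha_i}(\mu_i^{[m]})\le\sum_{i=1}^n R_{\beta_i,\alpha_i}(\mu_i)$ for all $m\ge m_0$, the last step by the observation above. Finally, letting $m\uparrow\infty$, $S_m\uparrow S$, so by monotone convergence (or Fatou's lemma) $R_{b,a}(\nu)=R_{b,a}(S)\le\liminf_m R_{b,a}(S_m)\le\sum_{i=1}^n R_{\beta_i,\alpha_i}(\mu_i)$, which is the claim.

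The routine parts are the monotone/dominated convergence of quantile integrals under a.s.\ monotone limits, and checking that every finite sum above is well defined — which is exactly where the ``no $\infty-\infty$'' proviso enters: together with $b+a\le1$ it forces any term equal to $-\infty$ (possible only when $\beta_i+\alpha_i=1$) to be the unique index with $\beta_i>0$, so no indeterminate sum ever occurs at any stage. The \textbf{main obstacle} is the interchange of limits with the nonlinear functional $R_{b,a}$ and with the summation when infinite values are present; this is precisely what dictates the two-stage truncation — first from below, where the uniform bound $S_m\le nm$ supplies a dominating function, then from above, a genuinely increasing limit handled by monotone convergence — and makes the ``free'' upper truncation at the $(1-\beta_i)$-quantile the key tool for keeping the right-hand side under control throughout.
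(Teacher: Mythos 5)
Your argument works when $b+a<1$, but the final interchange of limits breaks down precisely in the boundary case $b+a=1$, and that case is an essential part of the lemma (it is the one used for $\LES$ and quantile-level-$0$ aggregation). The problem is that after the upper truncation the sum $S_m=\sum_{i=1}^n(X_i\wedge m)$ can acquire a lower tail so heavy that $R_{b,a}(S_m)=-\infty$ for every $m$, while $R_{b,a}(S)$ is finite; then the inequality $R_{b,a}(S)\le\liminf_m R_{b,a}(S_m)$ you invoke is simply false, and neither monotone convergence nor Fatou applies because on $[b,b+a]=[b,1]$ the integrands $q^+_{1-t}(S_m)$ have no integrable lower bound (the integral reaches quantile level $0$). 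Concretely, take $n=2$, $\mu_1=\mu_2$ standard Cauchy, $X_2=-X_1$, $\beta_1=\beta_2=1/4$, $\alpha_1=\alpha_2=1/2$, so $b=a=1/2$ and the right-hand side equals $0$ (well defined and finite). Here $S=0$ and $R_{1/2,1/2}(S)=0$, but $S_m=-(|X_1|-m)^+$ satisfies $\E[(|X_1|-m)^+]=\infty$, hence $R_{1/2,1/2}(S_m)=-\infty$ for every $m$: your chain of inequalities ends with the vacuous $-\infty\le 0$, and the limit step cannot recover $R_{1/2,1/2}(S)\le 0$. For $b+a<1$ your argument is fine: there the quantile levels stay in $[1-b-a,1-b]\subset(0,1)$, so the integrands are uniformly bounded below by the constant $q^+_{1-b-a}(S_{m_0})$, your monotone-convergence steps are legitimate, and every truncated right-hand term is a finite real number since $\beta_i+\alpha_i\le b+a<1$.

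The paper avoids this trap by never truncating the random variables in the boundary cases: it proves the interior case $b+a<1$, $b>0$ by approximating the joint law weakly by laws with finite-mean marginals and using continuity of $R_{\beta,\alpha}$ under weak convergence, and then handles $b+a=1$ (and $b=0$) by perturbing the parameters rather than the variables --- if $R_{b,a}(\nu)=-\infty$ the claim is trivial, and otherwise one applies the interior case with $\alpha_i-\epsilon$ (respectively $\beta_i=\epsilon$) and lets $\epsilon\downarrow0$, using monotonicity in $\epsilon$ on both sides; the case $b=0$, $a=1$ is settled by linearity of the mean. You could rescue your truncation proof along the same lines: keep it for $b+a<1$, and for $b+a=1$ split on whether $R_{b,a}(\nu)=-\infty$ and otherwise use $R_{b,a}(\nu)=\lim_{\epsilon\downarrow0}R_{b,a-\epsilon}(\nu)$ together with the already-proved case. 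As written, however, this step is missing, so the proof is incomplete. A minor further omission: assuming the right-hand side is finite skips the admissible case where it equals $-\infty$ (well defined but infinite), which again can only occur when some $\beta_i+\alpha_i=1$, i.e., inside the same problematic boundary case.
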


\begin{proof}[Proof of Lemma \ref{lem:1}]
	Theorem 1 of \cite{ELW18} with the notation $\mathrm{RVaR}_{\beta,\alpha}(\mu)=R_{\beta,\alpha}(\mu)$ for   $\alpha,\beta\geq0$, $\alpha+\beta\le 1$
	gives \eqref{eq:lem1-1} if $\mu_1,\dots,\mu_n \in \M_1$.
	For $\mu_1,\dots,\mu_n$ that do not necessarily have finite means, we always assume that the right-hand side of \eqref{eq:lem1-1} is well-defined (no ``$\infty - \infty$").
	
	If there exists some $i$ such that $R_{\beta_i,\alpha_i} (\mu_i) = \infty$,  \eqref{eq:lem1-1} holds trivially. Now we assume $R_{\beta_i, \alpha_i} (\mu_i) < \infty$, $i = 1,\cdots,n$. There are four cases:
	\begin{enumerate}
		\item Suppose $b + a<1$ and $b>0$.  In this case, $R_{b, a}$ and $R_{\beta_i, \alpha_i}$ are continuous with respect to weak convergence on $\M$ (see e.g.~\cite{CDS10}). For $\mu\in\Gamma(\mu_1,\dots,\mu_n)$ such that $\nu=\lambda_\mu$,
		we can find a sequence $\mu^{(k)}$, $k\in \N$ such that all one-dimensional margins of $\mu^{(k)}$ are in $\M_1$,
		and $\mu^{(k)}\to \mu$ weakly as $k\to \infty$.
		As a consequence, all one-dimensional margins of $\mu^{(k)}$, as well as its projection $\lambda_{\mu_k}$, converge weakly.
		Since  \eqref{eq:lem1-1} holds for probability measures in $\M_1$, using the continuity of $R_{b, a}$ and $R_{\beta_i, \alpha_i}$, we know \eqref{eq:lem1-1} holds in this case.
		\item   Suppose $b + a=1$ and $b>0$.
		If
		$R_{1-a, a}(\nu)=-\infty$, \eqref{eq:lem1-1} holds trivially.
		If
		$R_{1-a, a}(\nu)>-\infty$,
		then
		$$
		\lim_{\epsilon \downarrow 0} R_{1-a, a-\epsilon}(\nu) = R_{1-a, a}(\nu)
		$$
		since $R_{1-a, a-\epsilon}(\nu)$ is monotone for $\epsilon\in (0,a)$.
		In the first case, we have shown, for $\epsilon\in (0,\bigwedge_{i=1}^n \alpha_i)$,
		$$
		R_{1-a,a-\epsilon}(\nu)\le \sum_{i=1}^n R_{\beta_i, \alpha_i-\epsilon}(\mu_i).
		$$
		Taking a limit as $\epsilon\downarrow 0$ establishes \eqref{eq:lem1-1}.
		\item Suppose $b + a < 1$ and $b=0$. It implies that $\beta_1=\cdots=\beta_n=0$. Because $R_{0, \alpha_i} (\mu_i) < \infty$, $i=1,\cdots,n$, we have
		$$
		\lim_{\epsilon \downarrow 0} R_{\epsilon, \alpha_i}(\mu_i) = R_{0, \alpha_i}(\mu_i),
		$$
		since $R_{\epsilon, \alpha_i}(\mu_i)$ is monotone for $\epsilon\in (0, 1-\alpha_i)$, $i = 1,\cdots, n$. In the first case, we have shown, for $\epsilon\in (0, 1-a)$,
		$$
		R_{n \epsilon, a}(\nu)\le \sum_{i=1}^n R_{\epsilon, \alpha_i}(\mu_i).
		$$
		Taking a limit as $\epsilon\downarrow 0$ establishes \eqref{eq:lem1-1}.
		\item Suppose $b + a = 1$ and $b=0$. It implies that $\beta_1=\cdots=\beta_n=0$. Because $R_{0, \alpha_i} (\mu_i) < \infty$, $i=1,\cdots,n$, we know $\sum_{i=1}^n R_{0, 1}(\mu_i)$ is well defined. By the linearity of $R_{0,1}$, we have
		$$
		R_{0, 1}(\nu) = \sum_{i=1}^n R_{0, 1}(\mu_i) \leq \sum_{i=1}^n R_{0, \alpha_i}(\mu_i),
		$$
		which establishes \eqref{eq:lem1-1}.
		 \qedhere
	\end{enumerate}
\end{proof}

\begin{proof}[Proof of Theorem \ref{th:qa-4prime}]
	The inequality \eqref{eq:prime1} is shown in the text above Theorem \ref{th:qa-4prime}. We proceed to prove the sharpness under the following cases.
	\begin{enumerate}[(i)]
		\item
		If $t = 0$, then $R_{t, s} = \ES_{s}$ and $\{\boldsymbol{\beta} \in (1-t) \Delta_n: \beta_0 \geq s\} = \{(s, 0, \cdots, 0)\}$.
		It is well known (e.g., \cite{K01}) that $\ES_s$ is subadditive and comonotonic additive, which gives
		$$
		\sup_{\nu \in \Lambda (\boldsymbol{\mu})} R_{0, s}(\nu) = \sup_{\nu \in \Lambda (\boldsymbol{\mu})}\ES_s(\nu) = \sum_{i=1}^n \ES_{s} (\mu_i) = \inf_{\substack{\boldsymbol \beta\in (0+s)\Delta_n \\\beta_0 \geq s}} \sum_{i=1}^n R_{\beta_i,\beta_0}(\mu_i).
		$$ 		
		
		\item
		%
		%
		%
		
		
		\textbf{Step 1}: Using Proposition \ref{prop:basic} (which will be shown later), we   have
		\begin{equation}\label{eq:rvar_problem_at_0}
			\sup_{\nu \in \Lambda(\boldsymbol \mu) } R_{t,s} (\nu) = \sup_{\nu \in \Lambda(\boldsymbol \mu^{(1-t-s)+}) } \LES_{\frac{s}{t+s}} (\nu).
		\end{equation}
		Hence, it suffices to consider the problem of the right-hand side of \eqref{eq:rvar_problem_at_0}.
		
		\textbf{Step 2}: Since each of $\mu_1,\dots,\mu_n$ admits a decreasing density beyond its $(1-t-s)$-quantile,  each of the measures $\mu_1^{(1-t-s)+}, \dots, \mu_n^{(1-t-s)+}$ admits a decreasing density on its support. We can define an aggregate random variable $T_{s_n}$ by (see Equation (3.4) of \cite{JHW16})
		$$
		T_{s_n}=h(U)\id_{\{U\in (0,s_n)\}} + d(s_n) \id_{\{ U\in[s_n,1]\}},
		$$
		which will be explained below.
		\begin{enumerate}[(a)]
			\item We can write $T_{s_n} = \sum_{i=1}^n X_i$ where $X_i\sim \mu_i^{(1-t-s)+}$, $i=1,\dots,n$. Let $\nu_0$ be the distribution measure of $T_{s_n}$. Lemma 3.4 (c) of \cite{JHW16} gives $\nu_0\in \Lambda(\boldsymbol \mu^{t+})$.
			\item $U$ is a uniform random variable on $[0,1]$, $h,d:[0,1]\to \R$ are functions and $s_n \in [0,1]$ is a constant. They are given by:
			$$
			\begin{aligned}
				h(x) &= \sum_{i=1}^n y_i(x) - (n-1)y(x), ~~ x \in (0, 1),\\
				d(x) &= \frac{1}{1-x} \sum_{i=1}^n \E\left[X_i \id_{\{ y_i(x) - y(x) \leq X_i \leq y_i(x) \}}\right], ~~ x \in (0,1),\\
				s_n &= \inf\{x\in (0,1): h(x) \leq d(x) \},
			\end{aligned}
			$$
			where $y,y_1,\dots,y_n$ are functions on $(0,1)$ satisfying  (see Equations (E1)-(E2) of \cite{JHW16})
			$$
			\begin{aligned}
				&(\mbox{E1}): ~~ \sum_{i=1}^n \p(X_i>y_i(x))=x,\\
				&(\mbox{E2}): ~~ \p(y_i(x)-y(x)< X_i\le y_i(x))=1-x,~~~i=1,\dots,n.
			\end{aligned}
			$$
			\item
			According to Lemma 3.2 of \cite{JHW16}, $h$ is a decreasing function on $(0,s_n)$. Hence, for all $u \in (0, s_n)$, we have $h(u) \geq d(s_n)$, and further $d(s_n) = q_0^+ (\nu_0)$.
		\end{enumerate}
		\textbf{Step 3}: Denote by $a = \min\{\frac{t}{t+s}, s_n\}$. We proceed to show
		\begin{equation}\label{eq:rvar_da}
			\LES_{\frac{s}{t+s}}(\nu_0) = d(a).
		\end{equation}
		We verify this by direct computation. If $t/(t+s) \geq s_n$, then
		\begin{equation*}
			\LES_{\frac{s}{t+s}}(\nu_0) = \frac{1}{\frac{s}{t+s}}\E\left[T_{s_n} \id_{\{U \in [\frac{t}{t+s}, 1]\}}\right] = d(s_n);
		\end{equation*}
		if $t/(t+s) < s_n$, then
		\begin{equation*}
			\begin{aligned}
				\LES_{\frac{s}{t+s}}(\nu_0) &= \frac{1}{\frac{s}{t+s}}\E\left[T_{s_n} \id_{\{U \in [\frac{t}{t+s}, 1]\}}\right]\\
				&= \frac{1}{\frac{s}{t+s}} \left(\E\left[T_{s_n}\right] - \E\left[h(U) \id_{\{ U \in (0, \frac{t}{t+s}) \}}\right]\right)\\
				&= \frac{1}{\frac{s}{t+s}} \left(\sum_{i=1}^n \E\left[X_i\right] - \sum_{i=1}^n \E\left[X_i (\id_{\left\{ X_i > y_i(\frac{t}{t+s}) \right\}} + \id_{\left\{ X_i < y_i(\frac{t}{t+s}) - y(\frac{t}{t+s}) \right\}})\right]\right)\\
				&= \frac{1}{\frac{s}{t+s}} \sum_{i=1}^n \E\left[X_i \id_{\left\{ y_i(\frac{t}{t+s}) - y(\frac{t}{t+s}) \leq X_i \leq y_i(\frac{t}{t+s}) \right\}}\right] = d\left(\frac{t}{t+s}\right),
			\end{aligned}
		\end{equation*}
		where the third equality is due to Lemma 3.3 of \cite{JHW16}.
		
		\textbf{Step 4}:
		We now show
		\begin{equation}\label{eq:da_beta}
			d(a) = R_{\boldsymbol{\beta}}^+(\boldsymbol{\mu}),
		\end{equation}
		for some $\boldsymbol{\beta} \in (t+s)\Delta_n$ satisfying $\beta_0 \geq s$, which is defined by
		$$
		\begin{aligned}
			&\beta_0 = (t+s) (1-a), ~~\beta_i = (t+s) \mu_i^{(1-t-s)+}(y_i(a), \infty) = \mu_i(y_i(a), \infty), ~i=1,\cdots,n.
		\end{aligned}
		$$
		
		According to (E1), $\sum_{i=1}^{n}\beta_i=(t+s)a$. We have $(\beta_0, \beta_1, \cdots, \beta_n) \in (t+s)\Delta_n$, and $\beta_0 \geq s$. Hence,
		\begin{align*}
			d(a) & = \sum_{i=1}^n \frac{1}{1-a} \E\left [ X_i \id_{\{y_i(a)-y(a)\le X_i \le y_i(a)\} }\right]
			\\& = \sum_{i=1}^n \frac{1}{1-a} \int_{y_i(a) - y(a)}^{y_i(a)} x \mu_i^{(1-t-s)+} (\d x)\\
			& = \sum_{i=1}^n \frac{1}{1-a} \int_{a-\frac{\beta_i}{t+s}}^{1-\frac{\beta_i}{t+s}} q_u^- (\mu_i^{(1-t-s)+}) \d u\\
			& = \sum_{i=1}^n \frac{1}{1-a} \int_{a-\frac{\beta_i}{t+s}}^{1-\frac{\beta_i}{t+s}} q_{1-t-s+(t+s)u}^- (\mu_i) \d u\\
			& = \sum_{i=1}^n \frac{1}{(1-a)(t+s)} \int_{1-\beta_i-\beta_0}^{1-\beta_i} q_{v}^- (\mu_i) \d v\\
			& = \sum_{i=1}^n \frac{1}{\beta_0} \int_{1-\beta_i-\beta_0}^{1-\beta_i} q_v^- (\mu_i) \d v
			= \sum_{i=1}^n R_{\beta_i,\beta_0}(\mu_i),
		\end{align*}
		where the third equality is due to the fact that $\mu_i^{(1-t-s)+} (-\infty, y_i(a) - y(a) ] = a -\frac{\beta_i}{t+s}$ derived from (E2).

		\textbf{Step 5}: Combining \eqref{eq:rvar_problem_at_0}, \eqref{eq:rvar_da} and \eqref{eq:da_beta}, we have
		$$
		\begin{aligned}
			\sup_{\nu \in \Lambda(\boldsymbol \mu) } R_{t,s} (\nu)
			= \sup_{\nu \in \Lambda(\boldsymbol \mu^{(1-t-s)+}) } \LES_{\frac{s}{t+s}} (\nu)
			& \ge \LES_{\frac{s}{t+s}} (\nu_0)
			\\ &=d(a) =  \sum_{i=1}^n R_{\beta_i,\beta_0}(\mu_i) \ge \inf_{\substack{\boldsymbol \beta'\in (t+s)\Delta_n \\\beta'_0\ge s}} \sum_{i=1}^n R_{\beta'_i,\beta'_0}(\mu_i).
		\end{aligned}
		$$
		Thus, the bound \eqref{eq:prime1} is sharp.
		\item  It suffices to prove the statement for $t=1-s$.
		The assumption 
		$\sum_{i=1}^n \mu_i (q_{0}^+(\mu_i),q^-_{1}(\mu_i)]\le 1 $
		allows for the existence of a lower mutually exclusive (see Definition \ref{def:lme} below) random vector $(X_1,\dots,X_n)$ where $X_i\sim \mu_i$, $i=1,\dots,n$.
		Hence, the desired result follows from Lemma \ref{lem:me} below by checking that the bound \eqref{eq:prime1} is attained by such a vector. 
		\qedhere
	\end{enumerate}
\end{proof}

\begin{definition}[Mutually exclusivity]\label{def:lme}
	We say that a random vector  $(X_1,\dots,X_n)$ where
	$X_i\sim \mu_i$, $i=1,\dots,n$
	is \emph{lower mutually exclusive} if
	$\p(X_i>q_0^+(\mu_i),X_j>q_0^+(\mu_j))=0$ for all $i \ne j$
	and it 
	is \emph{upper mutually exclusive} if
	$\p(X_i<q_1^-(\mu_i),X_j<q_1^-(\mu_j))=0$ for all $i \ne j$.
\end{definition}

\begin{lemma}\label{lem:me}
	If  random variables $X_1,\dots,X_n$ are lower mutually exclusive and bounded from below, then for $\alpha \in (0,1)$,
	\begin{equation}\label{eq:lem:me}
		R_{1-\alpha,\alpha}\left(\sum_{i=1}^n X_i\right) =   \sum_{i=1}^n R_{\beta_i,\alpha}(X_i),
	\end{equation}
	for some $\beta_1,\dots,\beta_n\in [0,1)$ with $\sum_{i=1}^n \beta_i=1-\alpha$.
\end{lemma}

\begin{proof}
	Without loss of generality, we assume $q_0^+(\mu_i) =0$ for each $i$.
	If $q^+_\alpha \left(\sum_{i=1}^n X_i\right)=0$,
	then $\sum_{i=1}^n \p( X_i>0) = \p\left( \sum_{i=1}^n X_i > 0\right)\le 1-\alpha$.
	Hence, we can choose $\beta_i \ge \p(X_i>0)$ for each $i$,
	and
	both sides of \eqref{eq:lem:me} are $0$. Below we assume $q^+_\alpha(\sum_{i=1}^n X_i)>0$.
	
	
	First, we assume that the distribution $\mu_i$ of $X_i$ is continuous on $\left\{X_i>0\right\}$ for each $i=1,\dots,n$, and so is the conditional distribution of $\sum_{i=1}^n X_i$ on $\left\{\sum_{i=1}^n X_i>0\right\}$.
	
	Let $y= q^+_\alpha\left(\sum_{i=1}^n X_i\right)$ and $A= \left\{\sum_{i=1}^n X_i\le y\right\}$. We have $\p(A)=\alpha$.
	For each $i=1,\dots,n$, let
	$\alpha_i=\p(A\cap \{X_i>0\}) =\p(0<X_i\le y)  $
	and $t_i=\p(X_i>0) $.
	By direct calculation
	\begin{align*}
		R_{1-\alpha,\alpha}\left(\sum_{i=1}^n X_i\right)=  \E\left[\sum_{i=1}^n X_i \mid  A\right]
		&
		=  \sum_{i=1}^n \E\left[ X_i \mid A\right]
		\\&
		=  \frac 1 \alpha \sum_{i=1}^n\left( 0 + \E\left[ X_i  \id_{ A \cap \{X_i>0\}} \right]\right)
		\\&
		=  \frac 1 \alpha \sum_{i=1}^n\left( \int_{1-t_i+\alpha_i-\alpha}^{1-t_i} q_u^{+}(\mu_i) \d u + \int_{1-t_i }^{1-t_i+\alpha_i} q_u^{+}(\mu_i) \d u \right)
		\\&= \sum_{i=1}^nR_{ t_i-\alpha_i,\alpha}(X_i).
	\end{align*}
	We can check, by lower mutual exclusivity and the continuity assumption, that
	$$\sum_{i=1}^n (t_i-\alpha_i) = \sum_{i=1}^n\left( \p(X_i>0) - \p(0<X_i\le y)\right)   = \sum_{i=1}^n \p(X_i>y) =\p\left( \sum_{i=1}^n X_i>y\right) = 1-\alpha.
	$$
	By \eqref{eq:prime1}, we have
	$$ R_{1-\alpha,\alpha}\left(\sum_{i=1}^n X_i\right) \le \sum_{i=1}^nR_{ t_i-\alpha_i,\alpha}(X_i) .$$
	Therefore, \eqref{eq:lem:me} holds by choosing $\beta_i=t_i-\alpha_i$, $i=1,\dots,n$.
	In case the conditional distributions of $X_1,\dots,X_n$ are positive and not continuous, we can approximate (by convergence in distribution) $X_1,\dots,X_n$ by conditionally continuous distributions while fixing $\p(X_i>0)$ for each $i$. The compactness of the set $(1-\alpha)\overline{\Delta}_{n-1}$ on which $(\beta_1,\dots,\beta_n)$ takes values and the continuity of $R_{\beta,\alpha}$ with respect to weak convergence (e.g., \cite{CDS10}) yields the desirable result.  
\end{proof}

\begin{proof}[Proof of Proposition \ref{prop:basic}]
	The first equality is a direct consequence of Theorem 4.1 and Example 6.3 of \cite{LW16}, and the second equality follows from Remark 4.1 of \cite{LW16}.
\end{proof}

\subsection{Proofs in Section \ref{sec:2}}
\label{app:D2}

\begin{proof}[Proof of Theorem \ref{th:qa-4}]
	The convolution bound \eqref{eq:main1pr} is obtained by taking a limit of \eqref{eq:prime1} in Theorem \ref{th:qa-4} using \eqref{eq:quantileconvergence}.
	Similarly, based on Theorem \ref{th:qa-4} and the fact that $R_{\beta_i,\beta_0}$ is continuous in $\beta_0$,
	this limit argument also gives sharpness in (i), (ii) and (iv).
	Next we proceed to show  sharpness in (iii) and (v).
	\begin{enumerate}[(i)]

		\item[(iii)]
		First, we note that
		\begin{equation}\label{eq:sharp_convert}
			\sup_{\nu \in \Lambda(\boldsymbol \mu) }q^+_{t} (\nu) = -\inf_{\tilde{\nu} \in \Lambda(\tilde{\boldsymbol \mu}) }q^-_{1-t} (\tilde{\nu}),
		\end{equation}
		where $\tilde{\mu}_i$ is the distribution measure of the random variable $-X_i$ with $X_i \sim \mu_i$, $i=1,\cdots,n$ and $\tilde{\boldsymbol \mu} = (\tilde{\mu}_1, \cdots, \tilde{\mu}_n)$. The fact that each of $\mu_1,\dots,\mu_n$ admits an increasing density beyond its $t$-quantile implies that each of $\tilde{\mu}_1,\dots,\tilde{\mu}_n$ admits an decreasing density below its $(1-t)$-quantile. Note that a distribution that has a decreasing density below its $(1-t)$-quantile is supported in either a finite interval $[a,b]$ or a half real line $[a,\infty)$ for some $a,b\in \R$. Hence, without loss of generality, we can assume $q_0^+(\tilde{\mu}_i)=0$, $i=1,\dots,n$.
		
		For   sharpness of \eqref{eq:main1pr},
		we need to show $$
		\sup_{\nu \in \Lambda(\boldsymbol \mu) }q^+_{t} (\nu)  \ge \inf_{\boldsymbol \beta \in (1-t)\Delta_n} \sum_{i=1}^n R_{\beta_i,\beta_0}(\mu_i).
		$$
		By \eqref{eq:sharp_convert} and the definition of $R_{\beta,\alpha}$, it suffices to show
		\begin{equation}\label{eq:main2sharp-pf}
			\inf_{\tilde{\nu} \in \Lambda(\tilde{\boldsymbol \mu}) } q_{1-t}^- (\tilde{\nu})\le  \sup_{\boldsymbol \beta \in (1-t)\Delta_n} \sum_{i=1}^n R_{1-\beta_i-\beta_0,\beta_0}(\tilde{\mu}_i).
		\end{equation}
		
		Fix $j\in \{1,\dots,n\}$ and $\beta_j \in (0,1-t)$. By taking $\beta_i=0$ for $i \in \{1,\cdots,n\} \backslash \{j\}$ and $\beta_0 = 1-t-\beta_j$, we get
		$$
		\sup_{\boldsymbol \beta \in (1-t)\Delta_n} \sum_{i=1}^n R_{1-\beta_i-\beta_0,\beta_0}(\tilde{\mu}_i)
		\ge  R_{t, 1-t-\beta_j}(\tilde{\mu}_j)  + \sum_{i\ne j} R_{t+\beta_j, 1-t-\beta_j}(\tilde{\mu}_i)
		\ge R_{t, 1-t-\beta_j}(\tilde{\mu}_j).
		$$
		Taking a supremum over $\beta_j\in (0,1-t)$ and $j\in \{1,\dots,n\}$   yields
		\begin{equation}\label{eq:main2sharp-pf2}
			\sup_{\boldsymbol \beta \in (1-t)\Delta_n} \sum_{i=1}^n R_{1-\beta_i-\beta_0,\beta_0}(\tilde{\mu}_i)\ge \bigvee_{j=1}^n\sup_{\beta_j \in (0,1-t)} R_{t, 1-t-\beta_j}(\tilde{\mu}_j) = \bigvee_{j=1}^n q_{1-t}^-(\tilde{\mu}_j).
		\end{equation}

		If $\bigvee_{j=1}^n q_{1-t}^-(\tilde{\mu}_j)=\infty$, then the right-hand side of  \eqref{eq:main2sharp-pf} is $\infty,$ which holds automatically.
		If $\bigvee_{j=1}^n q_{1-t}^-(\tilde{\mu}_j)<\infty$, we can apply Corollary 4.7 of \cite{JHW16}, using the condition that each of $\mu_1,\dots,\mu_n$ admits a decreasing density below its $(1-t)$-quantile. This gives
		\begin{equation}\label{eq:main2sharp-pf3}
			\inf_{\tilde{\nu} \in \Lambda(\tilde{\boldsymbol \mu}) } q_{1-t}^+ (\tilde{\nu})= \max\left\{\bigvee_{i=1}^{n} q_{1-t}^-(\tilde{\mu}_i), \sum_{i=1}^n R_{t,1-t}(\tilde{\mu}_i)\right\}.
		\end{equation}
		Also note that in this case, $R_{t,1-t}(\tilde{\mu}_i) < \infty$, $i=1,\cdots,n$, 
		and hence
		\begin{equation}\label{eq:main2sharp-pf4}  \sup_{\boldsymbol \beta \in (1-t)\Delta_n} \sum_{i=1}^n R_{1-\beta_i-\beta_0,\beta_0}(\tilde{\mu}_i) \ge \sum_{i=1}^n R_{t,1-t}(\tilde{\mu}_i).
		\end{equation}
		Combining \eqref{eq:main2sharp-pf2}-\eqref{eq:main2sharp-pf4}, we get
		\eqref{eq:main2sharp-pf}.
		%
		%
		
		\item[(v)] It suffices to prove the case $t=0$. The assumption $\sum_{i=1}^n \mu_i [q_{0}^+(\mu_i),q^-_{1}(\mu_i))\le 1 $
		allows for the existence of an upper mutually exclusive (see Definition \ref{def:lme}) random vector $(X_1,\dots,X_n)$ where $X_i\sim \mu_i$, $i=1,\dots,n$. Hence, we have
		$$
		q_0^+ \left( \sum_{i=1}^n X_i \right) = \min_{1 \le i \le n} \left( q_0^+ (\mu_i) + \sum_{j\ne i} q_1^- (\mu_j) \right) \geq \inf_{\boldsymbol \beta \in \Delta_n}  \sum_{i=1}^n R_{\beta_i, \beta_0}(\boldsymbol{\mu}).
		$$
		Hence, the desired result follows as the bound \eqref{eq:main1pr} is attained by such a vector.\qedhere
	\end{enumerate}
\end{proof}

\begin{proof}[Proof of Proposition \ref{prop:reduced_bound}]
	Letting $\beta_1=\cdots=\beta_n=\alpha$ in Theorems \ref{th:qa-4} and \ref{th:qa-2}, we immediately get \eqref{eq:main1hom}.
	We show \eqref{eq:main1hom} holds as an equality in this case of decreasing density. 
	Note that the second equality in \eqref{eq:main1hom} is simply the definition.
	By Proposition 1 of \cite{EPRWB14},
	$  \sup_{\nu \in \Lambda_n(\mu) }q_t^+ (\nu) $ is equal to $n$ times the conditional mean of $\mu$ on an interval $[t+(n-1)\alpha, 1-\alpha]$ for some $\alpha \in [0,\frac{1-t}{n}]$.
	Therefore,
	$$
	\sup_{\nu \in \Lambda_n(\mu) }q_t^+ (\nu) \ge   \inf_{\alpha \in (0, \frac{1-t}{n})} n R_{\alpha, 1-t-n\alpha}(\mu).
	$$
	Also note that the ``$\le$" sign in \eqref{eq:main1hom} is implied by Proposition \ref{prop:reduced_bound}. Hence, \eqref{eq:main1hom} holds as an equality in this case.
\end{proof}

\begin{proof}[Proof of Proposition \ref{th:multiply}]
	For any fixed $X_i \sim \mu_i$, $i = 1, \cdots, n$, we define $Y_i = \log \left(X_i\right)$ and hence have
	$$
	q_t^{+} \left( \prod_{i = 1}^n X_i \right) = q_t^{+} \left( \exp \left\{ \sum_{i=1}^n Y_i \right\} \right) = \exp \left\{ q_t^{+} \left(  \sum_{i=1}^n Y_i \right) \right\}.
	$$
	We obtain the desired results by investigating the corresponding quantile problem $q_t^{+} \left(  \sum_{i=1}^n Y_i \right)$. We only prove the cases (ii)-(iii). Denote by $f_i$ the density of $X_i$. The density of $Y_i$ at $y \in \R$ is
	$
	\exp(y) \cdot f_i (\exp(y)).
	$
	According to Theorem \ref{th:qa-4}, \eqref{eq:main1pr:multiply} is sharp if $\exp(y) \cdot f_i (\exp(y))$ are all decreasing (resp. increasing) beyond the $t$-quantile of $Y_i$ for all $i = 1, \cdots, n$. With a change of variables (and the fact that $\log$ is strictly increasing), the condition is translated into that the functions $x \cdot f_i(x)$  are all decreasing (resp. increasing) beyond the $t$-quantile of $X_i$ for all $i = 1, \cdots, n$.
\end{proof}

\begin{proof}[Proof of Proposition \ref{prop:nu}]
	The statement for $q_t^+$, $t\in (0,1)$, is shown in Lemma 4.2 of \cite{BJW14}. 
	The case of $t=0$ 
	follows from the same argument by noting the upper semicontinuity of $q_0^+$. 
\end{proof}

\begin{proof}[Proof of Proposition \ref{prop:1end}]
	To show  the ``$\ge $" direction of \eqref{eq:2}, we note that
	for any $\nu_Y \in \Lambda(\boldsymbol{\mu^{[m]}})$ such that $Y_1 \sim \mu^{[m]}_1,\dots, Y_n\sim \mu^{[m]}_n$ and $\sum_{i=1}^n Y_i \sim \nu_Y$, by letting $X_i= q^{-}_{U_{Y_i}}(\mu_i)$, $i=1,\dots,n$,
	we get $X_i\sim \mu_i$ and $X_i \ge Y_i$ for each $i=1,\dots,n$. Denote by $\nu_X$ the distribution measure of $\sum_{i=1}^n X_i$. It follows that $\nu_X \in \Lambda(\boldsymbol{\mu})$ and
	$
	q^+_t(\nu_X) \ge q^+_t(\nu_Y),
	$
	which gives
	$$
	\sup_{\nu \in \Lambda(\boldsymbol{\mu})} q^+_t(\nu) \geq  \sup_{\nu \in \Lambda(\boldsymbol{\mu^{[m]}})} q^+_t(\nu).
	$$
	To show  the ``$\le$" direction of \eqref{eq:2}, for any $\nu_X \in \Lambda(\boldsymbol{\mu})$ such that random variables $X_1 \sim \mu_1, \dots, X_n \sim \mu_n$ and $\sum_{i=1}^n X_i \sim \nu_X$, let $Y_i=X_i\wedge m$, $i=1,\dots,n$. Write $S_X= \sum_{i=1}^n X_i$ and $S_Y= \sum_{i=1}^n Y_i$. Denote by $\nu_Y$ the distribution measure of $S_Y$. We have $\nu_Y \in \Lambda(\boldsymbol{\mu^{[m]}})$.
	By Corollary 1 of \citet{ELW18}, we have, for $\epsilon >0$,
	$$
	q^{-}_{t+\epsilon} (\nu_X)   \le \sum_{i=1}^n q^{-}_{1-(1-t-\epsilon)/n} (\mu_i).
	$$
	Taking a limit of the above equation as $\epsilon \downarrow 0$, we obtain
	\begin{equation}
		\label{eq:appendix1}
		q^+_t(\nu_X)   \le \sum_{i=1}^n q^+_{1-(1-t)/n} (\mu_i) \le  m.
	\end{equation}
	
	It is clear that $(S_X\wedge m) \le S_Y$ because the real function $x\mapsto x\wedge m$ is subadditive. Denote by $\tilde{\nu}$ the distribution of $S_X \wedge m$. Hence, \eqref{eq:appendix1} implies
	\begin{equation}\label{eq:appendix2}
		q^+_t(\nu_X)  =  q_t^+(\tilde{\nu})  \le  q_t^+(\nu_Y).
	\end{equation}
	Taking a supremum of \eqref{eq:appendix2} over all possible choices of $\nu_X \in \Lambda(\boldsymbol{\mu})$, we get
	$$
	\sup_{\nu \in \Lambda(\boldsymbol{\mu})} q^+_t(\nu) \leq  \sup_{\nu \in \Lambda(\boldsymbol{\mu^{[m]}})} q^+_t(\nu).
	$$
	This completes the proof.
\end{proof}

\begin{proof}[Proof of Proposition \ref{prop:qa-1}]
	It is a direct corollary by letting $t \downarrow 0$ in Theorem \ref{th:qa-4} and $t \uparrow 1$ in Theorem \ref{th:qa-2} respectively. 
\end{proof}
\begin{proof}[Proof of Proposition \ref{prop:prop1}]
	Note that
	\begin{align*}\sup_{\boldsymbol \beta \in \Delta_n} \sum_{i=1}^n R_{1-\beta_i-\beta_0,\beta_0 }(\mu_i)
		&\ge \lim_{\epsilon\downarrow0 }  \sum_{i=1}^n R_{(n-1)\epsilon ,1- n\epsilon }(\mu_i) \\& =\sum_{i=1}^n R_{0,1}(\mu_i)
		= \lim_{\epsilon\downarrow0 }  \sum_{i=1}^n R_{\epsilon ,1- n\epsilon }(\mu_i)
		\ge \inf_{\boldsymbol \beta\in \Delta_n}  \sum_{i=1}^n R_{\beta_i, \beta_0}(\mu_i).
	\end{align*}
	Hence, the second and the third inequalities in \eqref{eq:main1p} hold.
	The first and the last inequalities are due to Proposition \ref{prop:qa-1}.
\end{proof}

\subsection{Proofs in Section \ref{sec:6}}\label{app:D3}

\begin{proof}[Proof of Theorem \ref{th:qa-5}]
	
	By assumption and  Proposition \ref{prop:nu}, there exists $\nu_+ \in \Lambda(\boldsymbol \mu)$  such that  	
	$$
	q_0^+ (\nu_+) = \sup_{\nu \in \Lambda(\boldsymbol \mu) }q_0^+ (\nu) = R^{+}_{\boldsymbol \beta}(\boldsymbol \mu).
	$$
	Take  $X_i^*\sim \mu_i$, $i=1,\cdots,n$ such that $\sum_{i=1}^n X_i^*\sim\nu_+$ and
	$
	\sum_{i=1}^n X_i^* \geq q_0^+ (\nu_+)
	$ almost surely.
	We divide the proof into several steps. We first prove the properties of $X_i^*$ in \eqref{eq:opt_struc} in Steps 1-3 and the feasibility of \eqref{eq:opt_explic} and its optimality given the above sufficient condition in Steps 4-5. In Steps 1-3, we will show that the probability space $\Omega$ is divided into
	$\Omega = A_1 \cup \cdots \cup A_n \cup  B$ where $A_i$ is defined by $A_i=\{X_i^* > q_{1-\beta_i}^- (\mu_i) \}$ (``right-tail" parts of $X_1,\cdots, X_n$) and  $B^c = \bigcup_{i=1}^n A_i$, with the following properties:	
	\begin{enumerate}
		\item[(a)] on the set $B$, $X_i^* \sim \mu_i^{[1-\beta_0-\beta_i, 1-\beta_i]}$ for all $i=1,\cdots,n$ and $\sum_{i=1}^n X_i^* = q_0^+(\nu_+)$ almost surely;
		
		\item[(b)]  for any fixed $i=1,\cdots,n$, on the set $A_i$, $X_i^* \sim \mu_i^{(1-\beta_i, 1]}$ and $X_j^* \sim \mu_j^{[0,1-\beta_0-\beta_j)}$ for all $j\neq i$.
	\end{enumerate}
	\noindent
	\textbf{Step 1}: We show that the set $\left\{ \sum_{i=1}^n X_i^* =q_0^+ (\nu_+) \right\}$ has probability no less than $\beta_0$. By \eqref{eq:RVaR_ineq}, we have
	\begin{equation}\label{eq:thm1_sharp}
		q_0^+(\nu_+) \le R_{1-\beta_0, \beta_0}(\nu_+) \le
		R^{+}_{\boldsymbol \beta}(\boldsymbol \mu) = q_0^+(\nu_+),
	\end{equation}
	and hence  all inequalities in \eqref{eq:thm1_sharp} are equalities. The fact that $q_0^+(\nu_+) = R_{1-\beta_0, \beta_0}(\nu_+)$ implies that
	$q_t^- (\nu_+) = q_0^+ (\nu_+)$ for all $t \in (0, \beta_0]$ and the set $\left\{ \sum_{i=1}^n X_i^* =q_0^+ (\nu_+) \right\}$ has probability no less than $\beta_0$.
	
	\noindent
	\textbf{Step 2}: We proceed to show that the events (``body" parts of $X_1,\cdots,X_n$)
	\begin{equation}\label{eq:set_A^c}
		\{q_{1-\beta_0-\beta_i}^{-}(\mu_i) \leq X_i^* \leq q_{1-\beta_i}^{-}(\mu_i)\},\; i=1,\cdots, n,
	\end{equation}
	are identical and   $\sum_{i=1}^n X_i^* = q_0^+(\nu_+)$ almost surely on this set.
	
	As the events $A_i = \{X_i^* > q_{1-\beta_i}^- (\mu_i) \}$, $i=1,\cdots,n$, and $B^c= \cup_{i=1}^n A_i$, we have $\p(B^c)\leq \p(A_1)+\cdots+\p(A_n) = \sum_{i=1}^n \beta_i =1-\beta_0$. Denote by $\kappa_i \in \M$ the distribution measure of
	$
	T_i = X_i^* \id_{A_i^c} + m \id_{A_i}, ~i=1,\cdots,n,
	$
	where $m$ is a real number satisfying that $m < \min_{1 \le i \le n}  q_{1-\beta_0-\beta_i}^{-} (\mu_i) $. Denote by $\tau$ the distribution measure of the sum variable $\sum_{i=1}^n T_i$.
	It is verified that $\kappa_i$ has a finite mean and $R_{\beta_i, \beta_0}(\mu_i) = \ES_{\beta_0}(\kappa_i)$, $i=1,\cdots,n$. We first prove that
	\begin{equation}
		\label{eq:q_sharp}
		q_t^{-} (\tau) \geq q_{t-1+\beta_0}^{-} (\nu_+),~~ t \in (1-\beta_0, 1].
	\end{equation}
	Fix $t \in (1-\beta_0, 1]$. We have
	$
	\sum_{i=1}^n T_i \id_{B} = \sum_{i=1}^n X_i^* \id_{B}
	$
	and for any $x \in \R$,
	\begin{equation*}
		\label{eq:q_proof_sharp}
		\begin{aligned}
			\tau(x, \infty) = \p\left(\sum_{i=1}^n T_i > x\right) \geq  \p\left(\sum_{i=1}^n X_i^* > x, B\right)
			&\geq \p\left(\sum_{i=1}^n X_i^* > x \right) - \p(B^c )\\&
			\geq \p\left(\sum_{i=1}^n X_i^* > x \right) - 1+\beta_0 = \nu_+(x, \infty)-1+\beta_0.
		\end{aligned}
	\end{equation*}
	For any $x < q_{t-1+\beta_0}^{-} (\nu_+)$, we have
	$\nu_+(-\infty, x] < t-1+\beta_0$, and $\tau(-\infty, x] \leq \nu_+(-\infty, x] +1-\beta_0 < t$ and then $x < q_{t}^{-} (\tau)$. Hence we have
	$q_t^{-} (\tau) \geq q_{t-1+\beta_0}^{-} (\nu_+)$ and prove \eqref{eq:q_sharp}.	
	Thus, it follows from sharpness of \eqref{eq:thm1_sharp} that
	\begin{equation}
		\label{eq:RVaR_sharp}
		\begin{aligned}
			\sum_{i=1}^n R_{\beta_i, \beta_0}(\mu_i) = \sum_{i=1}^n \ES_{\beta_0} (\kappa_i) &\geq \ES_{\beta_0} (\tau)\\
			&= \frac{1}{\beta_0} \int_{1-\beta_0}^{1} q_t^{-} (\tau) \d t\\
			&\geq \frac{1}{\beta_0} \int_{1-\beta_0}^{1} q_{t-1+\beta_0}^{-} (\nu_+) \d t\\
			&= \frac{1}{\beta_0} \int_{0}^{\beta_0} q_{t}^{-} (\nu_+) \d t
			= R_{1-\beta_0, \beta_0} (\nu_+)
			= \sum_{i=1}^n R_{\beta_i, \beta_0}(\mu_i),
		\end{aligned}
	\end{equation}
	where the first inequality is the well-known  subadditivity of $\ES_{\beta_0}$. Thus, all inequalities in \eqref{eq:RVaR_sharp} are sharp.
	
	The fact that the first inequality in \eqref{eq:RVaR_sharp} is sharp implies that $T_i$, $i=1,\cdots,n$ share the same tail event with probability $\beta_0$ according to Theorem 5 in \cite{WZ21}, i.e., the sets
	$
	\{ T_i \geq q_{1-\beta_0}^- (\kappa_i) \} =
	\{ q_{1-\beta_0-\beta_i}^{-}(\mu_i) \leq X_i^* \leq q_{1-\beta_i}^{-}(\mu_i) \}, ~ i =1, \cdots, n,
	$ (also in \eqref{eq:set_A^c}) are identical and have probability $\beta_0$.
	We denote this set by $B'$. Furthermore, $B'$ does not intersect any $A_i$, $i = 1,\cdots,n$.
	
	We write $Y_i = X_i^*|_{B'}$. Hence $Y_i \sim \mu^{[1-\beta_0-\beta_i, 1-\beta_i]}$ and $\sum_{i=1}^n Y_i = \sum_{i=1}^n X_i^*$ on the set $B'$ and $q_0^+ \left(\sum_{i=1}^n {Y_i}\right) = \E\left[\sum_{i=1}^n  Y_i\right] =R^{+}_{\boldsymbol \beta}(\boldsymbol \mu)$. Thus, $\sum_{i=1}^n Y_i = R^{+}_{\boldsymbol \beta}(\boldsymbol \mu)$ almost surely.
	
	\noindent
	\textbf{Step 3:} We proceed to show that the events $A_i$, $i=1,\cdots,n$, are mutually disjoint. We can calculate
	$$
	\frac{\partial }{\partial \beta'_i } R^{+}_{\boldsymbol \beta'}(\boldsymbol \mu)= \frac{1}{\beta_0' } \left(R^{+}_{\boldsymbol \beta'}(\boldsymbol \mu)  - q^-_{1-\beta'_i} (\mu_i) - \sum_{j\ne i}  q^-_{1-\beta_0' -\beta'_j}(\mu_j)\right), ~ \boldsymbol{\beta'} \in \Delta_n.
	$$
	The first-order condition from the optimality of $\boldsymbol{\beta}$ reads as
	\begin{equation}
		\left\{
		\begin{aligned}
			\label{eq:foc}
			& R_{\boldsymbol \beta}^+ (\boldsymbol \mu)  - q^-_{1-\beta_i} (\mu_i) - \sum_{j\ne i}  q^-_{1-\beta_0 -\beta_j}(\mu_j) = 0,~ \mbox{if $\beta_0 > 0$ and $i \in \{1,\cdots,n\}$ satisfying $\beta_i \neq 0$;}\\
			& R_{\boldsymbol \beta}^+ (\boldsymbol \mu)  - q^-_{1} (\mu_i) - \sum_{j\ne i}  q^-_{1-\beta_0 -\beta_j}(\mu_j) \ge 0,~ \mbox{if $\beta_0 > 0$ and $i \in \{1,\cdots,n\}$ satisfying $\beta_i=0$;}\\
			& R_{\boldsymbol \beta}^+ (\boldsymbol \mu) - \sum_{j=1}^n q_{1-\beta_j}^- (\mu_j) = 0, ~ \mbox{if $\beta_0 = 0$.}
		\end{aligned}
		\right.
	\end{equation}
	Denote the sets (the ``left-tail" parts of $X_1, \cdots, X_n$) by
	$
	C_i = \{ X_i < q_{1-\beta_0-\beta_i}^-(\mu_i) \}, ~ i = 1,\cdots, n.
	$
	We have a partition $\Omega = A_i \cup B' \cup C_i$ and $\p(C_i)=1-\beta_0-\beta_i$, $i=1,\cdots,n$. \eqref{eq:foc} shows that $\p\left(\cap_{j=1}^n C_j\right)=0$ because for any $\omega \in \cap_{j=1}^n C_j$, for any fixed $i \in \{1,\cdots, n\}$,
	$$
	\sum_{j=1}^n X_j^*(\omega) <  q^-_{1-\beta_i} (\mu_j) + \sum_{j\ne i}  q^-_{1-\beta_0 -\beta_j}(\mu_j) \leq R_{\boldsymbol \beta}^+ (\boldsymbol \mu) = q_0^+\left(\sum_{j=1}^n X_j^*\right).
	$$
	Arguing by contradiction that there exists $1 \leq k < l \le n$ such that $\p(A_k \cap A_l) > 0$. For any fixed $i \in \{1,\cdots,n\} \backslash \{k,l\}$, we have
	\begin{align*}
		\p(C_i)&= \p\left(\cap_{j=1}^n C_j\right) + \p\left( \cup_{j \neq i} (C_i \cap A_j) \right)\\
		&= \p\left(\cap_{j=1}^n C_j\right) + \p\left( \cup_{j \neq i,k,l} (C_i \cap A_j) \cup (C_i \cap A_k \cap A_l^c) \cup (C_i \cap A_k^c \cap A_l) \cup (C_i \cap A_k \cap A_l) \right)\\
		&\leq \p\left(\cap_{j=1}^n C_j\right) + \sum_{j \neq i,k,l}\p(A_j) + \p(A_k \cap A_l^c) + \p(A_k \cap A_l^c) + \p(A_k \cap A_l)	\\
		& = \p\left(\cap_{j=1}^n C_j\right) + \sum_{j \neq i,k,l}\p(A_j) + \p(A_k) + \p(A_l) - \p(A_k \cap A_l) \\
		& =  \p\left(\cap_{j=1}^n C_j\right) + 1-\beta_0-\beta_i-\p(A_k \cap A_l) \\
		& = \p\left(\cap_{j=1}^n C_j\right) + \p(C_i) -\p(A_k \cap A_l).
	\end{align*}
	Hence $\p\left(\cap_{i=1}^n C_i\right) \geq \p(A_k \cap A_l) >0$, which leads to a contradiction.
	Thus, $A_1, \ldots, A_n$ are mutually disjoint and $\p(B^c) = \p(\cup_{i=1}^n A_i) = \sum_{i=1}^n \p(A_i) = 1-\beta_0$. As the set $B'$ does not intersect $B^c$ and $\p(B') = \beta_0$, we know $B'= B$ and thus the partition $\Omega = A_1 \cup \cdots \cup A_n \cup B$. This completes the first statement in the theorem on the  properties  of $X_i^*$ in \eqref{eq:opt_struc}.

	\noindent
	\textbf{Step 4}: If $\beta_0 = 1$, we have that $\boldsymbol{\mu}$ is jointly mixable. If $\beta_0 < 1$, we check that the corresponding $X_i^*$ given by \eqref{eq:opt_explic} has distribution $\mu_i$, $i=1,\cdots,n$. For each $i=1,\cdots,n$, if $x < q_{1-\beta_0-\beta_i}^{-}(\mu_i)$, we have
	$$
	\begin{aligned}
		\p(X_i^* \leq x) &= \p(U<1-\beta_0, K \neq i, q_{\frac{1-\beta_0-\beta_i}{1-\beta_0}U}^{-}(\mu_i)\leq x)
		\\&
		= \p(K \neq i) \p(q_{\frac{1-\beta_0-\beta_i}{1-\beta_0}U}^{-}(\mu_i)\leq x)\\
		&= \frac{1-\beta_0-\beta_i}{1-\beta_0}  \frac{1-\beta_0}{1-\beta_0-\beta_i} \mu_i(-\infty, x]
		= \mu_i(-\infty, x].
	\end{aligned}
	$$
	One can similarly check that $\p(X_i^* > x) = \mu_i(x, \infty)$ if $x > q_{1-\beta_i}^{-}(\mu_i)$ and $\p(X_i^* \leq x) =\mu_i(-\infty, x]$ if $q_{1-\beta_0-\beta_i}^{-}(\mu_i) \leq x \leq \mu_{1-\beta_i}^{-}(\mu_i)$.
	Hence $X_i^* \sim \mu_i$, $i=1,\cdots,n$.

	\noindent
	\textbf{Step 5}: We finally show that if $\beta_0 < 1$, $\beta_1,\cdots,\beta_n > 0$   and the minimum of each of the functions $h_1, \cdots, h_n$ is attained at $x=1-\beta_0$, then $(X_1^*,\dots,X_n^*)$ in \eqref{eq:opt_explic} attains the maximum of $q_0^+$ for $\boldsymbol \mu$.
	According to the first-order condition \eqref{eq:foc}, we have
	$
	h_1(1-\beta_0)=\cdots=h_n(1-\beta_0)=R_{\boldsymbol{\beta}}^+(\boldsymbol{\mu}).
	$
	For all $i = 1, \cdots, n$, we have $h_i(x) \geq R_{\boldsymbol{\beta}}^+(\boldsymbol{\mu})$ for all $x \in (0, 1-\beta_0]$, i.e.,
	$
	\sum_{i=1}^n X_i^* \geq R_{\boldsymbol{\beta}}^+(\boldsymbol{\mu})$ almost surely on $\{U \in [0,1-\beta_0)\}.
	$
	Since
	$\sum_{i=1}^n X_i^* = \sum_{i=1}^n Y_i = R_{\boldsymbol{\beta}}^+(\boldsymbol{\mu})$ on $\{ U \in [1-\beta_0, 1] \}$, we have
	$q_0^+ \left(\sum_{i=1}^n X_i^* \right) = \max_{\nu \in \Lambda(\boldsymbol \mu) }q_0^+ (\nu) = R^{+}_{\boldsymbol \beta}(\boldsymbol \mu).$
\end{proof}

\subsection{Proofs in Section \ref{sec:7}}
\begin{proof}[Proof of Proposition \ref{prop:n_dual}]
	Theorem 4.17 of \cite{R13} gives
	\begin{eqnarray}
		\label{eq:heter_dual}
		\inf_{\nu  \in \Lambda(\boldsymbol \mu)} \nu(-\infty, s]  \geq 1 - D_n(s).
	\end{eqnarray}
	Standard argument inverting \eqref{eq:heter_dual} gives \eqref{eq:n_dim_dual_bound}. 
\end{proof}

\begin{proof}[Proof of Theorem \ref{th:qa-6}]
	\quad
	\begin{enumerate}
		\item For fixed $t \in [0,1)$, denote by $x_1 = R^+_{\boldsymbol{\beta}}(\boldsymbol{\mu})$ the right-hand side \eqref{eq:main1pr}. We proceed to show $D_n^{-1}(1-t) \leq x_1$ and thus the dual bound \eqref{eq:n_dim_dual_bound} is not greater than the convolution bound.
		
		\textbf{Case 1}: if the infimum in \eqref{eq:main1pr} is attained at $\boldsymbol \beta = (\beta_0,\beta_1, \cdots ,\beta_n) \in (1-t){\Delta} _n$ with $\beta_0, \cdots, \beta_n >0$,  the first-order condition reads as the first equation in \eqref{eq:foc}.
		Because $\beta_0 >0$ and $1-\beta_0-\beta_i < 1-\beta_i$, we have $q_{1-\beta_0-\beta_i}^- (\mu_i) < q_{1-\beta_i}^- (\mu_i)$. Define $r_i = q_{1-\beta_0-\beta_i}^- (\mu_i)$ for $i = 1,\cdots,n$. One can check from the first-order condition that $r_i = q_{1-\beta_i}^- (\mu_i)+r-x_1$, and hence $r < x_1 $ and $\mathbf{r} \in \Delta_n(x_1)$. We have
		$$
		\small
		\begin{aligned}
			x_1 = \frac{1}{\beta_0}\sum_{j=1}^n \int_{1-\beta_0-\beta_j}^{1-\beta_j}q_u^- (\mu_j) \d u
			&= \frac{1}{\beta_0}\sum_{j=1}^n \int_{q_{1-\beta_0-\beta_j}^- (\mu_j)}^{q_{1-\beta_j}^- (\mu_j)} y \mu_j(\d y)\\
			&= \frac{1}{\beta_0}\sum_{j=1}^n \int_{r_j}^{x_1-r+r_j} y  \mu_j(\d y) \\
			&= \frac{1}{\beta_0}\sum_{j=1}^n \left( (x_1-r+r_j)(1-\beta_j) - r_j (1-\beta_0-\beta_j) - \int_{r_j}^{x_1-r+r_j} \mu_j(-\infty, y] \d y  \right) \\
			&= \frac{1}{\beta_0}\sum_{j=1}^n \left((x_1-r)(1-\beta_j) + r_j \beta_0  -(x_1-r) + \int_{r_j}^{x_1-r+r_j} \mu_j( y, \infty) \d y  \right) \\
			&= \frac{1}{\beta_0} \left( x_1 \beta_0 - (1-t)(x_1-r) + \sum_{j=1}^n \int_{r_j}^{x_1-r+r_j} \mu_j(y, \infty) \d y \right).
		\end{aligned}
		$$
		It follows that
		$
		1-t = \sum_{j=1}^n \frac{1}{x_1 - r} \int_{r_j}^{x_1-r+r_j} \mu_j (y, \infty) \d y \geq D_n(x_1).
		$
		
		\textbf{Case 2}: Suppose that the infimum in \eqref{eq:main1pr} is attained at $\boldsymbol \beta$ with some $\beta_i= 0$ for some $i \in \{1,\cdots,n\}$ and $\beta_0=1-t-\sum_{i=1}^{n}\beta_i > 0$. 
		For $i=1,\cdots,n$, we have $q_{1}^- (\mu_i) > q_{t}^- (\mu_i)$ because $t < 1$ and define $r_i = q_{1-\beta_0-\beta_i}^- (\mu_i) < q_1^- (\mu_i)$.
		For $i \in \{1,\cdots,n\}$ satisfying $\beta_i \neq 0$, the first-order condition reads as the first equation in \eqref{eq:foc} and gives that $r_i = q_{1-\beta_i}^- (\mu_i)+r-x_1$.
		For $i$ satisfying $\beta_i=0$, the first-order condition reads as the second equation in \eqref{eq:foc} and gives $q_{1}^- (\mu_i) \leq x_1 -r + r_i$ and $r \leq x_1 - (q_{1}^- (\mu_i) - r_i) < x_1$, which implies $\mu_i(-\infty, x_1- r + r_i]=1$ and $\mathbf{r} \in \Delta_n(x_1)$.
		Similarly,
		$$
		x_1 = \frac{1}{\beta_0} \left(x_1 \beta_0 - (1-t)(x_1-r) + \sum_{j=1}^n \int_{r_j}^{x_1-r+r_j} \mu_j(y, \infty) \d y  \right).
		$$
		Therefore,
		$$
		\begin{aligned}
			1-t = \frac{1}{x_1 - r} \sum_{j=1}^n \int_{r_j}^{x_1-r+r_j} \mu_j(y, \infty) \d y \geq D_n(x_1).
		\end{aligned}
		$$
		
		\textbf{Case 3}: If the infimum in \eqref{eq:main1pr} is attained at some $\boldsymbol \beta$ with $\beta_0=1-t-\sum_{i=1}^n \beta_i=0$, then from \eqref{eq:main1pr} we have the third equation in \eqref{eq:foc}.
		Define $r_i = q_{1-\beta_0-\beta_i}^- (\mu_i)$, $i=1,\cdots,n$. Then $r=\sum_{i=1}^n r_i = x_1$ and
		\begin{align*}
			1-t = \sum_{i=1}^n \beta_i \!=\! \sum_{i=1}^n \mu_i (r_i, +\infty)=\lim_{\begin{subarray}{c}\mathbf{r'} \in \Delta_n(x_1)\\ \mathbf{r'} \rightarrow \mathbf{r}\end{subarray}} \frac{1}{x_1-r'} \sum_{i=1}^n \int_{r_i'}^{x_1-r'+r_i'}  \mu_i(y,\infty) \d y  \geq  D_n(x_1).
		\end{align*}
		In all three cases,
		$1-t \geq  D_n(x_1)$.
		Since $D_n $ is decreasing,  $D_n^{-1}(1-t) \leq x_1$, and thus the dual bound is not greater than the convolution bound.
		\item For fixed $t \in [0,1)$, we proceed to show that the dual bound $D_n^{-1}(1-t)$ is not smaller than the convolution bound.
		We first claim that if quantile functions of $\mu_1,\cdots,\mu_n$ are continuous, then $D_n $ is strictly decreasing on $\left(-\infty, \sum_{i=1}^n q_{1}^- (\mu_j)\right)$ and is constant 0 on $\left[\sum_{i=1}^n q_{1}^- (\mu_j),\infty\right)$.
		Indeed, for any $x_1 < x_2$, we have
		\begin{align*}
			\small
			D_n(x_1) &= \inf\limits_{\mathbf{r} \in \Delta_n(x_1)} \left\{ \sum_{i=1}^n \frac{1}{x_1-r} \int_{r_i}^{x_1-r+r_i} \mu_i(y, \infty) \d y \right\}\\
			& \geq \inf_{\mathbf{r} \in \Delta_n(x_2)} \left\{ \sum_{i=1}^n \frac{1}{x_1-r} \int_{r_i}^{x_1-r+r_i} \mu_i(y, \infty) \d y \right\} \\
			& \geq \inf_{\mathbf{r} \in \Delta_n(x_2)} \left\{ \sum_{i=1}^n \frac{1}{x_2-r} \int_{r_i}^{x_2-r+r_i} \mu_i(y, \infty) \d y \right\} = D_n(x_2).
		\end{align*}
		We prove that if ``$=$" holds, it must be $D_n(x_1) = D_n(x_2) = 0$. Since $D_n(x_1) = D_n(x_2) \in [0,n]$ is bounded,   the infimum is attained at some $\mathbf{r}$ with $r \leq x_1$. Because $\mu_j(\cdot,\infty)$ is decreasing, we have for $j=1,\cdots,n$, $\mu_j(\cdot,\infty)$ is a constant on $[r_j, x_2-r+r_j]$. The fact that quantile functions of $\mu_1,\cdots,\mu_n$ are continuous implies that these constants can only be 0 or 1, and they cannot be 1 since it is an infimum. Hence for $j=1,\cdots,n$, $\mu_j(y,\infty) \equiv 0$ for $y\in[r_j, x_2-r+r_j]$, which implies $D_n(x_1) = D_n(x_2) = 0$. It is straightforward to check that $D_n(x) = 0$ implies $x \geq \sum_{i=1}^n q_{1}^- (\mu_j)$. Thus we prove the claim. One can further verify that $D_n $ is continuous.

		Now we continue to prove the main result. For fixed $t \in [0,1)$, we have $D_n^{-1}(1-t) < \sum_{i=1}^n q_{1}^- (\mu_j)$ and $D_n(D_n^{-1}(1-t)) >0$.
		As $D_n $ is strictly decreasing and continuous on $\left(-\infty, \sum_{i=1}^n q_{1}^- (\mu_j)\right)$, we have $D_n(D_n^{-1}(1-t))=1-t$. Denote  by $x_2= D_n^{-1}(1-t)$   the value of the dual bound.
		
		\textbf{Case 1}: Suppose that the infimum of $D_n(x_2)$ is attained at $\mathbf{r} = (r_1, \cdots, r_n) \in \Delta_n(x_2)$. Its first-order condition reads as, for any $i=1,\cdots,n$,
		$$
		\mu_i(r_i, \infty) + \sum_{j\ne i} \mu_j(x_2-r+r_j, \infty) = \frac{1}{x_2 - r} \sum_{j=1}^n \int_{r_j}^{x_2-r+r_j} \mu_j(y, \infty) \d y = D_n(x_2) = 1-t.
		$$
		Define $\beta_i = \mu_i (x_2-r+r_i, \infty)$, $i=1,\cdots,n$ and $\beta_0 = 1-t-\sum_{i=1}^n \beta_i$. One can check $\beta_i = 1-\beta_0 - \mu_i(-\infty, r_i]$ and $\boldsymbol{\beta} \in (1-t)\Delta_n$ because $r < x_2$. We have
		\begin{align*}
			1-t &= \frac{1}{x_2-r} \sum_{j=1}^n \int_{r_j}^{x_2-r+r_j} \mu_j(y, \infty) \d y
			\\&= \frac{1}{x_2 - r} \sum_{j=1}^n \int_{1-\beta_0-\beta_j}^{1-\beta_j} (1-u) \d q_u^- (\mu_j)\\
			&= \frac{1}{x_2 - r} \sum_{j=1}^n \left( (x_2-r+r_j) \beta_j - r_j (\beta_0+\beta_j) + \int_{1-\beta_0-\beta_j}^{1-\beta_j} q_u^- (\mu_j) \d u \right)\\
			&= \frac{1}{x_2 - r} \left( (x_2-r)(1-t)-x_2\beta_0 + \sum_{j=1}^n  \int_{1-\beta_0-\beta_j}^{1-\beta_j} q_u^- (\mu_j) \d u \right).
		\end{align*}
		Therefore,
		$$
		x_2 = \frac{1}{\beta_0} \sum_{j=1}^n \int_{1-\beta_0-\beta_j}^{1-\beta_j} q_u^- (\mu_j) \d u,
		$$
		which implies that the value of the dual bound $x_2$ is not smaller than that of the convolution bound.
		
		\textbf{Case 2}: If the infimum of $D_n(x_2)$ is attained at some $\mathbf{r}$ with $r = x_2$, then
		$$
		D_n(x_2) = \lim_{\begin{subarray}{c}\mathbf{r'} \in \Delta_n(x_2)\\ \mathbf{r'} \rightarrow \mathbf{r}\end{subarray}} \sum_{i=1}^n \frac{1}{x_2-r'}\int_{r_i'}^{x_2-r'+r_i'} \mu_i (y, \infty)  \d y = \sum_{i=1}^n \mu_i  (r_i, \infty)  = 1-t.
		$$
		Define $\beta_i = \mu_i  (r_i, \infty) $, $i=1,\cdots,n$ and $\beta_0 = 0$. We have $r_i = q_{1-\beta_i}^- (\mu_i)$, and
		$$
		\begin{aligned}
			x_2 = \sum_{i=1}^n r_i = \sum_{i=1}^n q_{1-\beta_i}^- (\mu_i) &= \lim_{\begin{subarray}{c} \boldsymbol{\beta'} \in \Delta_n\\ \boldsymbol{\beta'} \rightarrow \boldsymbol{\beta}\end{subarray}} \sum_{i=1}^n \frac{1}{\beta_0'} \int_{1-\beta_0'-\beta_i'}^{1-\beta_i'}q_u^- (\mu_i)\d u\
			\geq \inf_{\boldsymbol \beta' \in \Delta_n} \sum_{i=1}^n \frac{1}{\beta_0'} \int_{1-\beta_0'-\beta_i'}^{1-\beta_i'}q_u^- (\mu_i)\d u,
		\end{aligned}
		$$
		which implies that the value of the dual bound $x_2$ is not smaller than that of the convolution bound.  The statement on the correspondence  is shown in the above steps.  
		\qedhere
	\end{enumerate}
\end{proof}

%
%
%
%

\section{Counter-examples}
\label{app:A}

\normalsize

\begin{example}[Non-sharpness in Theorem \ref{th:qa-4}] \label{ex:ex1}
	Without loss of generality, we consider the case $t=0$. Let $\mu$ be a bi-atomic uniform distribution on $\{-1,1\}$.
	It is easy to see that
	$\sup_{\nu \in \Lambda_3(\mu) } q_0^+ (\nu) =-1$ since any $\nu\in\Lambda_3(\mu) $ is supported in $\{-3,-1,1,3\}$ with mean $0$.
	On the other hand,   for $(\beta_0,\beta_1,\beta_2,\beta_3)\in \Delta_3$ with $\beta_1\ge \beta_2\ge \beta_3$, by symmetry, and the fact that $R_{1-\beta,\beta-\alpha}$ is increasing in $\alpha$ and increasing in $\beta$,
	we have
	$$  R_{\beta_1, \beta_0}(\mu )= -R_{1-\beta_0-\beta_1, \beta_0}(\mu ) = -R_{\beta_2+\beta_3, \beta_0}(\mu) \ge -R_{\beta_2, \beta_0+\beta_3}(\mu),
	$$
	$$ R_{\beta_2, \beta_0}(\mu ) \ge R_{\beta_2, \beta_0+\beta_3}(\mu ),$$
	and
	$$ R_{\beta_3, \beta_0}(\mu ) \ge R_{\beta_3, \beta_0+\beta_2}(\mu ) \ge R_{\beta_3, 1-2\beta_3}(\mu) = 0.$$
	Combining the above three inequalities, we have
	$
	\sum_{i=1}^3 R_{\beta_i, \beta_0}(\mu )\ge 0.
	$
	Hence,
	$$\sup_{\nu \in \Lambda_3(\mu) } q_0^+ (\nu) =-1 < 0 \le   \inf_{\boldsymbol \beta \in \Delta_n} \sum_{i=1}^3 R_{\beta_i, \beta_0}(\mu ),$$
	showing that \eqref{eq:main1} is not an equality.
\end{example}

\begin{example}[\eqref{eq:main1hom} does not hold as an equality for an increasing density] \label{ex:ex0}
	Without loss of generality, we consider the case $t=0$. Suppose that $\mu\in \M$ has an increasing density on its support.
	Then, the cdf of $\mu$ is convex, and hence the left quantile $q^-_u(\mu)$ is a concave function of $u\in (0,1)$.
	For the concave and increasing function $q^-_u(\mu)$, we have
	$$
	\frac{1}{1-n\alpha} \int_{(n-1)\alpha}^{1-\alpha} q^-_u(\mu)\d u \ge \frac{1}{1-2\alpha} \int_{ \alpha}^{1- \alpha} q^-_u(\mu)\d u
	\ge   \int_{ 0}^1 q^-_u(\mu)\d u.
	$$
	Therefore,
	$$\inf_{\alpha \in (0,\frac1n )} n  R_{\alpha ,1-n \alpha}(\mu)=n R_{0,1}(\mu). $$
	Note that if \eqref{eq:main1hom} holds as an equality, then
	$\inf_{\nu \in \Lambda_n(\mu) } q_1^- (\nu) =n R_{0,1}(\mu)$, which, by Proposition \ref{lem:lem2} below, implies that $\mu$ is $n$-CM.
	There are distributions $\mu$ with a decreasing density that are not $n$-CM, and an equivalent condition is obtained by \cite{WW11}; see Appendix \ref{app:3} for further explanation.
	Therefore, \eqref{eq:main1hom} does not hold as an equality for some distributions with an increasing density. A specific example is shown in Figure \ref{fig:bound_p} (right panel).
\end{example}

\begin{example}[\eqref{eq:main1p} does not hold without a finite mean]\label{ex:cauchy}
	By Theorem 4.2 of \cite{PRWW18}, for standard Cauchy probability measures $\mu_1,\dots,\mu_n$,
	there exists $\nu_1,\nu_2\in \Gamma(\boldsymbol \mu)$ such that
	$$q_0^+(\nu_1)=q_1^-(\nu_1)= \sup_{\boldsymbol \beta \in \Delta_n} \sum_{i=1}^n R_{1-\beta_0-\beta_i, \beta_0}(\mu_i)= -\frac{n\log (n-1)}{\pi}$$
	and
	$$q_0^+(\nu_2)=q_1^-(\nu_2)=  \inf_{\boldsymbol \beta\in \Delta_n}   \sum_{i=1}^n R_{\beta_i, \beta_0}(\mu_i) = \frac{n\log (n-1)}{\pi}.$$
	Hence, we have
	$$
	\begin{aligned}
		\inf_{\nu \in \Lambda(\boldsymbol \mu) } q_1^- (\nu)  = \sup_{\boldsymbol \beta \in \Delta_n} \sum_{i=1}^n R_{1-\beta_0-\beta_i, \beta_0}(\mu_i)
		& = -\frac{n\log (n-1)}{\pi} \\
		& <  \frac{n\log (n-1)}{\pi}
		=  \inf_{\boldsymbol \beta\in \Delta_n}   \sum_{i=1}^n R_{\beta_i, \beta_0}(\mu_i)=  \sup_{\nu \in \Lambda(\boldsymbol \mu) }q_0^+ (\nu).
	\end{aligned}
	$$
\end{example}

\begin{example}[RA fails] \label{ex:RA-fails}
In   Example \ref{ex:RA_not_sharp},
  $\mu$ is a triatomic uniform distribution on $\{1,2,3\}$,
  and the convolution bound gives $\sup_{\nu \in \Lambda( \mu,\mu,\mu) } q_0^+ (\nu) \le 6$ which is attainable. 
  With the initial matrix below, we see that RA does not provide 6, but instead it gives the interval [5,5].
 $$
	\mbox{RA: }~~
	\begin{pmatrix}
		1 & 1 & 1\\
		2 & 2 & 2\\
		3 & 3 & 3
	\end{pmatrix}
 \quad \Longrightarrow \quad 
	\begin{pmatrix}
		3 & 1 & 1\\
		2 & 2 & 2\\
		1 & 3 & 3
	\end{pmatrix}	\quad  \Longrightarrow \quad \mbox{termination}
 $$ 
 For how RA runs, see \cite{EPR13}. 
\end{example}

\section{Well-posedness}\label{app:attain}
Similarly to \eqref{eq:R+}, for $\boldsymbol \mu=(\mu_1,\dots,\mu_n)\in \M^n$ and $\boldsymbol \beta =(\beta_0,\beta_1,\dots,\beta_n)\in \Delta_n$, we denote by
\begin{equation}\label{eq:R-}
	R^{-}_{\boldsymbol \beta}(\boldsymbol \mu) =\sum_{i=1}^n R_{1-\beta_i-\beta_0,\beta_0 }(\mu_i).
\end{equation}
We discuss the attainability of the infimum in $\inf_{\boldsymbol \beta\in \Delta_n} R_{\boldsymbol \beta}^+(\boldsymbol \mu)$ and the supremum in $\sup_{\boldsymbol \beta \in \Delta_n} R^{-}_{\boldsymbol \beta}(\boldsymbol \mu)$. Note that $R_{\boldsymbol \beta}^+(\boldsymbol \mu)$ and $R_{\boldsymbol \beta}^-(\boldsymbol \mu)$ are well defined for $\boldsymbol \beta \in \Delta_n$. Now we discuss cases with $\beta_i$ taking boundary values of $0, 1$. We discuss whether $R^{+}_{\boldsymbol \beta}(\boldsymbol \mu)$ and $R^{-}_{\boldsymbol \beta}(\boldsymbol \mu)$ are well defined on the closure $\overline{\Delta}_n$.

\begin{enumerate}
	\item
	For $\boldsymbol{\beta} \in \Delta_n \subset \overline{\Delta}_n$, there is no undefined form ``$\infty - \infty$" in $R^{+}_{\boldsymbol \beta}(\boldsymbol \mu)$ and $R^{-}_{\boldsymbol \beta}(\boldsymbol \mu)$, which are hence always well defined.
	\item
	For $\boldsymbol{\beta} \in \overline{\Delta}_n$ with $\beta_i=0$ for some $i \in \{1,\cdots,n\}$ and $\beta_0 \in (0,1]$, we define $R^{-}_{\boldsymbol \beta}$ and $R^{+}_{\boldsymbol \beta}$ similarly:
	$$
	R^{+}_{\boldsymbol \beta}(\boldsymbol \mu) = \sum_{j \neq i} R_{\beta_j, \beta_0 }(\mu_j) \! + \! R_{0, \beta_0}(\mu_i),~~ R^{-}_{\boldsymbol \beta}(\boldsymbol \mu) = \sum_{j \neq i} R_{1-\beta_j-\beta_0, \beta_0}(\mu_j) \! + \! R_{1-\beta_0, \beta_0}(\mu_i),
	$$
	except ``$\infty - \infty$" cases that the integral of $q_t^{-}(\mu_i)$ at the neighbour of $0$ is negative infinite and that of $q_t^{-}(\mu_j)$ at the neighbour of $1$ is infinite for some $i,j \in \{1,\cdots,n\}$, i.e., $R_{0, \epsilon}(\mu_j) = \infty$ and $R_{1-\epsilon, \epsilon}(\mu_i) = -\infty$ for some $\epsilon \in (0,1)$.  $R^{+}_{\boldsymbol \beta}$ and $R^{-}_{\boldsymbol \beta}$ are always well defined if $\boldsymbol \mu \in \M_1^n$.
	\item
	For $\boldsymbol{\beta} \in \overline{\Delta}_n$ with $\beta_0 = 0$, we define 
	$$
	R^{+}_{\boldsymbol{\beta}}(\boldsymbol{\mu}) = \sum_{i=1}^n q_{1-\beta_i}^{-}(\mu_i), ~~~~ R^{-}_{\boldsymbol{\beta}}(\boldsymbol{\mu}) = \sum_{i=1}^n q_{\beta_i}^{+}(\mu_i),
	$$
	except ``$\infty - \infty$" cases that $q_1^{-}(\mu_i) = \infty$ and $q_0^{-}(\mu_j)=-\infty$ for some $i,j \in \{1,\cdots,n\}$ and $i \neq j$. They are always well defined if $\mu_1, \cdots, \mu_n$ are all bounded from the positive or negative side.
\end{enumerate}	

Because of the continuity of $R_{\beta,\alpha}$ in $\beta,\alpha\in [0,1]$, it can be proved that the infimum of $\inf_{\boldsymbol \beta\in (1-t)\Delta_n} R^{+}_{\boldsymbol \beta}(\boldsymbol \mu)$ of cases $t \in [0,1)$ and the supremum of $\sup_{\boldsymbol \beta \in t\Delta_n} R^{-}_{\boldsymbol \beta}(\boldsymbol \mu)$ of cases $t \in (0,1]$ are attained in the well-defined part of $\overline{\Delta}_n$.

\section{Connection to joint mixability}\label{app:3}

Joint mixability is closely related to quantile aggregation.
The tuple of distributions $\boldsymbol \mu \in \M^n$ is said to be \emph{jointly mixable} (JM, \cite{WPY13})
if $\delta_C\in  \Lambda(\boldsymbol \mu ) $ for some $C\in \R$.
Such $C$ is called a center of $\boldsymbol \mu $.
The name JM means that the marginal distributions $(\mu_1,\dots,\mu_n)$ is \emph{able} to support a \emph{joint mix} dependence (i.e., a random vector with a constant sum). 
Similarly, a probability measure $\mu$  on $\R$  is $n$-\emph{completely mixable} ($n$-CM, \cite{WW11}) if the $n$-tuple $ (\mu,\dots,\mu) $ is JM.
Obviously, if $\boldsymbol \mu \in \M_1^n$ is JM, then its center is unique and equal to the sum of the means of its components.
If $\boldsymbol \mu \in \M^n$ is JM but it is not in $\M_1^n$, then its center may not be unique (\cite{PRWW18}). The determination of joint mixability for a given $\boldsymbol \mu \in \M^n$ is well known to be a challenging problem and analytical results are limited. The main results of this appendix are a sufficient condition on the sharpness of convolution bounds and some conditions on the determination of JM.

We first see that JM is a sufficient condition for the bounds in Proposition \ref{prop:qa-1} to be sharp for probability measures with finite means.

\begin{proposition}\label{prop:qa-3}
	If $\boldsymbol \mu \in \M_1^n$ is JM, then the bounds in Proposition \ref{prop:qa-1} are sharp, and their values are equal to the unique center of $\boldsymbol \mu$.
\end{proposition}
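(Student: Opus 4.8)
The plan is to show the two-sided chain of inequalities in Proposition \ref{prop:prop1} collapses to a single value when $\boldsymbol\mu\in\M_1^n$ is JM. First I would recall that JM means $\delta_C\in\Lambda(\boldsymbol\mu)$ for some $C\in\R$, and that for $\boldsymbol\mu\in\M_1^n$ this center is necessarily $C=\sum_{i=1}^n R_{0,1}(\mu_i)$, the sum of the means (since expectation is additive and $\E[\delta_C]=C$). The point measure $\delta_C$ lies in $\Lambda(\boldsymbol\mu)$, so it is an admissible aggregate distribution. Evaluating quantiles at $\delta_C$ gives $q_0^+(\delta_C)=q_1^-(\delta_C)=C$. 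Hence
\begin{equation*}
\sup_{\nu\in\Lambda(\boldsymbol\mu)}q_0^+(\nu)\ge C \quad\text{and}\quad \inf_{\nu\in\Lambda(\boldsymbol\mu)}q_1^-(\nu)\le C.
\end{equation*}

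Next I would invoke Proposition \ref{prop:prop1}, which gives the chain
\begin{equation*}
\inf_{\nu\in\Lambda(\boldsymbol\mu)}q_1^-(\nu)\ge \sup_{\boldsymbol\beta\in\Delta_n}\sum_{i=1}^n R_{1-\beta_i-\beta_0,\beta_0}(\mu_i)\ge \sum_{i=1}^n R_{0,1}(\mu_i)=C=\sum_{i=1}^n R_{0,1}(\mu_i)\ge \inf_{\boldsymbol\beta\in\Delta_n}\sum_{i=1}^n R_{\beta_i,\beta_0}(\mu_i)\ge \sup_{\nu\in\Lambda(\boldsymbol\mu)}q_0^+(\nu).
\end{equation*}
Combining with the two inequalities from the previous paragraph, every term in this chain is squeezed to equal $C$. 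In particular $\inf_{\nu}q_1^-(\nu)=C=\sup_{\boldsymbol\beta}\sum_i R_{1-\beta_i-\beta_0,\beta_0}(\mu_i)$, which is exactly sharpness of the lower bound \eqref{eq:main2}, and $\sup_{\nu}q_0^+(\nu)=C=\inf_{\boldsymbol\beta}\sum_i R_{\beta_i,\beta_0}(\mu_i)$, which is sharpness of the upper bound \eqref{eq:main1}. This proves both bounds in Proposition \ref{prop:qa-1} are sharp with common value the center $C$.

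There is essentially no hard obstacle here: the entire argument is a squeeze using results already established (Proposition \ref{prop:prop1} and the definition of JM), plus the elementary observation that a degenerate distribution $\delta_C$ has all its quantiles equal to $C$ and, being in $\Lambda(\boldsymbol\mu)$, furnishes matching bounds from the other side. The only point requiring a sentence of care is the uniqueness of the center: for $\boldsymbol\mu\in\M_1^n$, additivity of the mean forces $C=\sum_i\E[\mu_i]$, so there is no ambiguity (by contrast with the mean-free case discussed in Appendix \ref{app:3}). If one wanted to be fully self-contained one could alternatively bypass Proposition \ref{prop:prop1} and argue directly: $C=\sum_i R_{0,1}(\mu_i)\ge\inf_{\boldsymbol\beta}\sum_i R_{\beta_i,\beta_0}(\mu_i)\ge\sup_\nu q_0^+(\nu)\ge q_0^+(\delta_C)=C$ forces equality throughout, and symmetrically for the lower bound — but citing Proposition \ref{prop:prop1} is cleaner.
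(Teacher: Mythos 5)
Your argument is correct and matches the paper's own proof: both exploit $\delta_C\in\Lambda(\boldsymbol\mu)$ with $C=\sum_{i=1}^n R_{0,1}(\mu_i)$ and squeeze the chain of inequalities from Propositions \ref{prop:qa-1} and \ref{prop:prop1} down to the single value $C$. Indeed, the direct chain you mention at the end, $C\ge\inf_{\boldsymbol\beta}\sum_i R_{\beta_i,\beta_0}(\mu_i)\ge\sup_\nu q_0^+(\nu)\ge q_0^+(\delta_C)=C$ (and its symmetric counterpart for \eqref{eq:main2}), is exactly the paper's proof.
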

\begin{proof}
	Note that since $\boldsymbol \mu=(\mu_1,\dots,\mu_n)$ is JM, we know $\delta_C\in \Lambda(\boldsymbol \mu) $ where $C=\sum_{i=1}^n R_{0,1}(\mu_i)$.
	Hence, by Proposition \ref{prop:qa-1},
	$$  \inf_{\boldsymbol \beta\in \Delta_n}   R^{+}_{\boldsymbol \beta}(\boldsymbol \mu)\ge \sup_{\nu \in \Lambda(\boldsymbol \mu) }q_0^+ (\nu) \ge q_0^+ (\delta_C)  \ge C \ge  \inf_{\boldsymbol \beta\in \Delta_n}   R^{+}_{\boldsymbol \beta}(\boldsymbol \mu).$$
	The case for \eqref{eq:main2} is similar.
\end{proof}
Proposition \ref{prop:qa-3} supports Proposition \ref{prop:qa-1} by giving further conditions for the bounds in Proposition \ref{prop:qa-1} to be sharp, which can be checked through existing results on joint mixability in \cite{WW16}. However, unlike Theorem \ref{th:qa-2}, Proposition \ref{prop:qa-3} itself does not offer new ways to calculate quantile aggregation, since the convolution bounds in \eqref{eq:main1} and \eqref{eq:main2} are all trivially equal to the center if  we know $(\mu_1,\dots,\mu_n)\in \M^n_1$ is JM.

Next, we look in the converse direction: implications of Theorem \ref{th:qa-4} and Theorem \ref{th:qa-2} on conditions for JM.
Proposition \ref{prop:qa-1} directly implies the following necessary condition for JM, which is also noted by Proposition 3.3  of \cite{PRWW18} with a similar argument.
If $\boldsymbol \mu $ is JM with center $C$, then  $C=q_0^+(\nu_0) =q_1^-(\nu_0)$ for some $\nu_0\in \Lambda(\boldsymbol \mu )$. Hence, $$\inf_{\nu \in \Lambda(\boldsymbol \mu) } q_1^- (\nu)\le C \le\sup_{\nu \in \Lambda(\boldsymbol \mu) } q_0^+ (\nu).$$
Using Proposition \ref{prop:qa-1}, we arrive at (where $R^-_{\boldsymbol{\beta}}(\boldsymbol{\mu})$ is defined at \eqref{eq:R-})
$$
\sup_{\boldsymbol \beta \in \Delta_n} R^{-}_{\boldsymbol \beta}(\boldsymbol \mu ) \le C \le \inf_{\boldsymbol \beta\in \Delta_n}   R^{+}_{\boldsymbol \beta}(\boldsymbol \mu ).
$$
If the means of $\mu_1,\dots,\mu_n$ are finite, then by Proposition \ref{prop:prop1}, we have
$\sup_{\boldsymbol \beta \in \Delta_n} R^{-}_{\boldsymbol \beta}(\boldsymbol \mu )
\ge \inf_{\boldsymbol \beta\in \Delta_n}   R^{+}_{\boldsymbol \beta}(\boldsymbol \mu ).
$
Therefore, a necessary condition for  $\boldsymbol \mu \in \M_1^n$ to be JM  is
$$
\sup_{\boldsymbol \beta \in \Delta_n} R^{-}_{\boldsymbol \beta}(\boldsymbol \mu ) = \inf_{\boldsymbol \beta\in \Delta_n}   R^{+}_{\boldsymbol \beta}(\boldsymbol \mu ).
$$
We summarize the above simple findings in the following proposition.
We use the convention that the closed interval $[a,b]$ is empty if $a>b$.
\begin{proposition}\label{prop:2}
	The possible center $C$  of $\boldsymbol \mu =(\mu_1,\dots,\mu_n)\in \M^n$  satisfies
	\begin{equation}\label{eq:main5}C\in \left[ \sup_{\boldsymbol \beta \in \Delta_n} R^{-}_{\boldsymbol \beta}(\boldsymbol \mu ) ,  \inf_{\boldsymbol \beta\in \Delta_n}   R^{+}_{\boldsymbol \beta}(\boldsymbol \mu )\right].\end{equation}
	In particular, if   $\boldsymbol \mu $ is JM, then
	\begin{equation}\label{eq:main3}
		\sup_{\boldsymbol \beta \in \Delta_n} R^{-}_{\boldsymbol \beta}(\boldsymbol \mu ) \le  \inf_{\boldsymbol \beta\in \Delta_n}   R^{+}_{\boldsymbol \beta}(\boldsymbol \mu ),
	\end{equation}
	and further if $\boldsymbol \mu \in \M_1^n$, then
	\begin{equation}\label{eq:main4}
		\sup_{\boldsymbol \beta \in \Delta_n} R^{-}_{\boldsymbol \beta}(\boldsymbol \mu ) =\sum_{i=1}^n R_{0,1}(\mu_i)= \inf_{\boldsymbol \beta\in \Delta_n}   R^{+}_{\boldsymbol \beta}(\boldsymbol \mu ).
	\end{equation}
\end{proposition}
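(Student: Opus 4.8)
The plan is to obtain all three displays as a direct repackaging of the two-sided convolution bounds on quantile aggregation (Proposition \ref{prop:qa-1}) and the chain of inequalities valid on $\M_1^n$ (Proposition \ref{prop:prop1}), using nothing beyond the definition of a center. First I would establish \eqref{eq:main5}. Let $C\in\R$ be a center of $\boldsymbol\mu$, so that $\delta_C\in\Lambda(\boldsymbol\mu)$. Since $q_0^+(\delta_C)=q_1^-(\delta_C)=C$, taking $\nu=\delta_C$ in the worst- and best-case aggregation problems gives
$$
\inf_{\nu\in\Lambda(\boldsymbol\mu)}q_1^-(\nu)\le C\le\sup_{\nu\in\Lambda(\boldsymbol\mu)}q_0^+(\nu).
$$
Inserting the convolution bounds of Proposition \ref{prop:qa-1}, namely $\sup_\nu q_0^+(\nu)\le\inf_{\boldsymbol\beta\in\Delta_n}R^{+}_{\boldsymbol\beta}(\boldsymbol\mu)$ and $\inf_\nu q_1^-(\nu)\ge\sup_{\boldsymbol\beta\in\Delta_n}R^{-}_{\boldsymbol\beta}(\boldsymbol\mu)$, yields precisely $\sup_{\boldsymbol\beta\in\Delta_n}R^{-}_{\boldsymbol\beta}(\boldsymbol\mu)\le C\le\inf_{\boldsymbol\beta\in\Delta_n}R^{+}_{\boldsymbol\beta}(\boldsymbol\mu)$, which is \eqref{eq:main5}.

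Next, \eqref{eq:main3} is immediate: if $\boldsymbol\mu$ is JM then a center $C$ exists, so the interval in \eqref{eq:main5} is nonempty, forcing $\sup_{\boldsymbol\beta\in\Delta_n}R^{-}_{\boldsymbol\beta}(\boldsymbol\mu)\le\inf_{\boldsymbol\beta\in\Delta_n}R^{+}_{\boldsymbol\beta}(\boldsymbol\mu)$. Finally, for \eqref{eq:main4} assume in addition $\boldsymbol\mu\in\M_1^n$. Proposition \ref{prop:prop1} gives, without any JM hypothesis,
$$
\sup_{\boldsymbol\beta\in\Delta_n}R^{-}_{\boldsymbol\beta}(\boldsymbol\mu)\ge\sum_{i=1}^nR_{0,1}(\mu_i)\ge\inf_{\boldsymbol\beta\in\Delta_n}R^{+}_{\boldsymbol\beta}(\boldsymbol\mu).
$$
Combining this with the reverse inequality \eqref{eq:main3} sandwiches all three quantities between $\inf_{\boldsymbol\beta}R^{+}_{\boldsymbol\beta}(\boldsymbol\mu)$ and itself, so they coincide, which is \eqref{eq:main4}; consistency with the fact that the (unique) center of a JM tuple in $\M_1^n$ equals the sum of the component means $\sum_i R_{0,1}(\mu_i)$ should be noted in passing.

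The only points that need care are bookkeeping ones rather than genuine obstacles: that $q_0^+(\delta_C)=q_1^-(\delta_C)=C$ for a point mass; that the right-hand sides of the convolution bounds are the well-defined expressions over $\Delta_n$, so no ``$\infty-\infty$'' form intervenes (and, when working in $\M_1^n$, one may invoke Appendix \ref{app:attain} to know $R^{\pm}_{\boldsymbol\beta}$ is well defined on the relevant part of $\overline{\Delta}_n$ as well); and that Propositions \ref{prop:qa-1} and \ref{prop:prop1} are applied exactly under the hypotheses stated there. Thus the ``hard part'' is merely to assemble the inequalities in the correct order; there is no substantive difficulty, since the proposition is a corollary of the two quoted results.
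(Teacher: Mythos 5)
Your proposal is correct and follows essentially the same route as the paper: the paper proves \eqref{eq:main5} by noting $\delta_C\in\Lambda(\boldsymbol\mu)$ gives $\inf_{\nu}q_1^-(\nu)\le C\le\sup_{\nu}q_0^+(\nu)$ and then inserting the convolution bounds of Proposition \ref{prop:qa-1}, obtains \eqref{eq:main3} from nonemptiness of the interval, and gets \eqref{eq:main4} by combining with the chain in Proposition \ref{prop:prop1} exactly as you do. No gaps.
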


The set $\Delta_n$ appeared in Proposition \ref{prop:2} may be replaced by $\overline{\Delta}_n$  if $(\mu_1,\dots,\mu_n)\in \M_1^n$.
We next verify that for many classes distributions known in the literature,
\eqref{eq:main3}-\eqref{eq:main4} actually are sufficient for JM, and all centers are identified with Proposition \ref{prop:2}.
We first present a convenient result which is useful for the determination of JM for distributions with finite means.
\begin{proposition} \label{lem:lem2}
	For  $\boldsymbol \mu = (\mu_1,\dots,\mu_n)\in \M_1^n$,
	the following statements are equivalent.
	\begin{enumerate}[(i)]
		\item $\boldsymbol \mu$   is JM.
		\item   $\sup_{\nu \in \Lambda(\boldsymbol \mu) }q_0^+ (\nu) = \sum_{i=1}^n R_{0,1}(\mu_i) .$
		\item $\inf_{\nu \in \Lambda(\boldsymbol \mu) }q_1^- (\nu) = \sum_{i=1}^n R_{0,1}(\mu_i) .$
		\item $\sup_{\nu \in \Lambda(\boldsymbol \mu) }q_0^+ (\nu) = \inf_{\nu \in \Lambda(\boldsymbol \mu) }q_1^- (\nu) .$
	\end{enumerate}
\end{proposition}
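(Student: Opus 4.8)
The plan is to use the chain of inequalities from Proposition \ref{prop:prop1} as the common thread. For every $\boldsymbol \mu \in \M_1^n$ one always has
\begin{equation*}
\sup_{\nu \in \Lambda(\boldsymbol \mu)} q_0^+(\nu) \le \sum_{i=1}^n R_{0,1}(\mu_i) \le \inf_{\nu \in \Lambda(\boldsymbol \mu)} q_1^-(\nu),
\end{equation*}
and throughout I write $C = \sum_{i=1}^n R_{0,1}(\mu_i)$ for the common sum of the means of $\mu_1,\dots,\mu_n$. With this backbone in place, the four statements fold into one another quickly.

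First I would prove (i) $\Rightarrow$ (ii), (iii) and (iv) simultaneously. If $\boldsymbol \mu$ is JM then $\delta_C \in \Lambda(\boldsymbol \mu)$, because a JM tuple in $\M_1^n$ has a unique center equal to the sum of its component means (as recalled in Appendix \ref{app:3}). Since $q_0^+(\delta_C) = q_1^-(\delta_C) = C$, we get $\sup_\nu q_0^+(\nu) \ge C$ and $\inf_\nu q_1^-(\nu) \le C$; combining with the chain inequality above forces $\sup_\nu q_0^+(\nu) = C = \inf_\nu q_1^-(\nu)$, which is (ii), (iii) and (iv) at once.

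Next I would close the loop with (ii) $\Rightarrow$ (i) and, symmetrically, (iii) $\Rightarrow$ (i). Assume (ii). By Proposition \ref{prop:nu} the supremum is attained: there is $\nu_+ \in \Lambda(\boldsymbol \mu)$ with $q_0^+(\nu_+) = C$. Since $q_0^+$ is the essential infimum, writing $S = X_1 + \dots + X_n$ for a coupling realizing $\nu_+$ we get $S \ge C$ almost surely; but $\boldsymbol \mu \in \M_1^n$ gives $\E[S] = \sum_{i=1}^n \E[X_i] = C$, and an integrable random variable bounded below by its own mean is almost surely constant. Hence $S = C$ a.s., i.e. $\delta_C \in \Lambda(\boldsymbol \mu)$, so $\boldsymbol \mu$ is JM. The implication (iii) $\Rightarrow$ (i) is the mirror image, using that $q_1^-$ is the essential supremum and that Proposition \ref{prop:nu_sym} supplies a minimizer $\nu_-$ with $q_1^-(\nu_-) = C$. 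Finally, (iv) $\Rightarrow$ (ii): if $\sup_\nu q_0^+(\nu) = \inf_\nu q_1^-(\nu)$, the chain inequality squeezes this common value to be exactly $C$, which is (ii), and we are back in the previous case; conversely (i) $\Rightarrow$ (iv) was already shown. This establishes all four equivalences.

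The only step that is not pure bookkeeping is the appeal to Propositions \ref{prop:nu} and \ref{prop:nu_sym}: the passage from ``$\sup_\nu q_0^+(\nu) = C$'' to ``some coupling has an almost surely constant aggregate'' genuinely needs the supremum to be attained, since an approximating sequence of couplings need not converge to a point mass. Granting that attainment, everything else reduces to the chain inequality of Proposition \ref{prop:prop1} together with the elementary fact that a one-sidedly bounded integrable random variable with matching mean is degenerate.
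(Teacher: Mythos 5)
Your proof is correct, and it follows the same backbone as the paper's: the chain inequality of Proposition \ref{prop:prop1} supplies $\sup_{\nu}q_0^+(\nu)\le C\le \inf_{\nu}q_1^-(\nu)$, joint mixability gives $\delta_C\in\Lambda(\boldsymbol\mu)$ and hence (i)$\Rightarrow$(ii)--(iv), and (iv)$\Rightarrow$(ii),(iii) is the same squeeze. The only genuine difference is in the converse implications (ii)$\Rightarrow$(i) and (iii)$\Rightarrow$(i): you invoke attainment of the supremum (Proposition \ref{prop:nu}, respectively Proposition \ref{prop:nu_sym}) to produce a coupling with aggregate $S\ge C$ a.s.\ and $\E[S]=C$, whence $S=C$ a.s.; the paper instead avoids attainment altogether and argues that, since every $\nu\in\Lambda(\boldsymbol\mu)$ has mean $C$, any sequence with $q_0^+(\nu_k)\to C$ converges weakly to $\delta_C$, which then lies in $\Lambda(\boldsymbol\mu)$ because this set is weakly closed (Theorem 2.1 of Bernard--Jiang--Wang). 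Both routes rest on results imported from the same source (Lemma 4.2 versus Theorem 2.1 of that paper); yours is slightly more elementary once attainment is granted, while the paper's weak-closedness argument would survive even in settings where one only has an approximating sequence rather than a maximizer. Your closing remark correctly identifies attainment as the one non-trivial ingredient of your version, so the proposal stands as a complete proof.
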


\begin{proof}
	Let $C=\sum_{i=1}^n R_{0,1}(\mu_i)$. By Proposition \ref{prop:prop1},
	\begin{equation}
		\label{eq:lem2pf1}\sup_{\nu \in \Lambda(\boldsymbol \mu) }q_0^+ (\nu) \le C \le \inf_{\nu \in \Lambda(\boldsymbol \mu) }q_1^- (\nu).
	\end{equation}
	As a consequence, (iv)$\Rightarrow$(ii)-(iii).
	
	If $\boldsymbol \mu$   is JM, then there exists $\nu_0 \in  \Lambda(\boldsymbol \mu)$ such that $q_0^+ (\nu_0) =  \sum_{i=1}^n R_{0,1}(\mu_i) =q_1^- (\nu_0)$.
	This, together with \eqref{eq:lem2pf1},
	shows the implication (i)$\Rightarrow$(ii)-(iv).
	
	If $\sup_{\nu \in \Lambda(\boldsymbol \mu) }q_0^+ (\nu) =C,$
	then, noting that
	$R_{0,1}(\nu)= C$  for all $\nu \in \Lambda(\boldsymbol \mu)$,
	we have  $\delta_C\in \Lambda(\boldsymbol \mu)$, since $ \Lambda(\boldsymbol \mu)$ is closed under weak convergence (Theorem 2.1 of \cite{BJW14}). This shows (ii)$\Rightarrow$(i). Similarly, (iii)$\Rightarrow$(i).
\end{proof}

Next, in view of Theorem \ref{th:qa-4}, we show in Proposition \ref{prop:prop3} that it can be checked through convolution bounds whether some distributions are JM if they have monotone densities.
\begin{proposition} \label{prop:prop3}
	$\boldsymbol \mu =(\mu_1,\dots,\mu_n)\in \M_1^n$ is JM if and only if    \eqref{eq:main4} holds, in the following cases:
	\begin{enumerate}[(i)]
		\item Each of $\mu_1,\dots,\mu_n$ admits a decreasing density on its support.
		\item Each of $\mu_1,\dots,\mu_n$ admits an increasing density on its support.
		\item $\mu_1,\dots,\mu_n$ are from the same location-scale family with unimodal and symmetric densities on their supports.
	\end{enumerate}
\end{proposition}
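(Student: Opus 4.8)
The ``only if'' implication needs nothing new: for $\boldsymbol\mu\in\M_1^n$, Proposition \ref{prop:2} already shows that joint mixability forces \eqref{eq:main4}. The content is the ``if'' direction, and the plan is to reduce it in every case to the sharpness of one convolution bound. Assuming \eqref{eq:main4}, Proposition \ref{lem:lem2} guarantees that $\boldsymbol\mu$ is JM once $\sup_{\nu\in\Lambda(\boldsymbol\mu)}q_0^+(\nu)=\sum_{i=1}^n R_{0,1}(\mu_i)$; as \eqref{eq:main4} identifies this sum with $\inf_{\boldsymbol\beta\in\Delta_n}R^{+}_{\boldsymbol\beta}(\boldsymbol\mu)$ and \eqref{eq:main1} always gives $\sup_{\nu}q_0^+(\nu)\le\inf_{\boldsymbol\beta}R^{+}_{\boldsymbol\beta}(\boldsymbol\mu)$, the whole task becomes: show \eqref{eq:main1} is sharp for the marginals in question.

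For (i) and (ii) this is free, since Proposition \ref{prop:qa-1} states precisely that \eqref{eq:main1} is sharp whenever each $\mu_i$ has a monotone density on its support; joint mixability follows.

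Case (iii) is the substantive one, since a symmetric unimodal density is \emph{not} monotone on its whole support. I would translate so that every $\mu_i$ is symmetric about $0$; then $q^-_u(\mu_i)+q^-_{1-u}(\mu_i)=0$ a.e., and integrating gives $R_{\beta_i,\beta_0}(\mu_i)=-R_{1-\beta_i-\beta_0,\beta_0}(\mu_i)$, hence $R^{+}_{\boldsymbol\beta}(\boldsymbol\mu)=-R^{-}_{\boldsymbol\beta}(\boldsymbol\mu)$ for every $\boldsymbol\beta$, so the two equalities in \eqref{eq:main4} coincide and both amount to $\inf_{\boldsymbol\beta\in\Delta_n}R^{+}_{\boldsymbol\beta}(\boldsymbol\mu)=0$. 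It then suffices to produce $\nu\in\Lambda(\boldsymbol\mu)$ with $q_0^+(\nu)\ge0$: since the sum of the marginal means is $0$, any such $\nu$ is automatically $\delta_0$, yielding JM. Two routes look viable. (a) Each $\mu_i$ being unimodal, $\mu_i^{(1/2)+}$ has a decreasing density and $\mu_i^{(1/2)-}$ an increasing one, so Theorem \ref{th:qa-4}(ii), Theorem \ref{th:qa-2}(ii), the rescaling \eqref{eq:rvar_scale}, and Theorem \ref{th:qa-5} applied to the tail families $\boldsymbol\mu^{(1/2)\pm}$ deliver sharp convolution bounds and explicit optimal couplings for the $\LES$- and $\ES$-aggregation of the two halves; one then glues the optimal coupling of the upper halves of some indices to the reflection (taken in the lower halves) of the remaining ones, using the location--scale symmetry to make the constant ``body'' level of the upper-half piece cancel that of the lower-half piece, so that the aggregate carries no mass below $0$. (b) Alternatively, invoke the characterization of joint mixability of a same-location--scale tuple of symmetric unimodal laws available in \cite{WW16}, and check, via the identity $R^{+}_{\boldsymbol\beta}(\boldsymbol\mu)=-R^{-}_{\boldsymbol\beta}(\boldsymbol\mu)$ together with a short one-parameter computation of $R_{\beta,\alpha}$, that that criterion is equivalent to $\inf_{\boldsymbol\beta}R^{+}_{\boldsymbol\beta}(\boldsymbol\mu)=0$.

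The main obstacle is route (a) in case (iii): turning the level-$1/2$ extremal structures into a bona fide mixing coupling of $\boldsymbol\mu$. The delicate bookkeeping is which indices should contribute their upper half and which their lower half, together with the treatment of boundary minimizers ($\beta_0=1$, or some $\beta_i=0$) as in Theorem \ref{th:qa-5}, so that $\sum_i X_i\ge0$ while each $X_i\sim\mu_i$. The symmetry of the family — which forces the extremal upper-half level and lower-half level to be negatives of one another — is exactly what makes the cancellation possible, and is why the homogeneous uniform/Gaussian picture extends to the whole class.
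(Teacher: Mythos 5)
Your ``only if'' direction and cases (i)--(ii) are exactly the paper's argument: necessity from Proposition \ref{prop:2}, and sufficiency from sharpness of \eqref{eq:main1} for monotone densities (Theorem \ref{th:qa-4} / Proposition \ref{prop:qa-1}) combined with Proposition \ref{lem:lem2}. Your symmetry identity $R_{1-\beta_i-\beta_0,\beta_0}(\mu_i)=-R_{\beta_i,\beta_0}(\mu_i)$, hence $R^{-}_{\boldsymbol\beta}(\boldsymbol\mu)=-R^{+}_{\boldsymbol\beta}(\boldsymbol\mu)$, is also correct and consistent with what the paper uses implicitly.

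The problem is case (iii). Your route (b) is the paper's actual proof: center the distributions, invoke Corollary 3.6 of \cite{WW16} (JM if and only if $2\bigvee_i a_i\le\sum_i a_i$ for the scale parameters), and deduce that inequality from \eqref{eq:main4}. But you leave the decisive step as ``a short one-parameter computation'' without saying what it is; the content is to evaluate $R^{-}_{\boldsymbol\beta}$ (equivalently $R^{+}_{\boldsymbol\beta}$) at the specific boundary point $\boldsymbol\beta=(1-\epsilon,\epsilon,0,\dots,0)$ with the $\epsilon$ placed at the largest-scale component, which by symmetry factors as $\frac{R_{0,1-\epsilon}(\mu_1)}{a_1}\bigl(a_1-\sum_{i\ge2}a_i\bigr)$, so that \eqref{eq:main4} forces $a_1\le\sum_{i\ge2}a_i$. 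Without exhibiting such a test point the claimed equivalence of the WW16 criterion with $\inf_{\boldsymbol\beta}R^{+}_{\boldsymbol\beta}(\boldsymbol\mu)=0$ is asserted, not proved. Your route (a) has a genuine gap and would not be repaired easily: producing $\nu\in\Lambda(\boldsymbol\mu)$ with $q_0^+(\nu)\ge0$ \emph{is} constructing the mixing coupling, i.e.\ reproving the joint mixability theorem for symmetric unimodal location--scale tuples, which is the hard result of \cite{WW16}; the proposed gluing of level-$1/2$ extremal structures never explains where the necessary scale condition $2\bigvee_i a_i\le\sum_i a_i$ enters, yet no such coupling can exist without it (take $a_1$ much larger than $\sum_{i\ge2}a_i$ and the sum cannot be degenerate), so the ``delicate bookkeeping'' you defer is precisely the crux. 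In short: adopt route (b), make the test-point computation explicit, and drop route (a) or treat it only as heuristic motivation.
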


\begin{proof}
	The necessity of \eqref{eq:main4} is stated in Proposition \ref{prop:2}, and hence we only show its sufficiency.
	\begin{enumerate}[(i)]
		\item By Theorem \ref{th:qa-4} and  \eqref{eq:main4}, we know
		$$\sup_{\nu \in \Lambda(\boldsymbol \mu) }q_0^+ (\nu) = \inf_{\boldsymbol \beta\in \Delta_n}   R^{+}_{\boldsymbol \beta}(\boldsymbol \mu)  =\sum_{i=1}^n R_{0,1}(\mu_i) .$$
		By Proposition \ref{lem:lem2} (ii)$\Rightarrow$(i), $\boldsymbol \mu$ is JM.
		\item This is symmetric to (i).
		\item Without loss of generality, we may assume that $\mu_1,\dots,\mu_n$ all have mean zero and they have scale parameters $a_1\ge \dots \ge a_n>0$, respectively. By Corollary 3.6 of \cite{WW16}, we know that $(\mu_1,\dots,\mu_n)$ is JM if and only if $2 \bigvee_{i=1}^n a_i  \le \sum_{i=1}^n a_i$.
		Take $\boldsymbol \beta =(1-\epsilon,\epsilon,0,0,\dots,0)\in \overline{\Delta}_n$ for some $\epsilon\in (0,1)$.
		Since $\mu_1,\dots,\mu_n$ are from the same location-scale family with symmetric densities,
		we have
		$- R_{\epsilon,1-\epsilon}(\mu_i)/a_i = - R_{0,1-\epsilon}(\mu_i)/a_i = R_{0,1-\epsilon}(\mu_1)/a_1>0$ for $i=1,\dots,n$.
		By \eqref{eq:main4},  we have
		\begin{align*}
			0 = \sum_{i=1}^n R_{0,1}(\mu_i)   \ge \sum_{i=1}^n R_{1-\beta_0-\beta_i, \beta_0}(\mu_i)
			&=R_{0,1-\epsilon}(\mu_1)+
			\sum_{i=2}^n R_{\epsilon,1-\epsilon}(\mu_i)\\
			& = \frac{  R_{0,1-\epsilon}(\mu_1)  }{a_1}\left (a_1 - \sum_{i=2}^n a_i\right).
		\end{align*}
		Therefore,  $ a_1 - \sum_{i=2}^n a_i \le 0$, which implies
		$2 \bigvee_{i=1}^n a_i  \le \sum_{i=1}^n a_i$.  
	\end{enumerate}
\end{proof}

\begin{remark}
	
	By Theorem 3.2 of \cite{WW16}, for $\mu_1,\dots,\mu_n\in \M$ with decreasing densities, $(\mu_1,\dots,\mu_n)$ is JM if and only if
	\begin{equation}\label{eq:jm}
		\bigvee_{i=1}^n \left(q^-_{1} (\mu_i)  - q^+_{0}(\mu_i) \right) \le   \sum_{i=1}^n\left( R_{0,1}(\mu_i) - q_0^+(\mu_i)\right).
	\end{equation}
	We already know that \eqref{eq:main4} is necessary for $(\mu_1,\dots,\mu_n)$ to be JM.
	One can directly check that \eqref{eq:jm} is implied by \eqref{eq:main4}, thus showing the equivalence of \eqref{eq:main4} and \eqref{eq:jm}.
\end{remark}

\begin{remark}\label{rem:4}
	A similar situation of Proposition \ref{prop:prop3} is obtained for distributions without the mean: if each of $\mu_1,\dots,\mu_n$ is a standard Cauchy distribution, the set of all centers of $(\mu_1,\dots,\mu_n)$ is precisely given by \eqref{eq:main5}. This statement is based on Example 4.1 and Theorem 4.2 of \cite{PRWW18}. 	
\end{remark}

\section{Comparison of different methods in computation} \label{sec:R2-1}

In this appendix, we compare three potential ways of computing the quantile aggregation problem or its approximations.
The first two approaches, RA and linear program, require a discretization, whereas the third approach, the convolution bound,  can be applied with either discrete input or functional input.

\begin{enumerate}[(a)]
\item \textbf{Original problem}: Let $[n]:=\{1,\dots,n\}$. Consider the quantile aggregation problem
\begin{equation}\label{eq:R2-1}
	\sup_{\nu \in \Lambda(\boldsymbol \mu)} q_0^+(\nu)=  \sup\{ q_0^+(X_1+\dots+X_n): X_i\sim \mu_i,~i\in [n]\},
\end{equation}
where $\mu_1,\dots,\mu_n$ are distributions on $\R$, with supports bounded from below, and $\boldsymbol \mu=(\mu_1,\dots,\mu_n)$. 
Note that $\mu_1,\dots,\mu_n$ are assumed to have supports bounded from below because otherwise $q_0^+$ may be infinite. 
The probability level $0$ is chosen here without loss of generality, because any risk aggregation problem for $q_t^+$ can be equivalently formulated as one for $q_0^+$ as shown in Proposition 1.  

\item 
\textbf{Discretization}: To tackle problem \eqref{eq:R2-1} numerically for given distributions  $\mu_1,\dots,\mu_n$, a common step is to discretize using their quantiles, that is, to consider a number $m$ (ideally large) and for each $i\in [n]$, a distribution $  \mu_i^m$  over $m$ points  (some may be equal) each with probability $1/m$:
\begin{equation}
\label{eq:z}
z^i_1 = q^+_0(\mu_i) ,~~\dots,~~ z^i_m = q^+_{(m-1)/m}(\mu_i).
\end{equation}
The input values of the discrete problem are these $z^i_j$ for $i\in[n]$, $j\in[m]$. 
This discretization is asymptotically consistent in the following sense: Let $\boldsymbol \mu^m= (\mu_1^m,\dots,\mu_n^m)$. As $m\to \infty$,  since $\mu_i^m \to \mu_i$ weakly, $\mu_i^m\le_{\rm st} \mu_i$ (where $\le_{\rm st}$ stands for stochastic order), and $q^+_0$ is upper semi-continuous, we have
$$
\sup_{\nu \in \Lambda(\boldsymbol \mu^m)} q_0^+(\nu)\to 	\sup_{\nu \in \Lambda(\boldsymbol \mu)} q_0^+(\nu) .
 $$

\item
\textbf{Rearrangement algorithm (RA)}:  
We  first write the input values of the discrete problem into a matrix 
\begin{equation}\label{eq:matrix}
	\begin{pmatrix}
		z^1_1 & z^2_1 & \dots & z^n_1 \\
		\vdots & \vdots & \ddots & \vdots \\
		z^1_m & z^2_m & \dots & z^n_m \\
	\end{pmatrix}.
\end{equation}
The RA, introduced by \cite{EPR13}, tries to maximize the minimal row sum of the matrix resulting from rotating the elements within each column of \eqref{eq:matrix}. Note that rotating elements within each column corresponds to  changing the dependence structure of a discrete random vector while maintaining its marginals, and the row-sum vector corresponds to the distribution of the sum of components of the random vector; see Section G.2 for this problem in a different context.
The problem of finding the maximum of the minimal row sum is NP-hard (e.g., \cite{H15}), but RA can compute a suboptimal answer very quickly, which typically has good accuracy. The theoretical computational complexity of RA is unknown in the literature.
For discrete distributions, as in our setting here, the output of RA is  $\underline{s}_N$, which  is a lower bound on the true value of \eqref{eq:R2-1}.
For continuous distributions, RA outputs an interval $[\underline{s}_N, \bar{s}_N]$ with $\underline{s}_N$ again being a lower bound for the original (continuous) problem, but there is no guarantee for $\bar{s}_N$ to be either a lower bound or an upper bound for the original problem.   
RA is the most popular and standard method in the risk management literature to compute the quantile aggregation problem;  see \cite{EPR13, EPRWB14}.

\item

\textbf{Linear program (LP)}:   
Recall that
 $\Gamma( \boldsymbol \mu)$  represents the set of all distributions on $\R^n$ with marginals $\boldsymbol \mu$. 
The problem \eqref{eq:R2-1} can be equivalently formulated as 
\begin{equation}\label{eq:R4}
\sup_{\nu \in \Lambda(\boldsymbol \mu)} q_0^+(\nu)=\sup\left \{x\in \R: \int_{\R^n} \id_{\{x_1+\dots+x_n\le x \}} \Pi (\d x_1,\dots,\d x_n)=0 \mbox{~for some $\Pi\in \Gamma(\boldsymbol \mu)$}\right\}. 
\end{equation}
Writing $\Pi_{\mathbf k} = \Pi(\{(z^1_{k_1} ,\dots, z^n_{k_n})\})$  for $\mathbf k=(k_1,\dots,k_n)\in [m]^n$, which represents the probability 
$$
\p( X_1 = z^1_{k_1}, \dots, X_n = z^n_{k_n})
$$
for a random vector $(X_1,\dots,X_n)$ with the given marginals. 
The above problem can be formulated as the following program
\begin{equation}
	\label{eq:R2-2} 
	\begin{aligned}
		\sup ~ x & \\
		\mbox{subject to} &  ~~\mbox{$\Pi_{\mathbf k}\in [0,1]$ for $\mathbf k=(k_1,\dots,k_n)\in [m]^n$}
		\\& \sum_{\mathbf k \in [m]^n}  \id_{\{z^1_{k_1} +\dots+ z^n_{k_n}\le x\}} \Pi_{\mathbf k} = 0;  \\
		& \sum_{\mathbf k:k_i=j}\Pi_{\mathbf k}=\frac 1m~~~ \mbox{for each $i\in [n]$ and $j\in [m]$}.
	\end{aligned}
\end{equation}
Practically, we need to try to solve for discrete values of $x$. 
The method has two steps. 

\textbf{Step 1}: {We specify a real interval $T$ that covers the range of values for the optimal value of \eqref{eq:R2-2}.  
For each fixed $x\in T$, we define
\begin{equation}
	\label{eq:R2-3} 
	\begin{aligned}
  \quad\quad\quad\quad  V(x) &= \min ~  \sum_{\mathbf k \in [m]^n}  y_{\mathbf{k}} \Pi_{\mathbf k}  \\
		&\quad \mbox{subject to}   ~~\mbox{$\Pi_{\mathbf k}\geq0$ for $\mathbf k=(k_1,\dots,k_n)\in [m]^n$},\\ 
		& \quad\quad\quad\quad \sum_{\mathbf k:k_i=j}\Pi_{\mathbf k}=\frac 1m~~~ \mbox{for each $i\in [n]$ and $j\in [m]$},
	\end{aligned}
\end{equation}
where $y_{\mathbf{k}} = \id_{\{z^1_{k_1} +\dots+ z^n_{k_n}\le x\}}$ are parameters. 
Problem \eqref{eq:R2-3} is a linear program (LP) with $m^n$ variables and $n\times m$ equality constraints. }

\textbf{Step 2}: The optimal value output by the algorithm is given by 
\begin{equation}\label{eq:bisec}
	\sup\{ x \in T: \text{the optimal value $V(x)$ in Problem } \eqref{eq:R2-3} \leq 0\}.
\end{equation} 
For each fixed $x \in T$, this method has $m^n$ variables and $n\times m$ constraints. Note that instead of solving for each $x \in T$, to search for the optimal $x$, a bisection approach can replace the specification of $T$. 
The complexity of this problem is discussed in Remark \ref{rem:R3}.
 
\item
\textbf{Convolution bound (CB)}: We directly take the quantile functions of $\mu_1,\dots,\mu_n$ as input, let $t = 0$, and solve 
for    $$B_{\rm conv} =\inf_{\boldsymbol \beta\in (1-t)\Delta_n}   \sum_{i=1}^n R_{\beta_i,\beta_0 }(\mu_i).$$
Each term $R_{\beta_i,\beta_0 }(\mu_i)$ is an integral of the corresponding quantile function.
If the marginal distributions are given as discrete data points, then the input values for CB are the empirical quantile functions. 
Although this minimization is not convex and we do not know the  theoretical  computational complexity of CB, in all   numerical examples we find that it can be computed very fast and accurately. 
\end{enumerate}

Next, we provide numerical results to compare the three methods, RA, LP and CB, described above.
For a comparison, we assume that each marginal distribution $\mu_i$ is uniform on $m$ points, denoted by $z^i_1,\dots,z^i_m$ as in the discussion above.
Note that CB can also take  quantile functions as input, whereas RA and LP can only take discrete input.
These $m$ points are specified in two different ways.
\begin{enumerate}[(i)]
\item They are the values of the quantile functions at different levels as in \eqref{eq:z}.
\item They are randomly sampled from some distributions. 
\end{enumerate}
For this specification of the marginal distributions (there is no discretization involved, as the marginal distributions are themselves discrete), LP produces a true value of \eqref{eq:R2-1}, RA produces a lower bound on \eqref{eq:R2-1}, and   CB produces  an upper bound on \eqref{eq:R2-1}.

 The numerical results are reported in Table  \ref{tab:R2-6}.  
We make the following observations from the results. 
 \begin{enumerate}
 \item All methods become slower when either $m$ or $n$ increases, as expected. 
 \item 
LP produces the true value for \eqref{eq:R2-1}, but it has some drawbacks for implementation. As each LP problem involves $m^n$ variables, 
the applicability seems to be very limited. For $m\geq180$ with $n=3$, or $m\geq20$ with $n=5$, the computation is either unavailable in MATLAB or costs more than 24 hours. In a real risk management problem where loss distributions are typically continuous (such as asset prices or insurance losses), the value of $m$ needs to be relatively large to ensure good approximation (typically at least $10^5$; see \cite{EPR13}).
\item RA is fast in most cases and simple in coding. It can handle  $m=10^6$ and $n=200$ as demonstrated by  \cite{EPR13}. However, it only provides a lower bound,  and sometimes this lower bound may not be close to the true value provided by LP.  There are no theoretical results on the convergence of RA.
\item  CB is much faster than LP, but slower than RA. CB can handle dimensions up to $n=200$. 
In some cases studied in this paper, such as continuous distributions with monotone densities, it is theoretically proved that it produces an exact true value. 
In  our numerical results, the CB value often coincides with, or is very close to that of  LP.  It also enjoys the interpretability  of the dependence structure (see Section 6). 
It can directly handle continuous distributions often encountered in risk management without the need to discretize (for such setting, $m$ is practically infinity).  
We comment on two disadvantages of CB. First, in the computation of CB, the convergence is based on the optimization function in the software (\texttt{fmincon} in MATLAB in our case). There is no theoretical guarantee that this function finds the global optimum, due to lack of convexity. Nevertheless, in all numerical results where CB and LP agree, we know that global optimum is reached. 
Second, in case the conditions in Theorem 2 do not hold, we do not know whether CB is equal to the original problem \eqref{eq:R2-1}.  {Despite  the gap due to the non-convexity in computing CB and  the gap between CB and \eqref{eq:R2-1}, if we obtain a solution $\boldsymbol{\beta}$ from the optimization software, the objective value evaluated at $\boldsymbol{\beta}$ is guaranteed to give an upper bound for \eqref{eq:R2-1}.}

\item Combining RA and CB gives a theoretically proven interval in which the true value of \eqref{eq:R2-1} lies. This gives a fast and reliable way of finding the range of \eqref{eq:R2-1}. {While RA is practically fast and commonly used, it only provides a one-sided bound (and arguably the less important side), and CB essentially  closes the other side.}
 \end{enumerate} 

\begin{table}[htpb]\centering
\setlength{\tabcolsep}{6pt}
\renewcommand{\arraystretch}{1.35}
	\caption{Comparison of the rearrangement algorithm (RA), the linear program (LP), and the convolution bound (CB).  
	  In (a)-(e), the distribution $\mu_i$ is uniform on a set $\{z_1^i,\dots,z^i_m\}$ for each $i\in [n]$. In (f), the marginal distributions are continuous, and RA produces an interval $[\underline{s}_N, \bar{s}_N]$ using a discretization with $N=10^5$ steps. 
	 } 
\begin{subtable}{
\centering
	\footnotesize 
	\begin{tabular}{c | c c | c c | c c} 
		& $z^i_j= i\times j$ & $m=120$ & $z^i_j= i\times j$ & $m=160$  & $z^i_j= i\times j$&   $m = 180$   \\
		&    value & time &  value & time & value & time \\ 
		\hline
		LP &  363 & 11988s &  483 & {24960s} & NA & $>$24h \\
		RA &  358 & 66s  &   478 & 140s & 534 & 212s\\
		CB &  363 & 1447s & 483 &  1687s & 543 & 2315s\\ 
		\hline
	\end{tabular}} \\~\\
	\small (a) $n=3$ and $z^i_1,\dots,z^i_m$ are equidistant for each $i\in[n]$
\end{subtable}

\begin{subtable}{\centering
	\footnotesize 
	\begin{tabular}{c | c c | c c }  
		 &  $z^i_j \mbox{ iid } \sim \exp(100\times i)$ & $m = 60$ &   $z_j^i \mbox{ iid } \sim  \mathrm{U}[0, 100\times i^2]$ &$m = 100$   \\ 
		&   value &  time  &  value  &time \\ 
		\hline
		LP 
		& 589.6 & 104.7s &  499.3 & 2865.3s  \\
		RA 
		&583.2 & 0.02s & 499.3 & 35.3s  \\
		CB 
		& 608.0 & 1.8s & 499.3 & 692.1s  \\
		\hline
	\end{tabular}} \\~\\
		\small (b) $n=3$ and $z^i_1,\dots,z^i_m$ are randomly generated  for each $i\in[n]$
	\label{tab:R2-2}
\end{subtable}

\begin{subtable}{\centering
	\footnotesize 
	\begin{tabular}{c | c c | c c } 
		& $z_j^i= i^2 \times j$ & $m = 30$ & $z_j^i=  j^2$ &  $m = 30$   \\ 
		&value  &   time & value &time  \\
		\hline
		LP  & 436 & 5280s & 1260 & 6707s \\
		RA  & 436 & 0.02s & 1230 & 0.01s \\
		CB 
		& 446 & 0.9s & 1260.7 & 0.5s \\
		\hline
	\end{tabular}} \\~\\
		\small (c) $n=4$ and $z^i_1,\dots,z^i_m$ are deterministic for each $i\in[n]$
	\label{tab:R2-3}
\end{subtable}

\begin{subtable}{\centering
	\footnotesize 
	\begin{tabular}{c | c c | c c} 	
		 &  $z^i_j \mbox{ iid } \sim \exp(100\times i)$ & $m = 30$ &   $z_j^i \mbox{ iid } \sim  \mathrm{Binomial}(300,0.2\times i)$ &$m = 30$   \\ 
		&value  &   time & value &time  \\ 
		\hline
		LP  
		& 845.3 & 2937s & 1932 &  3091s \\
		RA  
		& 828.8 
		&  0.03s & 1924  & 0.02s \\
		CB 
		& 881.5 & 1.7s & 1935.5 & 0.9s \\
		\hline
	\end{tabular}} \\~\\
		\small (d) $n=4$ and $z^i_1,\dots,z^i_m$ are randomly generated  for each $i\in[n]$

	\label{tab:R2-4}
\end{subtable}

\begin{subtable}{\centering
	\footnotesize 
	\begin{tabular}{c | c c | c c} 
	 &  $z^i_j \mbox{ iid } \sim \exp(100\times i)$ & $m = 15$ &   $z_j^i \mbox{ iid } \sim  \mathrm{U}[0, 100\times i^2]$ &$m = 20$   \\ 
		&   value &  time  &  value  &time \\ 
		\hline
		LP  & 1198.0 & 4988.9s & NA & $>$24h \\
		RA  &  1158.7 & 0.02s & 
        2715.8 & 0.03s \\
		CB 
		& 1214.0 & 1.3s & 
          2760.0  & 1.8s  \\
		\hline
	\end{tabular}
	} \\~\\
		\small (e) $n=5$ and $z^i_1,\dots,z^i_m$ are randomly generated  for each $i\in[n]$
	\label{tab:R2-5}
\end{subtable}

\begin{subtable}{\centering
	\footnotesize 
	\begin{tabular}{c | c c } 
		& $  \mu_i = \mathrm{Gamma}(3, 1)$ & $n = 200$     \\
		& value & time  \\
		\hline
		RA & [599.9, 600.0] & 1710s  \\
		CB & 600 & 2021s  \\
		\hline
	\end{tabular}} \\~\\
		\small (f) $n=200$ with continuous distribution; LP cannot handle such large $n$
	\label{tab:R2-6}
\end{subtable} 
\end{table}

\begin{remark}
	\label{rem:R3} 
This binary search on the values of $x$ in the LP \eqref{eq:R2-3}  inside  \eqref{eq:bisec}  can be done efficiently  as $V(x)$ is  monotone (although not necessarily strictly monotone). This implies that the number of binary queries is $O(\log(\text{length}(T)/\epsilon)$, where $\text{length}(T)$ is the length of the interval $T$ and $\epsilon$ is the (additive) error tolerance to which we want to compute $x$. Hence, Problem \eqref{eq:R2-2} has a poly$(m,n)$ complexity if Problem \eqref{eq:R2-3} with any fixed $x\in T$ has; see Chapter 8.7 of \cite{PS98} for details.

	Next we focus on the complexity of Problem \eqref{eq:R2-3}. While it has an exponential in $n$, namely $m^n$, number of decision variables, its dual problem, given by
	\begin{equation}
	 	\label{eq:R2-3-dual} 
	 	\begin{aligned}
   \quad\quad\quad\quad  & \max_{p_{ij} \in \R, i = 1, \cdots, n, j = 1, \cdots, m} \sum_{i = 1}^n \sum_{j = 1}^m p_{ij} \frac{1}{m}, \\
	 	&\quad \mbox{subject to}   ~~\mbox{$y_{\mathbf k} - \sum_{i=1}^n p_{i, k_i} \geq 0$ for any $\mathbf k=(k_1,\dots,k_n)\in [m]^n$},
	 	\end{aligned}
	\end{equation}
	is an LP with $n \times m$ variables and $m^n$ constraints. The polynomial number of variables in the dual problem \eqref{eq:R2-3-dual} suggests the potential of yielding poly$(m,n)$-time algorithms for the primal problem \eqref{eq:R2-3}. In particular, via the ellipsoid method (e.g., Lemma 3.2 of \cite{GLS81}), 
	the LP \eqref{eq:R2-3} is polynomial-time solvable if there is a separation oracle for \eqref{eq:R2-3-dual} that runs in polynomial time; see Definition 6.2.2 of \cite{GLS12}.  
 However, deducing the availability of such a polynomial-time separation oracle appears challenging. In fact, \eqref{eq:R2-3} belongs to the multi-marginal optimal transport (MOT) problem (\cite{AB21,AB23}) with a so-called set-optimization structure (Section 6.1 of \cite{AB23}). More precisely, given the fixed matrix $\mathbf{z}$ defined by \eqref{eq:matrix} 
 and $x$, we define the set
    $S = \{ \mathbf{k} \in [m]^n: z_{k_1}^1 + \cdots + z_{k_n}^n > x \}
    $. Based on Definition 6.5 and Theorem 6.8 of \cite{AB23}, Problem \eqref{eq:R2-3} for fixed $x$ has a polynomial complexity if the problem  $\min_{\mathbf{k} \in S} -\sum_{i=1}^n p_{i, k_i}$, for any arbitrary matrix $\mathbf{p} \in \R^{m \times n}$, has. To this end, neither \cite{AB23} or any other works to our best knowledge has worked out the polynomial complexity of the problem $\min_{\mathbf{k} \in S} -\sum_{i=1}^n p_{i, k_i}$ with our considered $S$. With this, it appears that the question of whether \eqref{eq:R2-2} has a polynomial complexity remains open.
\end{remark}

\section{Two further applications}\label{app:applications}
We illustrate the convolution bounds in two additional applications. Section \ref{sec:calibration} constructs a new robust test for simulation calibration. Section \ref{sec:scheduling}  discusses the classic assembly line crew scheduling problem.

%

\subsection{Simulation calibration}\label{sec:calibration}

	In multiple statistical hypothesis testing, quantile aggregation gives critical values for various methods to combine p-values from different tests among which, most often, no dependence information is available; see e.g., \cite{RBWJ19}, \cite{VW20} and \cite{VWW22}. This problem also arises in operations research,  especially
	in the context of stochastic simulation model calibration (\cite{K95, S10}).  
	
	In simulation analysis, calibration refers to the search for parameters of simulation models to best match real data. These models are constructed to resemble the hidden dynamics of a system, which are often complex and not amenable to closed-form analysis. Instead, by running Monte Carlo, we can obtain the model outputs for prediction and other downstream decision-making tasks such as sensitivity analysis and optimization (see, e.g., \cite{LK00} for a range of applications in production and operations management). However, to ensure that the conclusions of these analyses are reliable, it is critical that the input parameters in the hidden dynamics are correctly tuned. This calls for the need for calibration, where the outputs from the simulation model are matched against the real data in order to locate the parameter values. 
	In a setting of one-dimensional (continuous) output, one could rely on a two-sample goodness-of-fit test such as the Kolmogorov-Smirnov (KS) test, which looks at the KS statistic 	
	$$
	\text{KS}=\sup_{x\in\mathbb R}|\hat F_{\text{sim}}(x)-\hat F_{\text{real}}(x)|.
	$$
	Here $\hat F_{\text{sim}} $ and $\hat F_{\text{real}} $ are the empirical distributions of the simulated and real output data, respectively.
	Under the null hypothesis that the parameters are correctly  calibrated, the asymptotic distribution of $\text{KS}$ is equal to the supremum difference between two independent scaled Brownian bridges. 
  In the multi-dimensional case, one can look at multiple KS-statistics, one for each dimension, and further use a Bonferroni correction to adjust the critical value. More precisely, when the dimension is $K$, we would look at the KS-statistic $\text{KS}_k$ for each dimension of output $k=1,\ldots,K$. If any of the $\text{KS}_k$ is above the adjusted critical value $q_{1-\gamma/K}$, then we conclude that the simulation model is different from the real data. Put another way, if $\max_{k=1,\ldots,K}\text{KS}_k>q_{1-\gamma/K}$, then we reject the hypothesis that the model is the same as reality.

It is known that the Bonferroni correction is conservative, especially when different dimensions of the outputs are highly dependent. The question is whether one can improve it without losing validity. This resembles the problem of so-called $p$-value aggregation (\cite{RBWJ19,VW20}), which aims to construct tight family-wise $p$-values from merging multiple $p$-values in individual   experiments. In the considered case, a natural alternative way to construct an aggregated statistic over all dimensions is the sum of individual KS-statistics, $\sum_{k=1}^K\text{KS}_k$. Note that each $\text{KS}_k$ has the same marginal distribution (with quantile $q_{1-\gamma}$), thus we can use Proposition \ref{prop:reduced_bound} to derive a new critical value
given by
$$ \inf_{\alpha \in (0, \gamma/K)} \frac{K}{\gamma-K\alpha}\int_{1-\gamma+(K-1)\alpha}^{1-\alpha} q_u \d u.$$
The Kolmogorov probability density function is decreasing at its tail part.
For $\gamma$ sufficiently small (e.g., $\gamma < 0.3$), this critical value is sharp among all possible dependence structures of the $\text{KS}_k$, since the marginal densities are monotonically decreasing beyond $(1-\gamma)$-quantile.
Particularly, for $K=5$, Table \ref{table:KS5} lists this new critical value\footnote{It is known that if $\hat F_{\text{real}}$ is continuous, then under the null hypothesis, the distribution of each $\sqrt{M}\text{KS}_k$ converges to the Kolmogorov distribution as the sample size $M$ goes to infinity. But the convergence rate is slow. We use the method in \cite{V18} for the asymptotic approximation (replacing $x$ by $x+\frac{1}{6\sqrt{M}}+\frac{x-1}{4 M}$ in the Kolmogorov cumulative distribution function $F(x)$).} for different $\gamma$. 


\begin{table}[t]
	\centering
	\caption{Critical values for $\sum_{k=1}^5\text{KS}_k$ in the two-sample test (all sample sizes are $M$).}
	\label{table:KS5}
	\begin{tabular}{c|c|c|c|c|c}
		\diagbox{$M$}{$\gamma$}  & 0.2 & 0.1 & 0.05 & 0.02 & 0.01 \\
		\hline
		100 & 0.8875 & 0.9801 & \textbf{1.0645} & 1.1667 & 1.2384 \\
		\hline
		1000 & 0.2833 & 0.3127 & 0.3394 & 0.3718 & 0.3945\\
	\end{tabular}
\end{table}
We illustrate our sum-of-KS statistic and newly derived critical values, and compare them with using the Bonferroni correction, in a multi-class queueing model (M/M/1/$\infty$). In this model, there are 5 types of customers ($k=1,\cdots,5$), each with its own exponential service rate ($\mu_k$) and Poisson arrival rate ($\lambda_k$). The system is first-come-first-served and starts from empty. Suppose we do not know the arrival and service time parameters in the model. On the other hand, suppose we have real output data on the average waiting times for each class of customers among 1000 total arrivals. Such a setting where only output- but not input-level data are observed can arise due to various administrative or operational constraints; see, e.g., \cite{MZ98,FK10,W81,GLQZ19}. Then, to validate a given set of parameter values in Table \ref{table:queue}, we can generate simulation outputs from several conjectured configurations (four in our example) and run the aggregated KS tests described above, which treats the average waiting time of each customer class as one output dimension.


More precisely, we consider four parameter configurations that are listed in Table \ref{table:queue}. The first column shows the true configuration and the rest are incorrectly conjectured. To facilitate the presentation and to test our approach on several ground-truth models, we define a model parameter
$
\rho = \sum_{k=1}^{5} \frac{\lambda_k}{\mu_k}
$,
which can be viewed as a summary of the traffic intensity. We experiment on three ground-truth settings: $\rho = 1.1, 1, 0.9$, representing scenarios with respectively long, medium and short waiting times. 
For each model parameter $\rho$, we conduct 1000 experimental repetitions, where in each repetition we independently generate a synthetic data set of size $M=100$, run simulation with the same size on each of the four configurations depicted in Table \ref{table:queue}, and then use our sum statistic and Bonferroni correction on KS to do multiple hypothesis tests. The results are summarized in Table \ref{table:KSreal}.
\begin{table}[t]
	\centering
	\caption{Parameters $(\mu_k, \lambda_k)$ of classes $k = 1, \cdots, 5$ in each configuration}
	\label{table:queue}
	\begin{tabular}{c|c|c|c|c}
		\diagbox{Class}{Configuration}  & Both True & False $\mu$ & False $\lambda$ & Both False  \\
		\hline
		1. Slow-service-small-arrival & (6, $\rho$) & (6.05, $\rho$) & (6, 0.95$\rho$) & (6.05, 0.95$\rho$) \\
		\hline
		2. Slow-service-large-arrival & (6, 2$\rho$) & (6.05, 2$\rho$) & (6, 1.95$\rho$) & (6.05, 1.95$\rho$) \\
		\hline
		3. Medium-service-medium-arrival & (8, 1.6$\rho$) & (8.05, 1.6$\rho$) & (8, 1.45$\rho$) & (8.05, 1.45$\rho$) \\
		\hline
		4. Quick-service-small-arrival & (10, $\rho$) & (10.05, $\rho$) & (10, 0.95$\rho$) & (10.05, 0.95$\rho$) \\
		\hline
		5. Quick-service-large-arrival & (10, 2$\rho$) & (10.05, 2$\rho$) & (10, 1.95$\rho$) & (10.05, 1.95$\rho$)
	\end{tabular}
\end{table}

\begin{table}[t]
	\centering
	\caption{Testing outputs of two methods. Here, the number of different classes is $K =5$, the significant level is $\gamma = 0.05$, the sample size on the synthetic data and simulation runs for each configuration is $M=100$ and the number of total arrivals is $1000$. By computation, the critical value for the sum statistic is $1.0645$ and that for Bonferroni correction is $0.2302$. We use $1000$ experimental repetitions. Type-I Error means the percentage of cases that the configuration with both true parameters in Table \ref{table:queue} is mistakenly rejected, a number set by us. Power records the percentage of cases that the wrong configuration is successfully rejected in our experiment. 
	}
	\label{table:KSreal}
	\begin{tabular}{c | c c | c c | c c}
		\multirow{2}{*}{\diagbox{Method}{Model}}  &
		\multicolumn{2}{c|}{Long waiting $(\rho = 1.1)$} &
		\multicolumn{2}{c|}{Medium waiting $(\rho = 1)$} &
		\multicolumn{2}{c}{Short waiting $(\rho = 0.9)$}  \\
		\cline{2-7}
		& Type-I Error & Power & Type-I Error & Power & Type-I Error & Power  \\
		\hline
		Sum statistic & 0.0140 & \textbf{0.5907} & 0.0100 & \textbf{0.6460} & 0.0120 & \textbf{0.6353}\\
		\hline
		Bonferroni correction & 0.0140 & 0.5887 & 0.0080 & 0.6327 & 0.0100 & 0.6247
	\end{tabular}
\end{table}

Our sum-of-KS statistic is shown to be useful in the model where data across dimensions are highly and complicatedly dependent. We find in Table \ref{table:KSreal} that compared to Bonferroni correction, the new sum statistic has consistently slightly greater statistical power under various close-to-critical traffic intensities in this example.
The basic reason is that the waiting times of different classes are highly dependent, and hence the sum statistic takes advantage from the bound \eqref{eq:main1hom} on quantile aggregation with dependence uncertainty. {In case of near independence (e.g.,  small $\rho$), the Bonferroni correction is known to have a very good power, and it outperforms the sum-of-KS method.} {The drawback of this sum-of-KS statistic may be an overemphasis on the worst-case scenario. There are other methods of multivariate goodness-of-fit tests adapting well to the environment of dependence, such as Peacock's test and its later variants.\footnote{We thank an anonymous referee for pointing out Peacock's test in this application.}}

\subsection{Assembly line crew scheduling}
\label{sec:scheduling}
	
{The quantile aggregation problem is closely related to the problem of assembly line crew scheduling, which we explain in this section.

 A manufacturing facility produces items which require the completion of $n$ tasks in series.  Suppose that there are $m$ assembly lines (rows) and $n$ operations (columns). Each operation has $m$ crews to be assigned to each line.  
  If the $i$-th operation  is put in the $j$-th  assembly line, it costs $z_j^i$ units of time (or another type of resource) to complete the task.
  Therefore, we can use the  $m \times n$ matrix  in \eqref{eq:matrix}, where the number $z_j^i$ at $(i,j)$-position represents the processing time of the $i$-th crew in the $j$-th operation. 	  
	The objective is to appropriately assign crews in each operation to the lines in order to minimize the makespan, that is, the maximum total processing time of all assembly lines.  For $j = 1, \cdots, n$, denote by $\mu_j$ the distribution measure for a discrete uniform distribution on the $j$-th column ($m$ elements).  
	The objective is to find an optimal arrangement of elements in each column to minimize the maximum row sum (the makespan). We denote  the minimal makespan  by $s_{\min}$.  
	 A similar problem appears in many other fields, e.g., in healthcare operations where usage of operating rooms among all types of elective surgeries is to be optimized.   	 
	This problem is essentially the same as the matrix rotation problem explained in Appendix \ref{sec:R2-1}, but   
	 minimizing the maximum row sum instead of maximizing the minimum row sum.
 These two problems can be converted into each other by simply putting a negative sign in front of all values $z_j^i$. 
	
	A few remarks on  the problem of assembly line crew scheduling  and that of quantile aggregation   are needed. 
	Denote by $q_{\min}=\inf_{\nu \in \Lambda(\boldsymbol \mu) } q_1^{-} (\nu)$, the infimum value of the quantile aggregation problem with the same marginals. 
	First, the problem of assembly line crew scheduling 
  is know to be NP-complete; see \cite{H84} and \cite{H15}. 
  Second, 
  since each arrangement induces a dependence structure among random variables with marginal distributions $\mu_1,\dots,\mu_n$, 
  the two problems are closely connected with one difference:
   the assembly line crew scheduling problem only allows for discrete dependence structures taking $m$ different values in $\R^n$ (indeed, it can be seen as a quantile aggregation problem restricted on a discrete probability space of $m$ states), whereas the quantile aggregation also allows for other dependence structures, such as independence. 
   Third,  $q_{\min}$
  is a lower bound  for $s_{\min}$, and RA can produce an upper bound for $s_{\min}$ (RA can be used to compute both the maximum of minimum row sum or minimum of maximum row sum).
    CB in \eqref{eq:main2} in Proposition \ref{prop:qa-1}  produces a lower bound  $B_{\rm conv}$ on $q_{\min}$
 and hence also on $s_{\min}$.  
  Fourth, the difference between $q_{\min}$ and $s_{\min}$ is relatively small.  
  These two values often coincide: as seen from our numerical examples, often RA  coincides with CB, making the inequalities  in 
  $$
  \mbox{RA} \ge s_{\min} \ge q_{\min} \ge B_{\rm conv}
  $$
   all exact. Fifth, one can also use the LP method in Appendix \ref{sec:R2-1}  to compute $q_{\min}$ when the values of $m$ and $n$ are small (see the numerical experiments in Appendix \ref{sec:R2-1}). This yields a lower bound for $s_{\min}$.

Suppose that a matrix of representation is given by the left-hand side of \eqref{eq:assembly_matrix}, with a makespan of $87+60+83=230$. Let $X_i$ be a uniform discrete random variable valued on the $i$-th column of the matrix and $\mu_i$ be the corresponding distribution. For example ($i=1$), $X_1$ takes each value of the first column $\{44,66,67,71,87\}$ with probability $1/5$. For the discrete distributions $\mu_1,\mu_2,\mu_3$, the minimal makespan is at least $\inf_{\nu \in \Lambda(\boldsymbol \mu) } q_1^{-} (\nu)$. According to Proposition \ref{prop:qa-1}, $\sup_{\boldsymbol \beta \in \Delta_n} \sum_{i=1}^n R_{1-\beta_i-\beta_0,\beta_0 }(\mu_i)$ serves as a lower bound for the minimal makespan and the maximizer $\boldsymbol \beta$ provides a hint to the optimal scheduling rule. In this example, the explicit bound is attainable: $\sup_{\boldsymbol \beta \in \Delta_3} \sum_{i=1}^3 R_{1-\beta_i-\beta_0,\beta_0 }(\mu_i)=160$ with the maximizer $\boldsymbol \beta = (0, 0.2,0.6,0.2)$. 
If an arrangement yields a minimal makespan of 160, it must optimal. Indeed, one optimal arrangement is given by the right-hand side of \eqref{eq:assembly_matrix}.

\begin{equation}\label{eq:assembly_matrix}
	\mbox{Convolution bound: }
	\begin{pmatrix}
		44 & 10 & 24\\
		66 & 32 & 37\\
		67 & 48 & 41\\
		71 & 57 & 43\\
		87 & 60 & 83
	\end{pmatrix}
	\quad\quad \Longrightarrow \quad\quad
	\begin{pmatrix}
		87 & 10 & 43\\
		71 & 60 & 24\\
		67 & 48 & 41\\
		44 & 32 & 83\\
		66 & 57 & 37
	\end{pmatrix}
\end{equation}

There are several algorithms in the literature for the problem of assembly line crew scheduling. \cite{CS76} and \cite{H84} naturally adopted greedy-type (largest first) methods. \cite{CY84} improved and developed an algorithm to approximate the problem. \cite{EPR13} proposed the RA in the context of risk management. These numerical algorithms provide an upper bound for the minimal makespan because they always return to a plausible scheduling rule. 

The $k$-partitioning problem is  a similar problem to the one in this section, as it can be solved by finding the minimal maximum row sum of a matrix; see \cite{BJV18}. From a different perspective, our convolution bound provides analytical assistance for this type of problem. As it is a lower bound for the minimal makespan, if it is equal to the RA results, we can guarantee that the scheduling rule is optimal. 
}
	
\end{document}